\providecommand{\U}[1]{\protect\rule{.1in}{.1in}}
\newtheorem{theorem}{Theorem}
\newtheorem{proposition}[theorem]{Proposition}
\newenvironment{proof}[1][Proof]{\noindent\textbf{#1.} }{\ \rule{0.5em}{0.5em}}
\newcommand{\obsim}[1]{\overset{#1}{\boldsymbol\sim}}
\newcommand{\rhogg}{\rho_{\text{\tt GG}}}
\newcommand{\psigg}{\psi_{\text{\tt GG}}}
\newcommand{\1}[1]{\mathds{1}_{#1}}
\newcommand{\revision}[1]{#1}
\newcommand{\qnts}[3]{\parbox{1.6cm}{\centering\scriptsize #1 \\ (#2, #3)}}
\DeclareMathOperator{\GBFRY}{GBFRY}
\DeclareMathOperator{\Gammadist}{Gamma}
\DeclareMathOperator{\Betadist}{Beta}
\DeclareMathOperator{\Poisson}{Poisson}
\DeclareMathOperator{\GG}{GG}
\DeclareMathOperator{\GGP}{GGP}
\DeclareMathOperator{\NGGP}{NGGP}
\DeclareMathOperator{\Pareto}{Pareto}
\DeclareMathOperator{\Unif}{Unif}
\begin{document}

\begin{frontmatter}

% "Title of the Paper"
\title{The Normal-Generalised Gamma-Pareto process: A novel pure-jump L\'evy process with flexible tail and jump-activity properties}
\runtitle{The Normal-Generalised Gamma-Pareto process}

% indicate corresponding author with \corref{}
% \author{\fnms{John} \snm{Smith}\thanksref{t1}\corref{}\ead[label=e1]{smith@foo.com}\ead[label=e2,url]{www.foo.com}}
% \thankstext{t1}{Thanks to somebody}
% \address{line 1\\ line 2\\ \printead{e1}\\ \printead{e2}}

\author{\fnms{Fadhel} \snm{Ayed}\ead[label=e1]{fadhel.ayed@gmail.com}}
\address{Department of Statistics, University of Oxford  \printead{e1}}
\and
\author{\fnms{Juho} \snm{Lee}\ead[label=e2]{juholee@kaist.ac.kr}}
\address{Kim Jaechul Graduate School of AI, KAIST  \printead{e2}}
\and
\author{\fnms{Fran\c cois} \snm{Caron}\ead[label=e3]{caron@stats.ox.ac.uk}}
\address{Department of Statistics, University of Oxford  \printead{e3}}
\runauthor{F. Ayed et al.}

\begin{abstract}
We propose a novel family of self-decomposable L\'evy
processes where one can control separately the tail behavior and the jump activity of the process, via two different parameters. Crucially, we show that one can sample exactly increments of
this process, at any time scale; this allows the implementation of likelihood-free Markov chain Monte Carlo algorithms for (asymptotically) exact posterior inference. We use this novel process in L\'evy-based stochastic volatility models to
predict the returns of stock market data, and show that the proposed class of models leads to superior predictive performances compared to classical alternatives.
\end{abstract}

%\begin{keyword}[class=MSC]
%\kwd[Primary ]{}
%\kwd{}
%\kwd[; secondary ]{}
%\end{keyword}

\begin{keyword}
\kwd{Stochastic Volatility models, Power-law, Regular variation, Ornstein-Uhlenbeck, Bayesian inference, Pseudo-marginal Markov chain Monte Carlo}
%\kwd{}
\end{keyword}

% history:
% \received{\smonth{1} \syear{0000}}

%\tableofcontents

\end{frontmatter}

% Main text entry area
\section{Introduction}

Pure-jump L\'evy processes are a flexible class of stochastic processes that have found a wide range of applications, including scalable Markov chain Monte Carlo~\citep{SimSekli2017}, tracking~\citep{Zhang2018} or the analysis of phylogenetic
traits~\citep{Landis2012,Landis2017}. Finance is probably the main domain of application, as it is widely accepted that asset prices contain jumps, and such models have been used as building blocks of complex dynamic models of asset or option prices~\citep{Madan1998,Barndorff-Nielsen2001,Carr2002,Cont2004,Huang2004,Jing2012}.%\medskip

Let $\left(  X_{t}\right)  _{t\geq0}$ be a real-valued pure-jump L\'evy process. The process is said to have \textit{heavy, power-law tails} if, for any $t,\Delta>0$,
\begin{align}
\Pr(|X_{t+\Delta}-X_t|>x)\obsim{x\to\infty} \Delta C_1  x^{-2\tau}
\end{align}
for some power-law exponent $\tau>0$ and some constant $C_1>0$. That is, for large $x$, the survival function of the increments approximately behaves as a power function.

Many financial time series, such as historical asset returns, exhibit heavy-tails. Empirical evidence seems to indicate that the returns have nonetheless finite variance, hence corresponding to a power-law exponent $\tau>1$ \revision{~\citep[Section 7.3]{Cont2004}}. Starting from the early work of \cite{Mandelbrot1963} with the stable distribution, various infinite-divisible distributions, closely related to L\'evy processes, have been proposed to capture power-law tails. Examples include the student t~\citep{Blattberg1974} or Pareto~\citep{Champagnat2013} distributions; other models with (non power-law) semi-heavy tails such as the normal inverse
Gaussian~\citep{Barndorff-Nielsen1997}, generalized
hyperbolic~\citep{Eberlein1998} and tempered stable
distributions~\citep{Cont1997,Carr2002} have also been proposed; see \citep[Section 7.3]{Cont2004} for
a review. \medskip

Another quantity of interest of the L\'evy process is the Blumenthal-Getoor (BG) index $\beta\in[0,2]$, also known as fractional order. It is defined by
\begin{align}
\beta=\inf \left \{r>0~\mid\ \sum_{i\geq 1} |J_i|^r \1{\theta_i\leq 1}<\infty\right \}
\end{align}
\revision{where $(J_i,\theta_i)_{i\geq 1}$ is the set of jump sizes and jump times}. The BG index measures the level of activity of the jumps: as the value of $\beta$ increases, small jumps tend to become more and more frequent. It is also related to the smoothness properties of the time series~\citep[Section 7.3]{Cont2004} and therefore provides interpretable information on the process and its properties. A number of papers have proposed and analysed (model-free) estimators of this index \citep{AitSahalia2009,Belomestny2010,Woerner2011,Belomestny2013}. Some L\'evy  processes, such as the normal-tempered stable or tempered stable processes, can capture the whole range $[0,2)$ via a tuning parameter; other processes, such as the variance gamma ($\beta=0$), normal-inverse Gaussian, student t or generalised hyperbolic ($\beta=1$), have a fixed BG index.\smallskip

For a pure-jump L\'evy process, a typical way to obtain a given power-law exponent and BG index is to assume the regular variation of the tail intensity of the L\'evy measure $\nu$ characterising the L\'evy process, such that
\begin{align*}
  \int_{\left\vert s\right\vert \geq x}\nu(ds)\obsim{x\to\infty} x^{-2\tau} C_1 ~~\text{ and }~~
  \int_{\left\vert s\right\vert \geq x}\nu(ds)\obsim{x\to 0}x^{-\beta}\ell(1/x)
\end{align*}
for some slowly varying function $\ell$, that is such that $\lim_{t\to\infty}\ell(ct)/\ell(t)= 1$ for all $c>0$. While many L\'evy measures have been proposed in the literature, no tractable model is able to capture both the whole range of power-law exponent  $\tau>0$ and BG index $\beta\in[0,2)$. Normal-tempered stable process for example (which includes as special case the variance-gamma and normal-inverse Gaussian) capture the whole range of the $\beta$ index but have light tails. The class of generalised hyperbolic processes can capture heavy tails, but has a fixed BG index equal to 1. The normal-stable process can capture both heavy tails and different BG indices, but the same parameter controls both properties, and the process has infinite variance.\smallskip

In this paper, we introduce a novel four-parameter pure-jump L\'evy process, called \textit{normal generalised gamma-Pareto} (NGGP) process, with the following properties.
\begin{itemize}[noitemsep,nolistsep]
\item The model can capture power-law, heavy tails with a single parameter $\tau>0$; for $\tau>1$, the process has finite variance;
\item Another parameter $\sigma\in(-\infty,1)$ controls the BG index and therefore the activity of the jumps, with $\beta=\max(0,2\sigma)$. The process is finite-activity for $\sigma<0$; it is infinite-activity for $\sigma\geq 0$; it is of bounded variation if $\sigma<1/2$ and of unbounded variation if $\sigma\in[1/2,1)$;
\item The other two parameters respectively are inverse scale and time scale parameters. \revision{More details on the interpretability of the parameters can be found in Section \ref{sec:GGP}};
\item One can sample exactly from the distribution of the increments, at any time scale; this enables the use of likelihood-free Markov chain Monte Carlo methods for inference;
\item The distribution of the increments of the L\'evy process is self-decomposable.
\end{itemize}
The NGGP process is obtained via Brownian subordination, using the subordinator introduced by~\cite{Ayed2019} for modeling power-law properties of text data. We derive a number of properties of the NGGP process and use the proposed model to predict the stock prices of some financial assets. We consider two L\'evy based stochastic volatility models: an exponential L\'evy model, and an Ornstein-Uhlenbeck based model L\'evy-driven stochastic volatility model~\citep{Barndorff-Nielsen2001}. We show that, compared to other L\'evy processes, the proposed model is both able to capture the heavy-tail and small-jump behaviours.

The article is organised as follows. In Section \ref{sec:model}, we introduce the generalised gamma-Pareto subordinator, its properties, and the associated subordinated Brownian process. In Section~\ref{sec:modelfinance} we describe two L\'evy process based stochastic volatility models, and describe how to perform asymptotically exact posterior inference under our L\'evy process with both models. In Section~\ref{sec:experiments} we present experimental results on the modelling of stock prices and show that our model provides a very good fit to the data and good predictive performances compared to classical alternatives.

\paragraph{Notations. } We use the notation $a_n\obsim{n\to \infty}b_n$ for $\lim_{n\to \infty} a_n/b_n =1$. For a random variable $X$, the notation $X\sim F$ indicates that $X$ has distribution $F$. $\Gammadist(a,b)$ denotes the gamma distribution with shape parameter $a$ and inverse scale parameter $b$. $\Poisson(\lambda)$ denotes the standard Poisson distribution with rate $\lambda$.

\section{The NGGP process}
\label{sec:model}

\subsection{Generalised gamma subordinator}

A generalised gamma (GG) subordinator (almost surely increasing L\'evy process) $(Y_t)_{t\geq 0}$ has L\'evy intensity~\citep{Hougaard1986,Aalen1992,Brix1999}
\begin{equation}
\rhogg(w;\eta,\sigma,c)=\frac{\eta}{\Gamma(1-\sigma)}w^{-1-\sigma}e^{-cw},~~~w>0\label{eq:GGintensity}
\end{equation}
where $\eta>0$ and $\sigma\in(-\infty,1)$, $c>0$ or $c=0,\sigma\in(0,1)$. The subordinator is finite-activity for $\sigma<0$ and infinite-activity if $\sigma\in[0,1)$. It admits as special cases the gamma process ($\sigma=0$), inverse-Gaussian process ($\sigma=1/2$) and stable process ($c=0$). When $\sigma> 0$, the process belongs to the general family of tempered stable processes introduced by~\cite{Rosinski2007} and some authors referred to this process simply as a tempered stable process~\citep{BarndorffNielsen2002,Liang2015}. Ignoring the drift term, $Y_t$ has Laplace transform
\begin{equation}
\mathbb E[e^{-\vartheta Y_t}]=\exp\left ( -t\psigg(\vartheta;\eta,\sigma,c)\right )
\end{equation}
where the Laplace exponent is given by
\begin{equation}
\psigg(\vartheta;\eta,\sigma,c)=\left \{
\begin{array}{ll}
  \frac{\eta }{\sigma}\left [(\vartheta+c)^\sigma -c^\sigma \right ] & \sigma\neq 0\\
  \eta \log(1+\vartheta/c) & \sigma = 0.
\end{array}\right .
\end{equation}
$Y_t$ is said to have the generalised gamma distribution with parameters $(\eta t,\sigma,c)$, and we write $Y_t\sim \GG(t\eta,\sigma,c)$. For $\sigma=0$, $Y_t\sim\Gammadist(\eta t,c)$, while for $\sigma<0$, $Y_t$ is a compound Poisson-gamma distribution with
$$
Y_t\overset{d}{=}\sum_{k=1}^{K_t} Y_{t,k}
$$
where $K_t\sim\Poisson(\eta t \frac{c^\sigma}{-\sigma})$ and $Y_{t,k}\sim \Gammadist(-\sigma,c)$ for $k=1,\ldots,K_t$. For $\sigma>0$, $Y_t$ is an exponentially tilted stable random variable, for which exact samplers exist~\citep{Devroye2009,Hofert2011}.

\subsection{Generalised gamma-Pareto subordinator}
\label{sec:GGP}
\subsubsection{Definition}
Let $Z=\left(  Z_{t}\right)  _{t\geq0}$ be a
subordinator with no drift and L\'evy intensity
% The subordinator is called a\fc{need to find a better name} generalized BFRY (GBFRY) subordinator, or incomplete-gamma tempered stable process, when the L\'evy measure is absolutely continuous with intensity
\begin{equation}
\rho(w)=\frac{\eta }{c^{\tau}\Gamma(1-\sigma)}w^{-1-\tau}\left [\gamma(\tau-\sigma+1,cw)+(cw)^{\tau-\sigma}e^{-cw}\right ],~~~w>0\label{eq:GBFRYintensity}
\end{equation}
where $\eta>0$, $c>0$, $\sigma\in(-\infty,1)$, $\tau>0$ and $\gamma(s,x)=\int_0^x t^{s-1}e^{-t}dt$ is the lower incomplete gamma function. For $\tau>\sigma$, using the identity (43) in Appendix A, the L\'evy intensity takes the simpler form
\begin{equation}
\rho(w)=\frac{\eta (\tau-\sigma)}{c^{\tau}\Gamma(1-\sigma)}w^{-1-\tau}\gamma(\tau-\sigma,cw)\label{eq:GBFRYintensity2}
\end{equation}
which is the form in which \cite{Ayed2019} introduced the process, with a slightly different parameterisation. The L\'evy intensity admits the following representation as a mixture of generalised gamma process
\begin{equation}
\rho(w)=\int_0^\infty u^{-1} \rhogg\left (\frac{w}{u};\frac{\eta (\tau-\sigma)}{\tau c^{\sigma}},\sigma,c\right ) f_U(u)du\label{eq:GBFRYintensity3}
\end{equation}
where $f_U(u)=\tau u^{-1-\tau}\1{u\geq 1}$ is the probability density function of a Pareto random variable $\Pareto(\tau,1)$ with support $[1,\infty)$ and power-law exponent $\tau>0$, and $\rhogg$ is the L\'evy intensity of a GG subordinator, defined in Equation~\eqref{eq:GGintensity}.  We will thereafter refer to the subordinator with intensity~\eqref{eq:GBFRYintensity} as a \textit{Generalised Gamma-Pareto} (GGP) process. For $x>0$, let
\begin{align}
\bar\rho(x)&=\int_x^\infty \rho(w)dw
\end{align}
be the tail L\'evy intensity. For $t>0$, we denote $F_{Z_t}$ the cumulative distribution function of the random variable $Z_t$, with Laplace transform
$$
\mathbb E[e^{-\vartheta Z_t}]=e^{-t\psi(\vartheta)}
$$
where $\psi$ is the Laplace exponent which is given by, noting that $\gamma(\tau-\sigma+1,cw)=w^{\tau-\sigma+1}\int_0^c u^{\tau-\sigma}e^{-wu}du$,
\begin{align}
\psi(\vartheta)&=\int_0^\infty (1-e^{-w\vartheta})\rho(w)dw=\frac{\eta }{c^\tau }\left [\frac{c^\tau}{\tau}-\int_0^c (u+\vartheta)^{\sigma-1}u^{\tau-\sigma}du+\frac{c^{\tau-\sigma}}{\sigma}((\vartheta+c)^\sigma-c^\sigma) \right ].\label{eq:laplaceexponent}
\end{align}
$Z_t$ is said to have $\text{GGP}(t\eta, \sigma,\tau,c)$ distribution.

\subsubsection{Properties}

We derive here a number of properties of the L\'evy process $Z$ and of the GGP distribution.

\paragraph{Positive stable process.} The positive stable process with L\'evy intensity $\frac{\eta}{\Gamma(1-\sigma)}w^{-1-\sigma}$ is obtained as a special case when $c=1$, $\sigma=\tau\in(0,1)$.

\paragraph{Scaled GG process.} Let
$$
Y_t=\sum_{i\geq 1} W_i\1{\theta_i\leq t}
$$
where $\{(W_i,\theta_i)\}_{i\geq 1}$ are the jump sizes and times of a GG subordinator. Then, for $\tau>\sigma$, the representation \eqref{eq:GBFRYintensity3} implies that
$$
Z_t\overset{d}{=}\sum_{i\geq 1} W_i U_i \1{\theta_i\leq t}
$$
where $U_i\sim \Pareto(\tau,1)$. The jump sizes of the GGP subordinator are obtained by scaling the jumps of a GG subordinator with independent Pareto random variables.

\paragraph{Moments and cumulants.} We have $\mathbb E[Z_t^m]<\infty$ for $m<\tau$ and $\mathbb E[Z_t^m]=\infty$ otherwise. For $1\leq m<\tau$, the $m$th cumulant is given by
$$
\kappa_m(Z_t)=t \int_0^\infty w^m\rho(w)dw=\frac{t\eta (\tau-\sigma)\Gamma(m-\sigma)}{c^m(\tau-m)\Gamma(1-\sigma)}.
$$
In particular, for $\tau>1$
\begin{align*}
\mathbb E[Z_t]=\frac{t\eta (\tau-\sigma)}{c(\tau-1)}
\end{align*}
and for $\tau>2$,
\begin{align*}
\text{var} (Z_t)=\frac{t\eta (\tau-\sigma)(1-\sigma)}{c^2(\tau-2)}.
\end{align*}

\paragraph{Inverse scale parameter.} If the intensity $\rho$ is of the form \eqref{eq:GBFRYintensity}  for some parameters $(\eta,\sigma,\tau,c)$, then $\rho(w/c)/c$ is also of the form \eqref{eq:GBFRYintensity} with parameters $(\eta,\sigma,\tau,1)$. $c$ is therefore an inverse scale parameter, and if $Z_t\sim\GGP(t\eta, \sigma, \tau, c)$ then $c Z_t\sim \GGP(t\eta, \sigma, \tau, 1)$.

\paragraph{Activity of the jumps and BG index.} The L\'evy intensity \eqref{eq:GBFRYintensity} satisfies $\int_0^\infty \rho(w)dw=\infty$ if $\sigma\geq 0$ and the subordinator is therefore infinite-activity. If $\sigma<0$, $\int_0^\infty \rho(w)dw<\infty$ and it is finite-activity. More precisely, as noted by \cite{Ayed2019}, the tail L\'evy intensity is regularly varying at 0
\begin{align}
\bar\rho(x)&\obsim{x\to 0} \ell(1/x)x^{-\alpha}\label{eq:RV0GBFRY}
\end{align}
where $\alpha=\max(0,\sigma)$ is the BG index, and the slowly varying function $\ell$ is defined by
\begin{align}
\ell(t)= \left \{\begin{array}{ll}
                             \frac{\eta}{c^\sigma\sigma\Gamma(1-\sigma)} & \sigma>0 \\
                             \eta\log (t) & \sigma=0\\
                             \frac{\eta(\tau-\sigma)}{-\sigma\tau}& \sigma<0.
                           \end{array}\right .
\end{align}

\revision{For any $x>0$ and $t\geq 0$, notice that $\bar\rho(x)=\mathbb E\left [\sum_{i\geq 1} \1{J_i\geq x} \1{\theta_i\in[t,t+1]}\right ]$, where $\{(J_i,\theta_i)\}_{i\geq 1}$ are the jump sizes and times of the GGP subordinator. Hence the BG index $\alpha$ controls the number of jumps above a certain threshold $x>0$ per time unit. It also tunes  a number of asymptotic properties of the Laplace exponent of the L\'evy measure and of the cumulative distribution function and small time distribution of the increments, as described below.}

It follows from the Abelian theorem~\cite[Proposition 17]{Gnedin2007} that the Laplace exponent satisfies
\begin{equation}
\psi(\vartheta)\obsim{\vartheta\to\infty}\Gamma(1-\alpha)\vartheta^\alpha\ell(\vartheta).\label{eq:asymppsi}
\end{equation}
For $\sigma=\alpha\in(0,1)$, the cumulative distribution function $F_{Z_t}$ satisfies~\cite[Theorem 8.2.2. p. 341]{Bingham1989}
\begin{align}
-\log F_{Z_t}(z)\obsim{z\to 0} (1-\alpha)\alpha^{\alpha/(1-\alpha)}\left (\frac{\eta t}{c^{\alpha}\alpha}\right )^{1/(1-\alpha)}z^{-\alpha/(1-\alpha)}.
\end{align}
Additionally, \revision{using \eqref{eq:asymppsi},} for small increments, we have, for all $\vartheta\geq 0$
$$
\mathbb E [e^{-\vartheta\frac{c Z_t}{(\eta t/\alpha)^{1/\alpha}}}]\overset{t\to 0}{\to} e^{-\vartheta^\alpha}
$$
hence $\frac{c Z_t}{(\eta t/\alpha)^{1/\alpha}}$ tends in distribution to a positive stable random variable with parameter $\alpha\in(0,1)$ as $t\to 0$.
For $\sigma<0$, the cdf has a discontinuity at 0 with
$$
F_{Z_t}(0)-F_{Z_t}(0_-)=\Pr(Z_t=0)=e^{-t\frac{\eta(\tau-\sigma)}{-\sigma\tau}}.
$$

\paragraph{Heavy tails and power-law behaviour.} As noted by \cite{Ayed2019}, the tail L\'evy intensity is regularly varying at infinity, with power-law exponent $\tau$. We have
\begin{align}
\bar\rho(x)&\obsim{x\to\infty} \frac{\eta \Gamma(\tau-\sigma+1)}{\tau c^\tau \Gamma(1-\sigma)}x^{-\tau}.\label{eq:RVinfGBFRY}
\end{align}

It follows from \cite[Theorem 8.2.1. page 341]{Bingham1989} that the survival function $1-F_{Z_t}(z)=\Pr(Z_t>z)$ satisfies
\begin{align}
1-F_{Z_t}(z)\obsim{z\to\infty} \frac{\eta t\Gamma(\tau-\sigma+1)}{\tau c^\tau \Gamma(1-\sigma)}z^{-\tau}
\end{align}
and the increments have heavy, power-law tails with exponent $\tau>0$.

\paragraph{Simulation of the increments.}

First note that if $\sigma<0$, the subordinator is a compound Poisson process with jump rate $\frac{\eta (\tau-\sigma)}{-\sigma\tau}$ and jumps being GBFRY distributed (see Section B in the Appendix) with parameters $ (-\sigma, \tau,c)$. We therefore have
$$
Z_t\overset{d}{=}\sum_{j=1}^{K_t} G_{t,j} U_{t,j}
$$
where $K_t\sim\Poisson(t\frac{\eta(\tau-\sigma)}{-\sigma\tau})$, $G_{t,j}\sim\Gammadist(-\sigma,c)$ and $U_{t,j}\sim\Pareto(\tau,1)$, $j=1,\ldots,K_n$ are independent random variables.
Consider now the case $\sigma\geq 0$. The L\'evy measure admits the two-components mixture representation
\begin{align}
\rho(w) &= \frac{\eta c^{-\sigma}}{\Gamma(1-\sigma
)}w^{-1-\sigma}e^{-cw} + \frac{\eta}{c^\tau\Gamma(1-\sigma)}
w^{-1-\tau} \gamma(\tau+1-\sigma, cw)\label{eq:mixture}\\
&= \frac{\eta c^{-\sigma}}{\Gamma(1-\sigma
)}w^{-1-\sigma}e^{-cw} + \frac{\eta(\tau-(\sigma-1))(1-\sigma)}{c^\tau\Gamma(1-(\sigma-1))(\tau-(\sigma -1))}
w^{-1-\tau} \gamma(\tau-(\sigma-1), cw)\nonumber
\end{align}

The first component of the mixture representation~\eqref{eq:mixture} is the L\'evy intensity of a GG subordinator \revision{with parameters $(\eta_1=\eta/c^\sigma,\sigma_1=\sigma,c_1=c)$}. The second component is the intensity of a GGP subordinator \revision{with parameters $(\eta_2=\eta(1-\sigma)/(\tau+1-\sigma),\sigma_2=\sigma-1,\tau_2=\tau,c_2=c)$}; as $\sigma_2=\sigma-1<0$, this subordinator is a finite-activity compound Poisson process, and one can sample its increments as described above. We can
therefore write%
\begin{align}
Z_{t}\overset{d}{=}Z_{t,1}+Z_{t,2}\label{eq:Zsum}
\end{align}
where $Z_{t,1}\sim \GG(\frac{\eta t}{c^\sigma},\sigma, c)$ is an exponentially tilted stable random variable for which exact samplers exist~\citep{Devroye2009,Hofert2011}, and
$$
Z_{t,2}\overset{d}{=}\sum_{j=1}^{\widetilde K_t} \widetilde G_{t,j}\widetilde U_{t,j}
$$
where $\widetilde K_t\sim\Poisson(\frac{\eta t}{\tau})$, $\widetilde G_{t,j}\sim\Gammadist(1-\sigma,c)$ and $\widetilde U_{t,j}\sim\Pareto(\tau,1)$.

\paragraph{Self-decomposability.}

\revision{Self-decomposable distributions, a subclass of infinitely-divisible distibutions, are closely related to stationary processes of Ornstein-Uhlenbeck type. Such models, described in Section \ref{sec:OUmodels}, have been extensively used for the modeling of financial times series, see e.g. \cite{Barndorff-Nielsen2001}.}

\begin{proposition}
The random variable $Z_t\GGP(t\eta, \sigma, \tau, c)$ is self-decomposable if $\sigma\geq 0$. That is, for any $a\in(0,1)$, there is $Z_t^{(a)}$ independent
of $Z_t$ such that
\[
Z_t\overset{d}{=}a Z_t + Z_t^{(a)}.
\]
\end{proposition}
\begin{proof}
Let $k(w) = w \rho(w)$. Consider first that $\tau > \sigma \geq 0$.
From equation~\eqref{eq:GBFRYintensity2}, we have $k(w)\propto w^{-\sigma}\int_0^c u^{\tau-\sigma-1}e^{-wu}du$ which is non-increasing.  Consider that $0 < \tau \leq \sigma$. From Equation~\eqref{eq:GBFRYintensity}, $k$ takes the form $k(w) \propto w^{-\tau}g(w)$ where
$$
g(w)= \gamma(\tau-\sigma+1, cw) + (cw)^{\tau-\sigma}  e^{-cw}.
$$
As $g'(w)  = (\tau - \sigma) c (cw)^{\tau-\sigma-1} e^{-cw}\leq 0$, it follows that $k$ is monotone decreasing.
Hence for all $\sigma\geq 0$, $k$ is monotone decreasing; using Proposition 15.3 p.485 in~\citep{Cont2004}, we conclude that the process is therefore self-decomposable.
\end{proof}

\revision{The self-decomposable random variable $Z_1$ admits the representation~\citep{Jurek2001}}
\begin{align}
Z_1\overset{d}{=}\int_0^\infty e^{-s}d\widetilde Z_s
\end{align}
where $(\widetilde Z_s)_{s\geq 0}$ is termed the background driving L\'evy process corresponding to the self-decomposable random variable $Z_1$ \cite[Section 2.2]{Barndorff-Nielsen2001}. $\widetilde Z_t$ has L\'evy intensity
\begin{align}
\widetilde \rho(w)&=-\rho(w)-w\rho'(w) =\frac{\eta \sigma }{c^{\sigma}\Gamma(1-\sigma)} w^{-1-\sigma}e^{-cw}+\frac{\eta\tau}{c^\tau\Gamma(1-\sigma)}%
w^{-1-\tau}\gamma(\tau-\sigma+1,cw).\label{eq:tilderhoGGP}
\end{align}
Importantly, for $\sigma=0$, the background L\'evy process is a finite-activity GGP process with intensity
\begin{align}
\widetilde \rho(w)=\frac{\eta\tau}{c^\tau}w^{-1-\tau}\gamma(\tau+1,cw).
\end{align}

\paragraph{Interpretability of the parameters.} In summary, each of the four parameters governs a different property of the GGP process.
\begin{itemize}[noitemsep,nolistsep]
\item $\eta>0$ is a time-scaling parameter: if $(Z_t)_{t\geq 0}$ is a GGP process with parameters $(1,\sigma,\tau,c)$, then $(Z_{\eta t})_{t\geq 0}$ is a GGP process with parameters $(\eta,\sigma,\tau,c)$;
\item $c>0$ is an inverse-scale parameter: if $(Z_t)$ is a GGP process with parameters $(\eta,\sigma,\tau,1)$, then $(Z_t/c)$ is a GGP process with parameters $(\eta,\sigma,\tau,c)$;
\item $\sigma\in(-\infty,1)$ tunes the activity of the jumps; the process is finite-activity if $\sigma<0$ and infinite-activity otherwise, with  corresponding BG index $\beta=\max(\sigma,0)$;
\item $\tau$ is the power-law exponent, controlling the tails of the distribution, with $\Pr(Z_t>z)\obsim{z\to\infty} C z^{-\tau}$ for some constant $C$.
\end{itemize}

\subsection{Normal GGP process}
\label{sec:NGGP}

\subsubsection{Definition}

Let $Z=\left(  Z_{t}\right)  _{t\geq0}$ be a GGP
subordinator with no drift and L\'evy intensity given by \eqref{eq:GBFRYintensity}. Let $B=\left(  B_{t}\right)  _{t\geq0}$ be a
Brownian motion on $\mathbb R$, independent from\ $Z$. The normal generalised gamma Pareto (NGGP) L\'evy process, taking values in $\mathbb R$, is defined via Brownian subordination by
\[
X_{t}= B_{Z_{t}}.
\]
For any $t>0$, Let $F_{X_t}$ denote the cumulative distribution function of the random variable $X_t$, with characteristic function~\cite[Section 4.2]{Cont2004}%
\[
\mathbb{E}[e^{i\lambda X_{t}}]=e^{t\Psi(\lambda)}%
\]
where the characteristic exponent is given by~\cite[Theorem 4.2]{Cont2004}
\begin{equation}\label{eq:characfunX}
\Psi(\vartheta)=-\psi(\vartheta^{2}/2)=\int_{\mathbb{R}%
}(e^{i\vartheta x}-1)\nu(x)dx
\end{equation}
where $\psi$ is defined in Equation~\eqref{eq:laplaceexponent} and $\nu$ is a L\'{e}vy intensity on $\mathbb{R}$ defined by
\[
\nu(x)=\int_{0}^{\infty}\frac{1}{\sqrt{2\pi w}}e^{-\frac{x^{2}%
}{2w}}\rho(w)dw.
\]
For any $x>0$, let
$$
\overline\nu(x)=\int_{|s|>x}\nu(s)ds
$$
denote the expected number of jumps of absolute value larger than $x$ in an unit-length interval. We also write $X_t\sim \NGGP(\eta t, \sigma, \tau, c)$.

\subsubsection{Properties}

Most of the properties here follow from the properties of the subordinator. By construction, we have, for all $t$
\begin{equation}
X_t\overset{d}{=}\sqrt{Z_t}\epsilon_t
\end{equation}
where $\epsilon_t\sim\mathcal N(0,1)$. $\sqrt{c}$ is therefore an inverse scale parameter, and if $X_t\sim\NGGP(\eta t,\sigma,\tau,c)$ then $\sqrt{c} X_t\sim\NGGP(\eta t,\sigma,\tau,1)$.

\paragraph{Moments and cumulants.} Let $1\leq  m<2\tau$. For $m$ odd, the $m$'th raw moment and cumulant of $X_t$ satisfy
\begin{align}
\mathbb E[X_t^m]=\kappa_m(X_t)&=0.
\end{align}
For $2\leq  m<2\tau$, $m$ even,
\begin{align*}
\kappa_m(X_t)&=2\int_{0}^\infty\int_0^\infty \frac{x^m}{\sqrt{2\pi w}}e^{-\frac{x^{2}%
}{2w}}\rho(w)dw dx\\
%&=\frac{1}{\sqrt{\pi}}\int_{0}^\infty\int_0^\infty (2wu)^{m/2} u^{-1/2} e^{-u}\rho(w)dw du\\
&=\frac{2^{m/2}}{\sqrt{\pi}}\kappa_{m/2}(Z_t)\Gamma((m+1)/2)\\
&=\frac{t\eta (\tau-\sigma)2^{m/2}\Gamma((m+1)/2)}{c^{m/2}(\tau-m/2)\sqrt{\pi}}
\end{align*}
It follows, for $\tau>1$, the L\'evy process has finite variance with
\begin{align*}
\text{var} (X_t)=\mathbb E[Z_t]=\frac{t\eta (\tau-\sigma)}{c(\tau-1)}<\infty.
\end{align*}
For $\tau>2$, the excess kurtosis is finite and given by
$$
\text{kurt}(X_t)=\frac{\kappa_4(X_t)}{\kappa_2(X_t)^2}=\frac{3\text{var}(Z_t)}{(\mathbb E[Z_t])^2}=\frac{3(\tau-1)^2}{t\eta(\tau-2)(\tau-\sigma)}.
$$

\paragraph{Activity of the jumps and BG index.} The L\'evy process $X$ is infinite-activity if $\sigma\geq0$ and finite-activity otherwise. Using Proposition \ref{prop:tauberian} in Appendix \ref{sec:app:taubersubbrownian}, the regular variation of $\overline\rho$ at 0 in Equation~\eqref{eq:RV0GBFRY} implies the regular variation of $\overline\nu$ at 0
\begin{align}
\bar\nu(x)&\obsim{x\to 0} \frac{2^{\alpha+1}\Gamma(\alpha+1/2)}{\sqrt{\pi}}\ell(1/x^2)x^{-2\alpha}.
\end{align}
The BG index of $X$ is therefore equal to $2\alpha=2\max(0,\sigma)\in[0,2)$. When $\sigma=\alpha>0$, combining \eqref{eq:characfunX} with \eqref{eq:asymppsi}, we obtain the small time limit%\fc{check, should be $X_t$ below}
$$
\mathbb E\left [\exp\left (i\vartheta \frac{X_t \sqrt{2c}}{(\eta t)^{1/(2\alpha)}}\right )\right]\overset{t\to 0}\rightarrow e^{-|\vartheta|^{2\alpha}}
$$
hence $\frac{X_t \sqrt{2c}}{(\eta t)^{1/(2\alpha)}}$ tends in distribution to a symmetric stable distribution with parameter $2\alpha\in(0,2)$ when $t$ tends to 0.

\paragraph{Heavy tails and power-law behaviour. }
Using Proposition \ref{prop:tauberian} in the Appendix, the regular variation of $\overline\rho$ at infinity in Equation~\eqref{eq:RVinfGBFRY} implies the regular variation of $\overline\nu$ at infinity
\begin{align}
\bar\nu(x)&\obsim{x\to \infty} C_1 x^{-2\tau}
\end{align}
where
\begin{align}
C_1=\frac{2^{\tau+1}\Gamma(\tau+1/2)}{\sqrt{\pi}}\frac{\eta \Gamma(\tau-\sigma+1)}{\tau c^\tau \Gamma(1-\sigma)}.
\label{eq:C1}
\end{align}

Additionally, we have
\begin{align}
\Pr(|X_t|>x)\obsim{x\to\infty} C_1 t x^{-2\tau}.
\end{align}
The increments have therefore heavy tails with power-law exponent $2\tau$.

\paragraph{Simulation of the increments.} As
$$X_t \overset{d}{=} \sqrt{Z_t}\epsilon_t$$
one can simulate increments exactly by sampling $Z_t$ from \eqref{eq:Zsum} and $\epsilon_t\sim\mathcal N(0,1)$.

\paragraph{Self-decomposability.} The self-decomposability of $X$ follows from the self-decomposability of the subordinator $Z$~\cite[Theorem 1]{Sato2001}.

\paragraph{Interpretability of the parameters.} In summary, each of the four parameters governs a different property of the NGGP process.
\begin{itemize}[noitemsep,nolistsep]
\item $\eta>0$ is a time-scaling parameter: if $(X_t)_{t\geq 0}$ is a NGGP process with parameters $(1,\sigma,\tau,c)$, then $(X_{\eta t})_{t\geq 0}$ is a NGGP process with parameters $(\eta,\sigma,\tau,c)$;
\item $\sqrt{c}>0$ is an inverse-scale parameter: if $(X_t)$ is a NGGP process with parameters $(\eta,\sigma,\tau,1)$, then $(Z_t/\sqrt{c})$ is a NGGP process with parameters $(\eta,\sigma,\tau,c)$;
\item $\sigma\in(-\infty,1)$ tunes the activity of the jumps; the process is finite-activity if $\sigma<0$ and infinite-activity otherwise, with  corresponding BG index $\beta=\max(2\sigma,0)$;
\item $2\tau$ is the power-law exponent, controlling the tails of the distribution, with $\Pr(X_t>z)\obsim{z\to\infty} tC_1 z^{-2\tau}$ for some constant $C_1$.
\end{itemize}

\subsection{Generalisations}

\subsubsection{More general subordinators}

One could consider more generally a L\'evy intensity of the form
\begin{align}
\rho(w)=\frac{\eta}{c^\sigma \Gamma(1-\sigma)}w^{-1-\sigma}\left (wc\int_0^1 u^{1-\sigma}h(u)e^{-ucw}du + h(1)e^{-cw}\right )\label{eq:generalLevy}
\end{align}
where $h:(0,1]\to(0,\infty)$ is a differentiable function which satisfies
\begin{align}
\int_0^1 h(u)du<\infty~~\text{ and }~~h(u)\obsim{u\to 0} u^{\tau-1}.
\end{align}
For $\tau>\sigma$, we have $u^{1-\sigma}h(u)\to 0$ and \eqref{eq:generalLevy} takes the alternative form
\begin{align}
\rho(w)=\frac{\eta}{c^\sigma \Gamma(1-\sigma)}w^{-1-\sigma}\int_0^1 (u^{1-\sigma}h(u))'e^{-uwc}du.\label{eq:generalLevy2}
\end{align}
The proposed subordinator \eqref{eq:GBFRYintensity} is obtained as a special case when $h(u)= u^{\tau-1}$. \smallskip

The L\'evy process is finite activity for $\sigma<0$ and infinite-activity for $\sigma\geq 0$. Using Karamata's theorem for regularly varying functions, the tail L\'evy intensity $\overline \rho$ of the L\'evy intensity is regularly varying at 0 with BG index $\alpha=\max(0,\sigma)$ and at infinity with tail index $\tau$.
The L\'evy intensity \eqref{eq:generalLevy} takes the form of a sum of a compound Poisson intensity and a generalised gamma intensity. The compound Poisson intensity can be written as
\begin{align*}
\rho_1(w)&=\frac{\eta}{c^{\sigma-1} \Gamma(1-\sigma)}\int_0^1 h(u) u(uw)^{-\sigma} e^{-uw}du\\
&=\frac{\eta(1-\sigma)\int_0^1 h(u)du}{c^{\sigma-1} }\int_0^1 \frac{h(u)}{\int_0^1 h(v)dv} u\rho_\text{GG}(uw;\sigma-1,c)du
\end{align*}
which is a mixture of (finite-activity) generalised gamma processes. It follows that if $Z$ is a subordinator with L\'evy intensity \eqref{eq:generalLevy}, we have
$$Z_t\overset{d}{=}Z_{t,1}+Z_{t,2}$$
where $Z_{t,1}\sim \GG(\frac{t\eta h(1)}{c^\sigma},\sigma,c)$ is an exponentially tilted stable random variable and
$$
Z_{t,2}\overset{d}{=}\sum_{j=1}^{\widetilde K_t} \widetilde G_{t,j}\widetilde U_{t,j}
$$
where $\widetilde K_t\sim\Poisson(\eta t\int_0^1 h(u)du )$, $\widetilde G_{t,j}\sim\Gammadist(1-\sigma,c)$ and $1/\widetilde U_{t,j}$ have probability density function $\frac{h(u)}{\int_0^1 h(v)dv}$.
Finally, if $u^{1-\sigma}h(u)$ is monotone increasing, with $\tau>\sigma$, then Equation~\eqref{eq:generalLevy2} implies that the L\'evy process is a tempered stable process if $\sigma>0$, and a generalised gamma convolution if $\sigma=0$; it is therefore self-decomposable.

%\fc{add discussion}

\subsubsection{Subordinated fractional Brownian motion}

Many of the properties of the GGP process extend to the NGGP due to the self-similarity properties of Brownian motion. A stochastic process $(X_t)_{t\geq 0}$ with $X_0=0$ almost surely is self-similar if there exists an index $H>0$ such that for all $a>0$, $$(X_{at})_{t\geq 0}\overset{d}{=}(a^H X_t)_{t\geq 0}.$$ The Brownian motion is self-similar with index $H=1/2$. Another popular class of self-similar processes are fractional Brownian motions.  A fractional Brownian motion $(F_t)_{t\geq 0}$ is a zero-mean Gaussian process with covariance
$$
R(t,s):=\mathbb E[F_t F_s]=\frac{1}{2}(t^{2H}+s^{2H}-|t-s|^{2H})
$$
Brownian motion is obtained as a special case for $H=1/2$. The process is self-similar with index $H$ and has stationary Gaussian increments. The increments are negatively correlated if $H<1/2$, are independent if $H=1/2$ and positively correlated if $H>1/2$.

 Most of the \revision{properties} described in Section \ref{sec:NGGP} can be similarly derived in the more general case where the Brownian motion is replaced by a fractional Brownian motion. Let $(Z_t)$ be a GGP process with parameters $(\eta,\sigma,\tau,c)$ and $(F_t)$ a fractional Brownian motion with index $H>\max(0,\sigma)/2$. The subordinated fractional Brownian process $$X_t=F_{Z_t}$$ satisfies the self-similar property
$$
X_t\overset{d}{=}Z_t^H \epsilon_t
$$
where $\epsilon_t\sim \mathcal N(0,1)$. It follows that
\begin{align}
\Pr(|X_t|>x)\obsim{x\to\infty} C_1 t x^{-\tau/H}
\end{align}
where the constant $C_1$ is defined in Equation \eqref{eq:C1}. Increments of the process at times $t_1,\ldots,t_n$ can be simulated exactly by first simulating $Z_{t_k}-Z_{t_{k-1}}\overset{d}{=}Z_{t_k-t_{k-1}}$ using \eqref{eq:Zsum}, for $k=1,\ldots,n$ then, conditional on $(Z_{t_1},\ldots,Z_{t_n})$ simulate
$$
X_{1},\ldots,X_n|Z_{t_1},\ldots,Z_{t_n} \sim \mathcal N(0, (R(Z_{t_i},Z_{t_j}))_{1\leq i,j\leq n}).
$$

\subsection{Comparison to other models and discussion}

\paragraph{Comparison to Ayed et al.} The (normalised) GGP process was introduced by \cite{Ayed2019} as a prior for random probability measures with power-law properties, and applied to the modeling of word frequencies. \cite{Ayed2019} introduced the form \eqref{eq:GBFRYintensity2} which is only valid for $\tau>\sigma$. The alternative form \eqref{eq:GBFRYintensity} we introduce here allows to deal with the case $0<\tau\leq\sigma$ as well; in particular, one obtains the stable process as a particular case. \cite{Ayed2019} showed that the tail L\'evy intensity of the GGP is regularly varying at 0 and infinity and deduced the asymptotic behaviour of large and small jumps. Here we derive a number of additional important properties of the process and of the distribution of the increments. We show that it is decomposable, and crucially, that one can sample exactly the increments at any time scale.  \cite{Ayed2019} used the name GBFRY process for the process, due to its form similar to the form of the GBFRY distribution (see Section B in the Appendix); however, as it is customary to give the same name to the process and to the distribution of the increments, which are not GBFRY distributed, we prefer here to use the name generalised gamma-Pareto.

\paragraph{Tempered stable process and generalised gamma convolutions.} If $\tau>\sigma>0$, the subordinator falls in the general class of tempered stable processes, introduced by~\cite{Rosinski2007}. Noting that $\gamma(\tau-\sigma,cw)=w^{\tau-\sigma+1}\int_0^c u^{\tau-\sigma}e^{-wu}du$, the model~\eqref{eq:GBFRYintensity} is indeed of the form $w^{-1-\sigma}q(w)$ where the so-called tempering function $q$ is given by
\begin{equation}
q(w)=\frac{\eta (\tau-\sigma)}{c^{\tau}\Gamma(1-\sigma)}\int_0^c u^{\tau-\sigma-1}e^{-wu}du.\label{eq:qtilting}
\end{equation}
By Bernstein's theorem, the function $q$ is completely monotone.

For $\sigma=0$, the subordinator belongs to the class of generalised gamma convolutions~\citep{Thorin1977,Bondesson1992,James2008}, of the form $w^{-1}\int_0^\infty e^{wu}U(du)$ with Thorin measure $U(du)= \frac{\eta \tau}{c^{\tau}} u^{\tau-1}1_{u\in(0,c)}du$.

The subordinator, for any $\sigma$, also falls into the extended Thorin class described by \cite{Grigelionis2007}, see also the discussion in Section 1.8 in \citep{James2008}.

\paragraph{Comparison to other models.} As mentioned in the introduction, a number of different L\'evy processes have been proposed in the literature. While each process can capture some range of the different tail and jump behaviour, none of them is flexible enough to capture the whole range of tail and jump-activity indices.
Variance gamma, normal inverse Gaussian, exponentially tilted stable and tempered stable process do not capture heavy tails; the normal stable process has infinite variance, and the same parameter tunes the activity of the jumps and the BG index; for generalised hyperbolic process, the BG index is fixed to 1.

A drawback of the proposed model is that, contrary to popular models such as the variance gamma or normal inverse Gaussian processes, the increments $X_t$ do not have an analytical probability density function $f_t$. This is balanced however by the fact that one can sample exactly from the distribution of the increments, and one can therefore resort to likelihood-free methods for posterior inference, as described in the next section. Table~\ref{tab:comparison} summarises the properties of the different models. Note that, as mentioned in~\cite[Section 4.6]{Cont2004}, the generalised hyperbolic and student $t$ are not closed under convolution, and so there is no analytic expression for $f_t$ at any given time $t>0$, which may be an issue if data are sampled irregularly.

Some interesting connections can be drawn with other classes of stochastic processes. If $\tau>\sigma$, due to the mixture form $\eqref{eq:GBFRYintensity3}$, the GGP distribution arises as the marginal distribution of a quantile clock process (see Theorem 3.1 by~\cite{James2011}) with parameters $(R,L)$ where $R$ is a Pareto random variable and $L$ a GG subordinator,

\newcommand{\cell}[2]{\parbox[c]{#1cm}{\centering #2}}

\begin{table}
\setlength{\tabcolsep}{3pt}
\scriptsize
\centering
\caption{Comparison between different L\'evy processes. VG: Variance Gamma; NIG: Normal inverse Gaussian; NGG: Normal generalised gamma; NS: Normal Stable; TS: Tempered Stable; St: Student t; GH: Generalised hyperbolic}
\begin{tabular}{@{}ccccccc@{}}
\toprule
Model & Heavy tails & \cell{2}{Finite 2nd moment} & \cell{1.2}{BG index} & \cell{2}{Tractable $f_t$ for any $t$} &  \cell{2.5}{Exact simulation from $f_t$ for any $t$} \\
\midrule
VG & No & Yes & $\beta=0$ & Yes &  Yes \\
NIG & No & Yes & $\beta=1$ & Yes &  Yes \\
NGG & No & Yes & $\beta\in(0,2)$ & No &  Yes\\
NS & Yes, $2\tau\in(0,2)$ & No & $\beta=2\tau\in(0,2)$ & No &  Yes\\
TS & No & Yes & $\beta \in [0,2)$ & No &  Yes \\
St & Yes, $2\tau\in(0,\infty)$ & Yes if $2\tau>2$ & $\beta=1$ & No &  No \\
GH & Depends & Depends & $\beta=1$ & No  & No\\
NGGP & Yes, $2\tau\in(0,\infty)$ & Yes if $\tau>1$ & $\beta=2\max(0,\sigma)\in[0,2)$ & No &  Yes \\
\bottomrule
\end{tabular}\label{tab:comparison}
\vspace{-2em}
\end{table}

\section{L\'evy-driven stochastic volatility models}
\label{sec:modelfinance}

Let $S_{t}$ denote the price of a financial asset, e.g. a market or a stock
index, at time $t$. Denote $X_t=\log\left  (\frac{S_t}{S_0}\right )$. Observations are obtained at fixed discrete times $t_{0}=0<t_{1}%
<t_{2}<\ldots t_{n},$ and we write, for $k=1,2\ldots$
\begin{equation}
Y_{k}:=\log\frac{S_{t_{k}}}{S_{t_{k-1}}}=X_{t_{k}}-X_{t_{k-1}}%
\end{equation}
the log-returns (or more shortly, called returns). Let $\Delta_k=t_k-t_{k-1}$ be the inter-arrival times between observations. We assume that
\begin{equation}
X_t=\mu_0 t + \mu_1 V^*_t + B_{V^*_t}
\end{equation}
where $\mu_0$ is the drift parameter, $\mu_1$ is the risk premium, $B_t$ is a Brownian motion, independent of the stochastic process $V^*_t$, which can be interpreted as the integrated stochastic volatility. For $k=1,\ldots,n$, let
$$
\overline V_k =V^*_{t_k}-V^*_{t_{k-1}}
$$
be the integrated stochastic volatility over the interval $(t_{k-1},t_k)$. The observations $(Y_1,\ldots,Y_n)$ are conditionally independent given $(\overline V_1,\ldots,\overline V_n)$, with
$$
Y_k\mid \overline V_k\sim \mathcal N(\mu_0\Delta_k+\mu_1\overline V_k,\overline V_k).
$$

We consider two different stochastic processes for the integrated volatility process $(V_t^*)$: a L\'evy process and a Ornstein-Uhlenbeck based model.

\subsection{Exponentiated L\'{e}vy process}

Assume that $(V^*_t)_{t\geq 0}$ is a subordinator with no drift with L\'evy intensity $\rho$ parameterised by a vector $\phi$. The integrated volatilities $(\overline V_1,\ldots,\overline V_n)$ are therefore conditionally independent, with
\begin{align}
\overline V_k\mid \phi&\sim F_{V^*_{\Delta_k}}
\end{align}
where $F_{V^*_t}$ denotes the distribution of $V^*_t$, with Laplace transform
$$
\int_0^\infty e^{-\lambda x} dF_{V^*_t}(x)=e^{-t\int_0^\infty (1-e^{-\lambda w})\rho(w)dw}.
$$
If $V^*_t$ is taken to be the GGP model with intensity \eqref{eq:GBFRYintensity}, then $B_{V^*_t}$ is a NGGP L\'evy process.

\subsection{Ornstein-Uhlenbeck based stochastic volatility model}
\label{sec:OUmodels}
We also consider a non-Gaussian Ornstein-Uhlenbeck based model~\citep{Barndorff-Nielsen2001} with
$$
V^*_t = \int_0^t V_t dt
$$
where the instantaneous stochastic volatility process $(V_{t})_{t\geq0}$ is
stationary and satisfies
\[
V_{t}=V_0 e^{-\lambda t}+ \int_{0}^{t}e^{\lambda(s-t)}dZ_{s}
\]
for some $\lambda>0$ and some background driving L\'evy process $Z_t$ with L\'evy measure $\widetilde \rho$. Additionally, for any $t>0$ the random variable $V_{t}\sim F$ is infinite-divisible and self-decomposable with Laplace transform
\[
\mathbb{E}[e^{-\vartheta V_{t}}]=e^{-\int_{0}^{\infty}(1-e^{-\vartheta w})\rho
(w)dw}
\]
where $\widetilde\rho$ and $\rho$ are related by the
expression
\begin{align*}
\widetilde\rho(w)=-\rho(w)-w\rho^{\prime}(w).
\end{align*}

To define the model, one can either define the mean measure $\widetilde \rho$ of the
subordinator $Z_{t}$, or choose the stationary (self-decomposable) distribution $F$ of $V_{t}$, hence $\rho$. In practice, the second approach is often chosen; examples
include the gamma~\citep{Roberts2004,Griffin2006,Fruehwirth-Schnatter2009},
generalized inverse Gaussian~\citep{Gander2007}, and exponentially tilted stable
distributions~\citep{Gander2007,Andrieu2010} as marginals.
The integrated stochastic volatilities over the interval $(t_{k-1},t_k)$ are obtained, for $k=1,\ldots,n$, by
\begin{align}
\overline{V}_{k}  &  =\int_{t_{k-1}}^{t_{k}}V_{t}dt=\lambda^{-1}\left(
Z_{\lambda t_{k}}-V_{t_{k}}-\left(  Z_{\lambda t_{k-1}}-V_{t_{k-1}}\right)
\right)\label{eq:OUmodel1}
\end{align}
where $\binom{V_{t_{k}}}{Z_{\lambda t_{k}}}$ follows a linear dynamic model with $Z_0=0$, $V_0\sim F$, and for $k=1,\ldots,n$,
\begin{align}
\binom{V_{t_{k}}}{Z_{\lambda t_{k}}}=\binom{e^{-\lambda\Delta_{k}}V_{t_{k-1}}%
}{Z_{\lambda t_{k-1}}}+\varepsilon_{k},~~~\text{ with }\varepsilon_{k}\overset{d}{=}\binom{e^{-\lambda\Delta_{k}%
}\int_{0}^{\Delta_{k}}e^{\lambda t}dZ_{\lambda t}}{\int_{0}^{\Delta_{k}%
}dZ_{\lambda t}}. \label{eq:OUmodel2}
\end{align}

Exact simulation of $(\overline V_1, \ldots,\overline V_n)$ from the model defined by Equations~(\ref{eq:OUmodel1}-\ref{eq:OUmodel2})  requires to be able to simulate from $F$ and simulate the independent random variables $(\varepsilon_1,\ldots,\varepsilon_n)$. We describe two models where exact simulation is possible.

\subsubsection{Model with gamma marginal distribution}

A classical choice~\citep{Barndorff-Nielsen2001,Roberts2004,Griffin2006,Fruehwirth-Schnatter2009} is to take $F=\Gammadist(\eta,c)$ as marginal distribution for $V_t$. This corresponds to
\begin{align}
\rho(w)=\eta w^{-1}e^{-cw},~~~\widetilde \rho(w)=\eta ce^{-cw}.
\end{align}
The background driving L\'evy measure $(Z_t)$ is therefore finite-activity, and one can sample exactly the state noise $\varepsilon_k$ as follows.

\begin{enumerate}[noitemsep,nolistsep]
\item Simulate $N_k\sim\text{Poisson}(\eta\lambda\Delta_{k})$.

\item For $j=1,\ldots,N$, simulate $E_{kj}\sim\text{Exp}(c),\theta_{kj}\sim
U(0,\Delta_{k})$.

\item Set $\varepsilon_{k}= \binom{e^{-\lambda\Delta_{k}}\sum_{j=1}^{N_k} e^{\lambda
\theta_{kj}} E_{kj}}{\sum_{j=1}^{N_k} E_{kj}}$.
\end{enumerate}

\subsubsection{Model with GGP marginal distribution}
\label{subsubsec:complex_nggp}

Let $\tau=\sigma> 0$ or $\tau>\sigma\geq 0$. As shown in Section~\ref{sec:GGP}, $\GGP(\eta,\sigma,\tau,c)$ is self-decomposable. If $V_t$ has marginal $F=\GGP(\eta,\sigma,\tau,c)$ distribution, this corresponds to $\rho$ be defined by Equation~\eqref{eq:GBFRYintensity}, and the L\'evy intensity $\widetilde \rho$ of the background driving L\'evy intensity $(Z_t)$ is given by Equation~\eqref{eq:tilderhoGGP}. For $\sigma>0$, $(Z_t)$ is infinite-activity, and one needs to resort to numerical methods to approximately sample $(\varepsilon_k)$. This could be done by using the representation of the process as a sum of GG process and a finite activity process as described in Section~\ref{sec:GGP}, and using a truncated series representation for simulating the GG process.

We focus here on the case $\sigma=0$, where $\widetilde \rho$ simplifies to
\begin{align}
\widetilde \rho(w)=\frac{\eta\tau}{c^\tau}w^{-1-\tau}\gamma(\tau+1,cw)
\end{align}
with $\int_0^\infty \widetilde \rho(w)dw=\eta$, and the background L\'evy process $(Z_t)$ is therefore finite-activity.
We can therefore simulate $\varepsilon_{k}$ exactly as follows.

\begin{enumerate}[noitemsep,nolistsep]
\item Simulate $N_k\sim\text{Poisson}(\eta\lambda\Delta_{k})$

\item For $j=1,\ldots,N$, simulate $E_{kj}\sim\text{Exp}(c),U_{kj}%
\sim\Pareto(\tau,1),\theta_{kj}\sim U(0,\Delta_{k})$

\item Set $\varepsilon_{k}= \binom{e^{-\lambda\Delta_{k}}\sum_{j=1}^{N_k} e^{\lambda
\theta_{kj}} E_{kj}U_{kj}}{\sum_{j=1}^{N_k} E_{kj}U_{kj}}$
\end{enumerate}

which is similar to the model with gamma marginals, with $E_{kj}U_{kj}$ in place
of $E_{kj}$.

\subsection{Posterior Inference}

Let $\phi$ denote the set of unknown parameters of both models. That is, $\phi$ includes the drift and risk premium parameters $\mu_0$ and $\mu_1$, the parameters of the L\'evy intensity and, for the Ornstein-Uhlenbeck based model, the discounting factor $\lambda>0$. Let $\pi(\phi)$ be some prior density. We aim at approximating the posterior density $\pi(\phi\mid y_1,\ldots,y_n)$. The marginal likelihood takes the form
\begin{equation}
p(y_1,\ldots,y_n\mid \phi)=\int_{\mathbb R_+^n} \left [\prod_{k=1}^n p(y_k\mid \overline v_k,\mu_0,\mu_1) \right ]d\boldsymbol F_n(\overline v_1,\ldots,\overline v_n)\label{eq:maginallikelihood}
\end{equation}
where
$$
\boldsymbol F_n(\overline v_1,\ldots,\overline v_n)=\Pr\left (\overline V_1\leq\overline v_1,\ldots, \overline V_n\leq\overline v_n\mid \phi\right)
$$
denotes the joint cumulative distribution function of the integrated variances. In the exponentiated L\'evy process, we have
$$
\boldsymbol F_n(\overline v_1,\ldots,\overline v_n)=\prod_{k=1}^n F_{V_{\Delta_k}}(\overline v_k).
$$
If $F_{V_t}$ does not admit a tractable probability density function, as it is the case for the proposed GGP model, neither $\boldsymbol F_n$ nor $p(y_1,\ldots,y_n\mid \phi)$ are tractable, preventing the implementation of a Metropolis-Hastings Markov chain Monte Carlo algorithm. The same applies for the Ornstein-Uhlenbeck model.\smallskip

We therefore propose to use a pseudo-marginal Markov chain Monte Carlo (MCMC) algorithm~\citep{Beaumont2003,Andrieu2009}, which only requires to simulate from $\boldsymbol F_n$. The pseudo-marginal algorithm replaces the untractable marginal likelihood \eqref{eq:maginallikelihood} by an unbiased estimator, yet admitting the posterior distribution of interest as invariant distribution. Let $q$ denote some proposal distribution for the parameters. At iteration $i$ of the algorithm, we have
\begin{enumerate}[noitemsep,nolistsep]
\item Sample $\phi^* \mid \phi^{(i-1)}\sim q(\cdot|\phi^{(i-1)})$
\item Compute an unbiased estimate $\widehat p(y_1,\ldots,y_n\mid \phi^* )$
\item With probability $$\min\left (1,\frac{\widehat p(y_1,\ldots,y_n\mid \phi^* )\pi(\phi^*)q(\phi^{(i-1)}\mid \phi^*)}{\widehat p(y_1,\ldots,y_n\mid \phi^{(i-1)} )\pi(\phi^{(i-1)})q(\phi^*|\phi^{(i-1)})}\right )$$
    set $\phi^{(i)}=\phi^*$ and $\widehat p(y_1,\ldots,y_n\mid \phi^{(i)})=\widehat p(y_1,\ldots,y_n\mid \phi^*)$. \\Otherwise, set $\phi^{(i)}=\phi^{(i-1)}$ and $\widehat p(y_1,\ldots,y_n\mid \phi^{(i)})=\widehat p(y_1,\ldots,y_n\mid \phi^{(i-1)})$.
\end{enumerate}

In the exponential L\'evy model, an unbiased estimator can be obtained via Monte Carlo approximation
$$
\widehat p(y_1,\ldots,y_n\mid \phi)=\prod_{k=1}^n  \frac{1}{n_p}\sum_{j=1}^{n_p} p(y_k | \overline v_k^{(j)},\mu^{(j)}_0,\mu^{(j)}_1)
$$
where $\overline v_k^{(j)}\sim F_{V_{\Delta_k}}$ for $k=1,\ldots,n$ and $j=1,\ldots,n_p$, with $n_p$ the number of Monte Carlo samples (called particles thereafter).

In the Ornstein-Uhlenbeck model, the marginal likelihood can be approximated with a (bootstrap) sequential Monte Carlo algorithm, a standard inference technique for this class of models~\citep{Andrieu2010,Jasra2011,Chopin2013}. The resulting algorithm is known in this case as a particle marginal Metropolis-Hastings algorithm~\citep{Andrieu2010}.

\section{Experiments}
\label{sec:experiments}\label{}

\paragraph{Priors.} In all the experiments, the drift $\mu_0$ and the premium $\mu_1$ parameters are set to zero. For the GGP model, we assume that we are in the infinite-activity regime, with $\sigma\geq 0$, and with finite variance, hence $\tau>1$. The priors are set as follows:
\begin{align*}
\eta\sim\Gammadist(0.1,0.1),~c\sim\Gammadist(0.1,0.1),~~(\tau-1)\sim\Gammadist(1,1),~~\sigma\sim\Unif(0,1).
\end{align*}
The more informative prior for $\tau$ reflects the empirical evidence that, for many financial datasets, the power-law exponent ($2\tau$ for the NGGP) is in the range $(2,5)$~\citep[Section 7.1]{Cont2004}. For the Ornstein-Uhlenbeck model, we additionally set $\lambda\sim\Gammadist(0.1,0.1)$.

\paragraph{Software.} To fit both models, we use the Particles Library\footnote{\url{https://github.com/nchopin/particles}}, which allows to perform posterior inference in state-space models using particle MCMC algorithms.
The code and datasets can be found on the anonymous github repository\footnote{\url{https://github.com/OxCSML-BayesNP/NGGP}}

\subsection{Exponentiated L\'evy model}

\subsubsection{Simulated datasets}

\revision{We first focus on the range of parameters that corresponds to processes with infinite activity ($\sigma \geq 0$) and finite variance ($\tau > 1$) as empirical evidence indicate that this is appropriate for financial applications \cite[Section 7.1]{Cont2004}.
We generate a synthetic dataset of $n=5\,000$ unit-spaced observations from the NGGP model, with parameters $\eta = 1$, $\sigma = 0.6$, $\tau = 3$ and $c=1$. The priors are as described at the beginning of this section. We run three independent MCMC chains with $n_{mcmc}=10\,000$ iterations each, of which $5\,000$ iterations are used for burn-in. The number of particles to compute the marginal likelihood estimates is set to $n_p=4\,000$. In Figure~\ref{fig:simulated_post} we report histograms and trace plots of the posterior samples for each of the four parameters. Trace plots suggest the convergence of the MCMC algorithm.}

\revision{
To give a more complete picture, we also investigate whether the parameters can be recovered in the other three quadrants finite/infinite activity and finite/infinite variance. We consider for this priors with support in $(0,\infty)$ for $\tau$ and $(-\infty,1)$ for $\sigma$. The estimated parameters and $95\%$ credible intervals are reported in Table~\ref{tab:four_quadrants}.
}

\begin{figure}[h]
\centering
\includegraphics[width=0.49\linewidth]{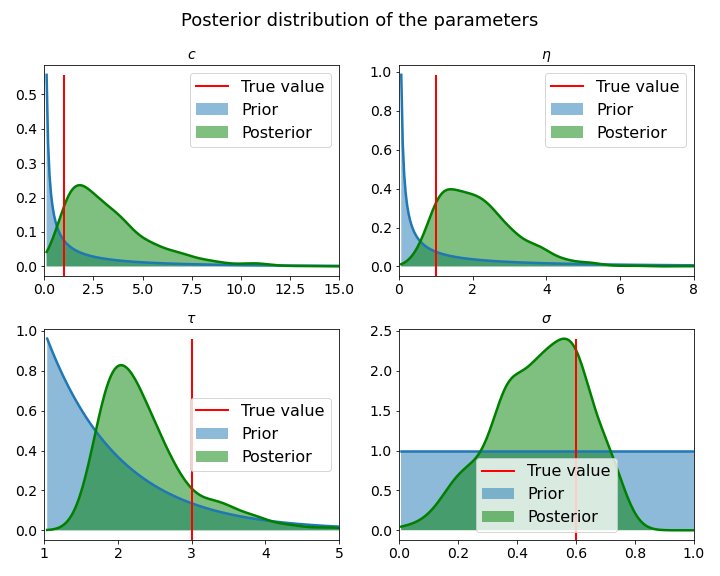}
\includegraphics[width=0.49\linewidth]{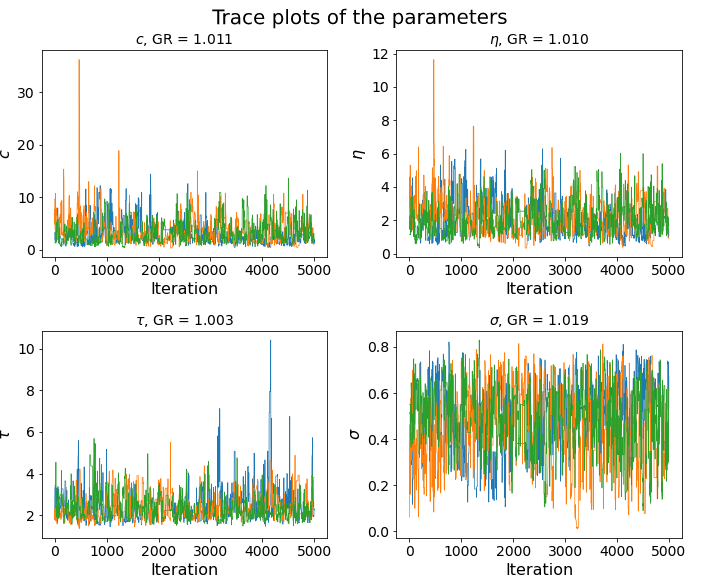}
\caption{Histograms (left) and trace plots (right) of the posterior distributions of the parameters on the simulated data experiment. The blue line represents the value of the parameter used to generate the data. We also report the Gelman-Rubin scores to assess convergence of the chains (the lower the better, the empirical threshold for convergence is $1.1$)}\label{fig:simulated_post}
\end{figure}

\begin{table}[h]
\caption{Recovering the finite-activity/infinite-variance, finite-activity/finite-variance, and infinite-activity/infinite-variance settings. Posterior mean and 95$\%$ credible regions (in parenthesis) are reported.}\label{tab:four_quadrants}
\begin{center}
\begin{tabular}{c| c|c|c|c}
    \hline
    \multirow{2}{*}{Regime} & \multicolumn{2}{c|}{True values} & \multicolumn{2}{c}{Estimated values}  \\
    & $\sigma$ & $\tau$ & $\sigma$ & $\tau$  \\
    \hline
    finite-activity/finite-variance & $-1$  &  $3$ & $-1.00$ $(-1.23, -0.82)$ &  $3.03$ $(2.01, 5.50)$ \\
    finite-activity/infinite-variance & $-1$ & $0.8 $ &  $-1.00$ $(-1.22, -0.80)$ & $0.77$ $(0.71, 0.84)$  \\
    infinite-activity/infinite-variance & $0.6$ & $0.8 $ & $0.68$ $(0.40, 0.87)$ & $0.75$ $(0.64, 0.87)$  \\
    \hline
\end{tabular}
\end{center}
\vspace{-2em}
\end{table}

Further details and additional synthetic experiments, exploring the sensitivity to the choice of the prior, are provided in Appendix D.

%Parameters: gbfry T 1500 eta 4.0 tau 2.5 sigma 0.2 c 2.0

\subsubsection{Real-world datasets.}

\paragraph{Dataset.} We consider a dataset composed of the time-series of the stock prices of six large technology companies: Apple, Amazon, Facebook, Google, Microsoft and Netflix. The data are sampled every minute from the 10th of July 2019 until the 22nd of January 2020, with approximately $50\,000$ time points. We subsample $1\,500$ observations as training data to estimate the parameters of each model, and use the rest of the observations as test data.

\paragraph{Other models.} We compare the fit of the NGGP model to classical L\'evy processes on the first dataset. The models compared are the normal-GG (also known as normal-tempered stable or normal-exponentially tilted stable model), with parameters $\eta$, $\sigma$ and $c$; two special cases of this model, the variance-gamma ($\sigma=0$) and normal-inverse-Gaussian ($\sigma=0.5$); the generalised hyperbolic (GH) model, with four parameters, and the student model, a special case of the GH model  with two parameters. We use vague $\Gammadist(0.1,0.1)$ priors on all parameters, except for the parameter $\sigma$ in the GG model, where a uniform prior on $[0,1]$ is used as for the GGP, and for the degrees of freedoms $\nu_0$ in the student t model, which tunes the power-law tail, where $(\nu_0/2-1)\sim \Gammadist(1,1)$ to reflect the prior assumptions on the tails (as for $\tau$ in the NGGP). Note that we can compare here to the GH and student models as the observations are equally spaced.

\paragraph{Results.} We run 3 MCMC chains in parallel, with $5\,000$ iterations ($2\,500$ burn-in) and $1\,500$ particles. The estimated parameters and 95\% credible intervals for the parameters of the NGGP are reported in Table~\ref{tab:post_simple_tech}. The posterior mean for $\tau$ is around $2$; this corresponds to a power-law exponent for $X_t$ of around $4$ which is in concordance with empirical observations~\citep[Section 7.1]{Cont2004}. One exception is the Amazon stock, where $\tau$ is closer to $1$, indicating a heavier tail. We first compare the models using the Kolmogorov-Smirnov (KS) statistics between the posterior predictive distribution and the empirical distribution of the test data. Results are reported in Table \ref{tab:simple_tech}. The KS statistics is rather insensitive to the tail of the distribution, and the performances are similar for most models considered. To investigate the goodness-of-fit to the tails of the distribution, we compare the ranked empirical squared log-return to their posterior predictive distribution. Both GH, VG and NIG, which have exponentially decaying tails, provide similar results, and we only report the results of the GH. Results for Apple, Amazon, Facebook in Figure~\ref{fig:simple_rank} (results for Google, Microsoft, Netflix are in Appendix \ref{sec:sensitivity}). We can see that the NGGP model successfully captures the behaviour of tails for the different datasets, while the GH fails to provide accurate posterior predictive for some datasets such as Facebook. The NS model, which has the same parameter to capture the jump-activity and the tail behaviour, underestimates the value of the tail exponent, and gives a poor fit.  The student t model tends to provide poor credible intervals, possibly due to the lack of flexibility of this two-parameter model.

\begin{table}
\caption{Posterior mean and 95\% credible interval for the four parameter of the NGGP model on the first dataset.}
\label{tab:post_simple_tech}

\centering
\scriptsize
\setlength{\tabcolsep}{3pt}
\begin{tabular}{@{} c c c c c}
\toprule
Data & $\eta$ & $\sigma$ & $\tau$ & $c$  \\
\midrule
Apple & 0.51, (0.13, 1.21) & 0.44, (0.21, 0.61) &  1.82, (1.11, 3.51) & 1.14, (0.13, 4.26) \\
Amazon & 0.64, (0.19, 1.40) & 0.39, (0.14, 0.58) & 1.18, (1.15, 3.59) & 1.46, (0.22, 4.61) \\
Facebook & 0.85, (0.34, 1.46) & 0.25, (0.04, 0.51) & 1.97, (1.25, 3.90) & 1.86, (0.38, 4.63) \\
Google & 0.18, (0.02, 0.71) & 0.64, (0.50, 0.73) & 1.93, (1.08, 4.16) & 0.34, (0.01, 1.98) \\
Microsoft & 0.27, (0.05, 0.81) & 0.55, (0.39, 0.66) & 1.98, (1.11, 4.35) & 0.56, (0.05, 2.51) \\
Netflix & 0.21, (0.07, 0.45) & 0.54, (0.44, 0.63) &  2.55, (1.25, 5.14) & 0.29, (0.06, 0.90)  \\
\bottomrule
\end{tabular}
\end{table}

\begin{table}
\caption{Kolmogorov-Smirnov distance between the empirical distribution of the test and the posterior predictive for different models on the first tech companies dataset (the smaller the better).}
\label{tab:simple_tech}
\centering
\scriptsize
\setlength{\tabcolsep}{3pt}
\begin{tabular}{@{} c c c c c c c c}
\toprule
Data & NGGP & NGG & GH & NIG & NS & VG & Student \\
\midrule
Apple & 0.0194 & 0.0196 &  0.0194 & 0.0194 & 0.0196 & 0.0218 & 0.0196\\
Amazon & 0.0087 & 0.0087 & 0.0085 & 0.0085 & 0.0159 & 0.0145 & 0.0092 \\
Facebook & 0.0181 & 0.0182 & 0.0182 & 0.0183 & 0.0245 & 0.1413 & 0.0181 \\
Google & 0.0205 & 0.0209 & 0.0197 & 0.0193 & 0.0237 & 0.0848 & 0.0200 \\
Microsoft & 0.0285 & 0.0285 & 0.0287 & 0.0286 & 0.0286 & 0.1567 & 0.0289\\
Netflix & 0.0079 & 0.0080 &  0.0080 & 0.0084 & 0.0098 & 0.0162 & 0.0079 \\
\midrule
Mean & 0.0172 & 0.0173 & 0.0171 & 0.0171 & 0.0204 & 0.0726 & 0.0173 \\
\bottomrule
\end{tabular}

\end{table}

\begin{figure}
\centering
\begin{subfigure}[t]{0.24\textwidth}
\centering
\includegraphics[width=\linewidth]{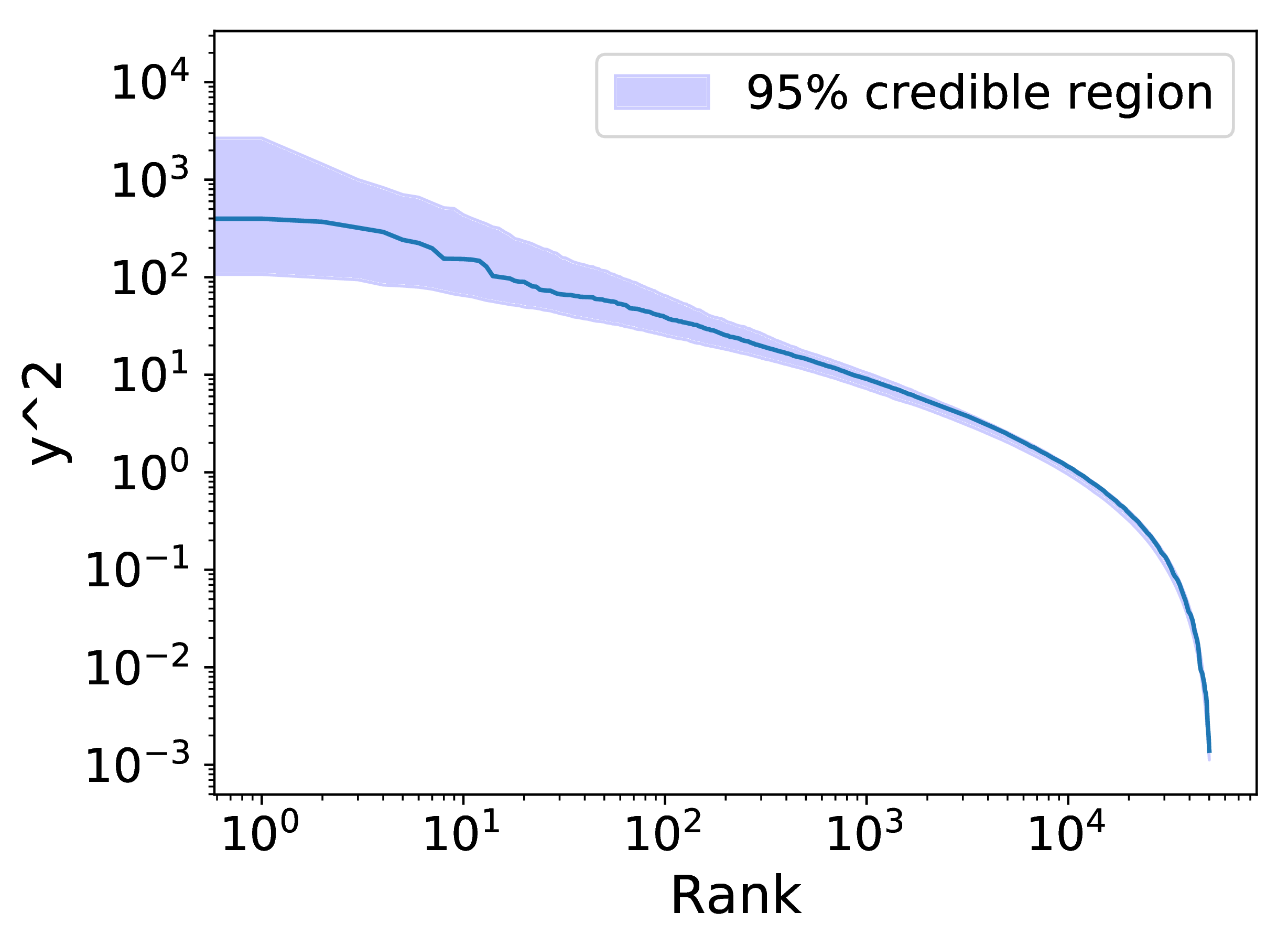}
%\caption[]{NGGP}
\end{subfigure}
\begin{subfigure}[t]{0.24\textwidth}
\centering
\includegraphics[width=\linewidth]{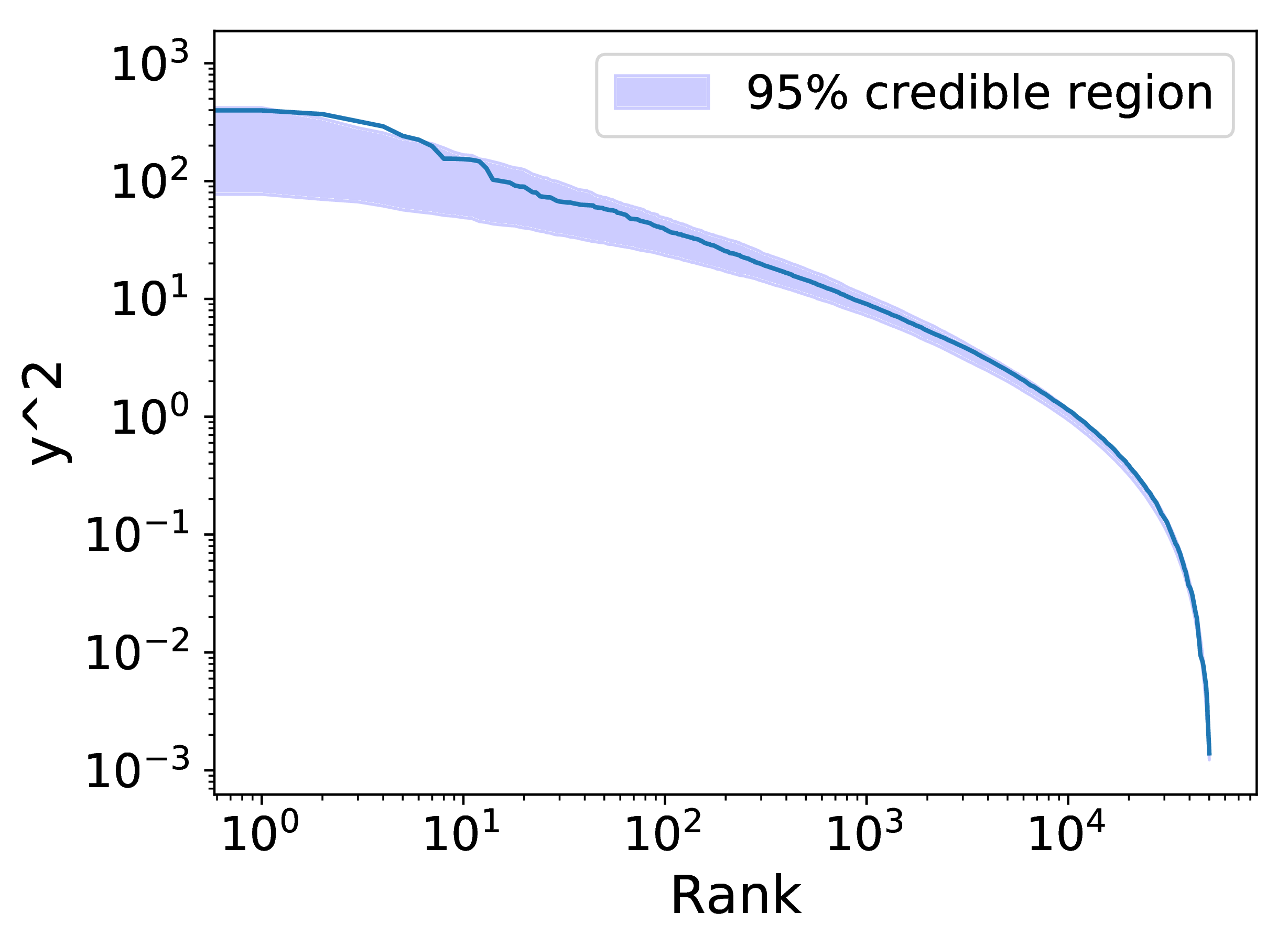}
%\caption{GHD}
\end{subfigure}
\begin{subfigure}[t]{0.24\textwidth}
\centering
\includegraphics[width=\linewidth]{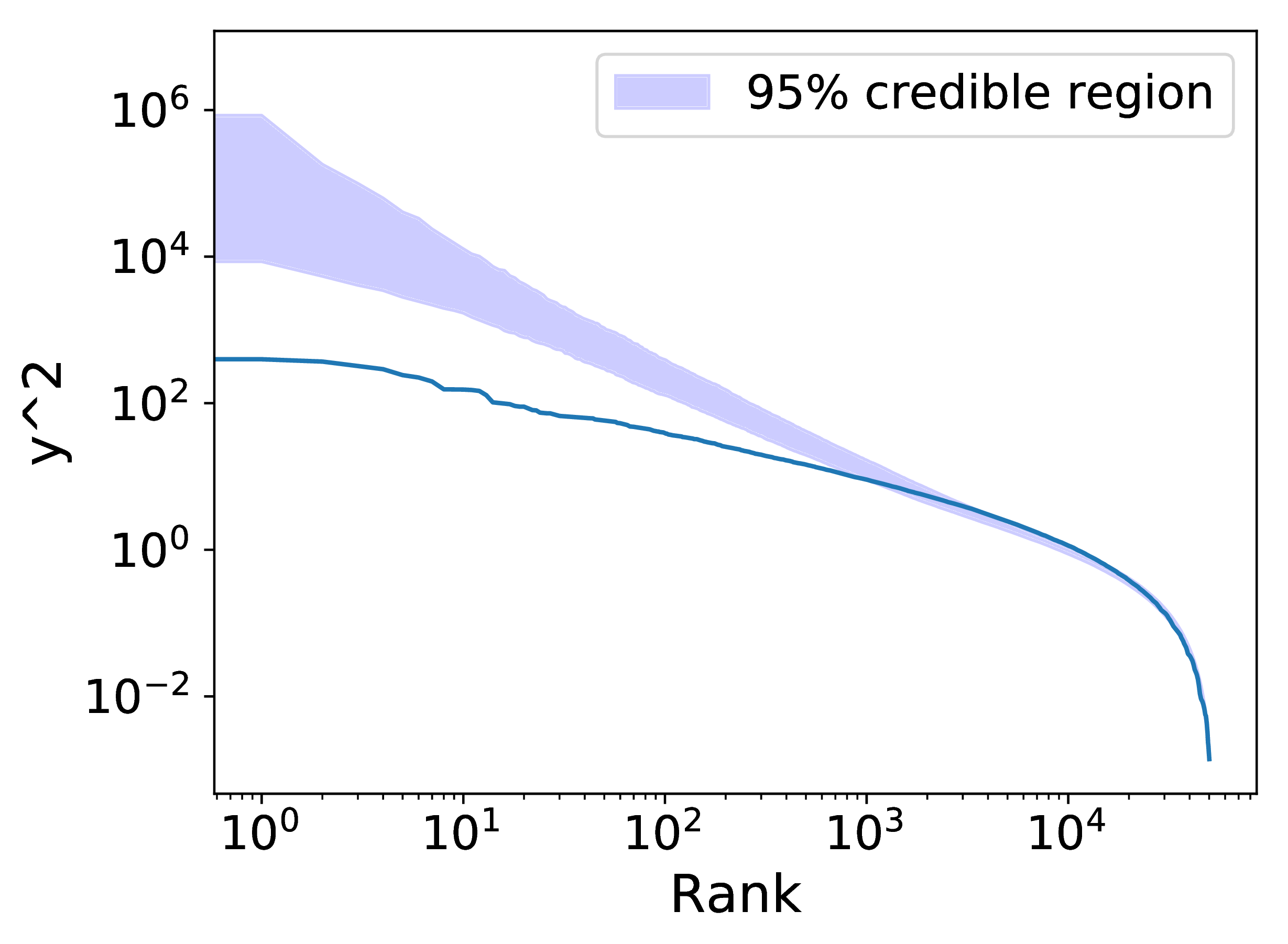}
%\caption{NS}
\end{subfigure}
\begin{subfigure}[t]{0.24\textwidth}
\centering
\includegraphics[width=\linewidth]{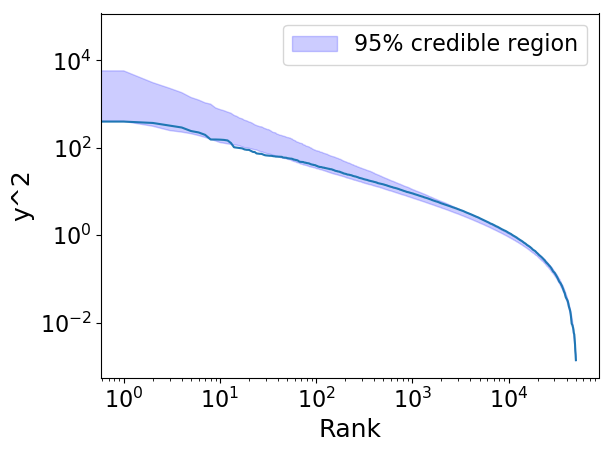}
%\caption{NS}
\end{subfigure}

\begin{subfigure}[t]{0.24\textwidth}
\centering
\includegraphics[width=\linewidth]{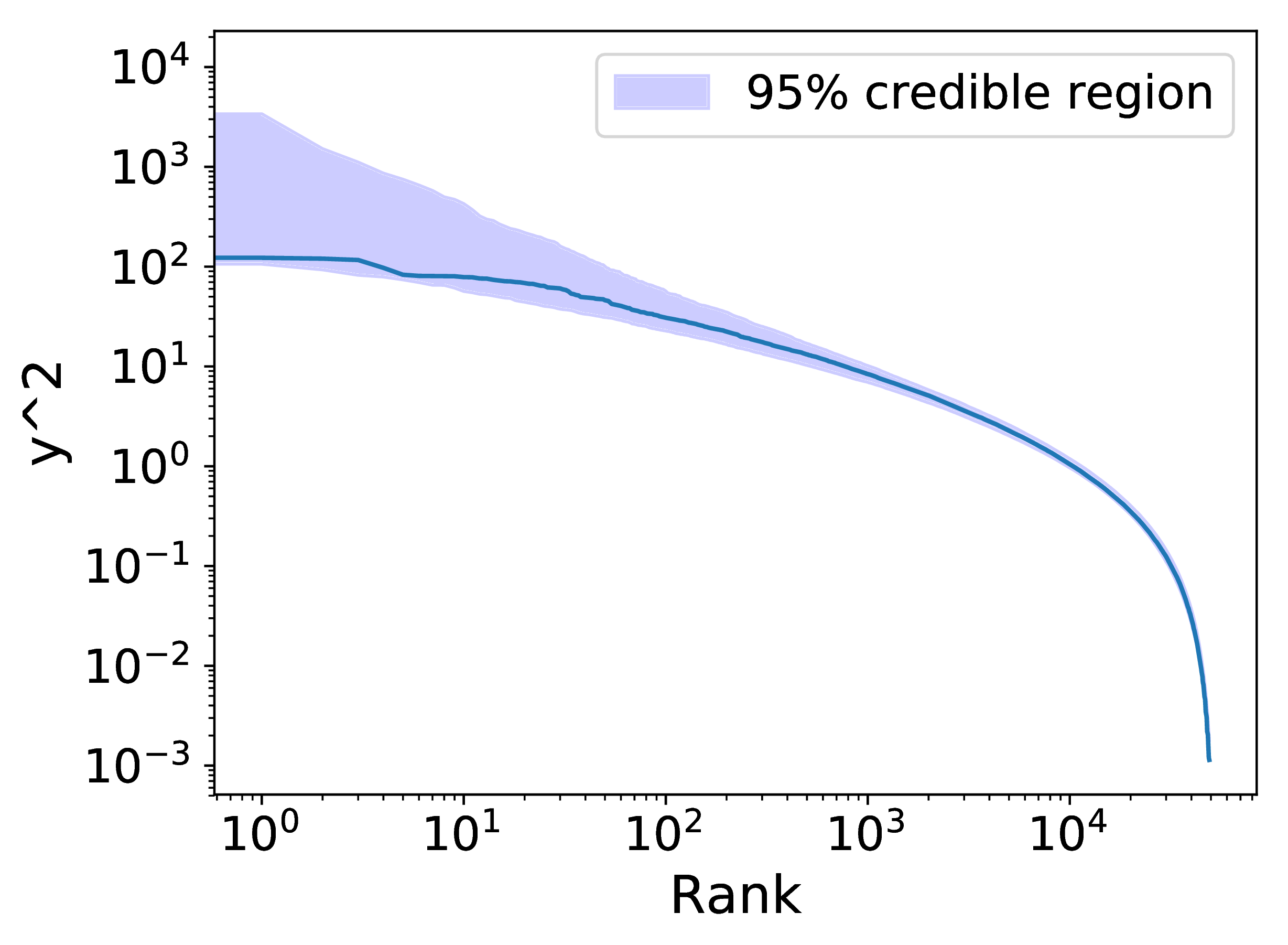}
%\caption{GBFRY}
\end{subfigure}
\begin{subfigure}[t]{0.24\textwidth}
\centering
\includegraphics[width=\linewidth]{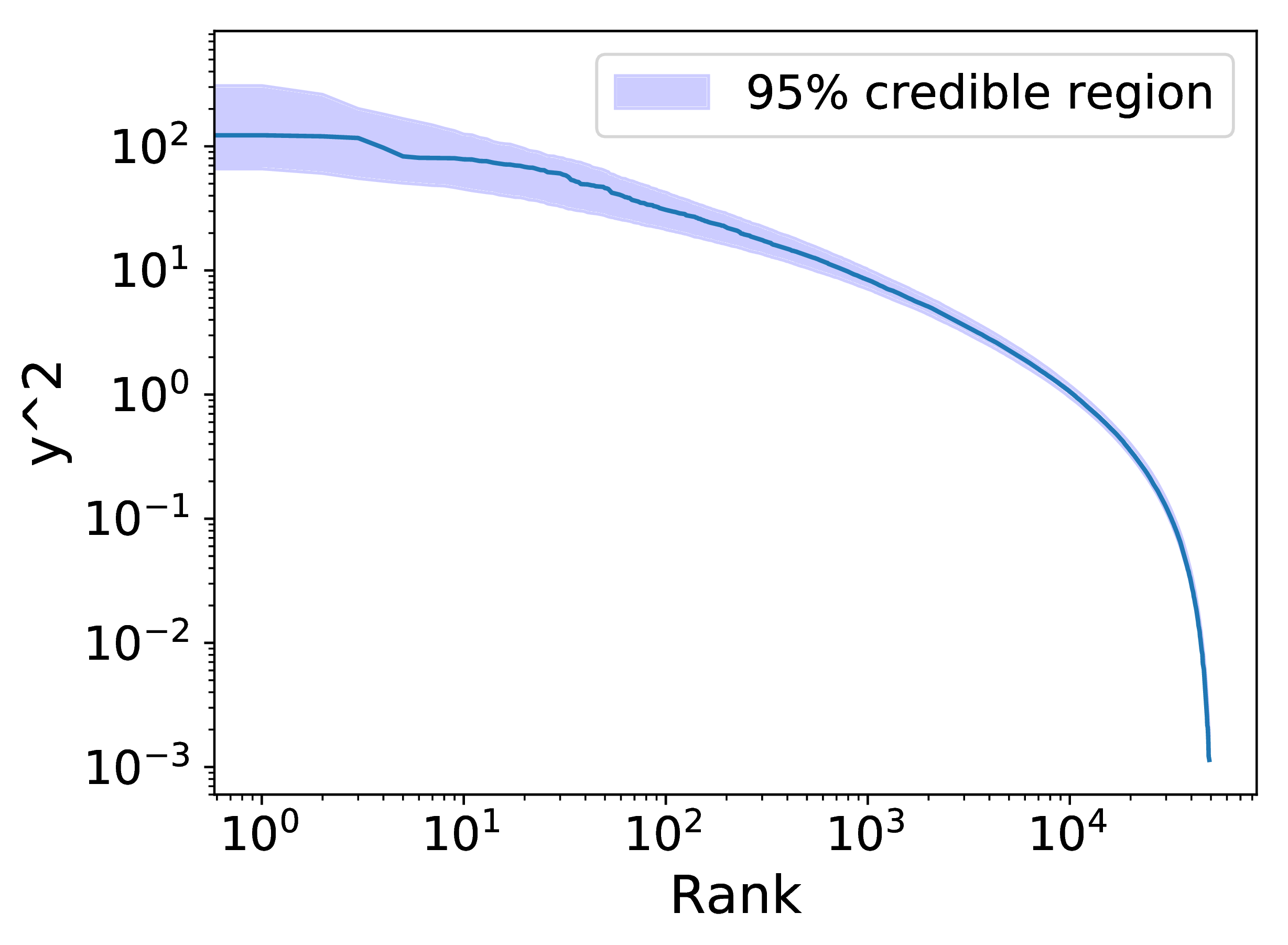}
%\caption{GHD}
\end{subfigure}
\begin{subfigure}[t]{0.24\textwidth}
\centering
\includegraphics[width=\linewidth]{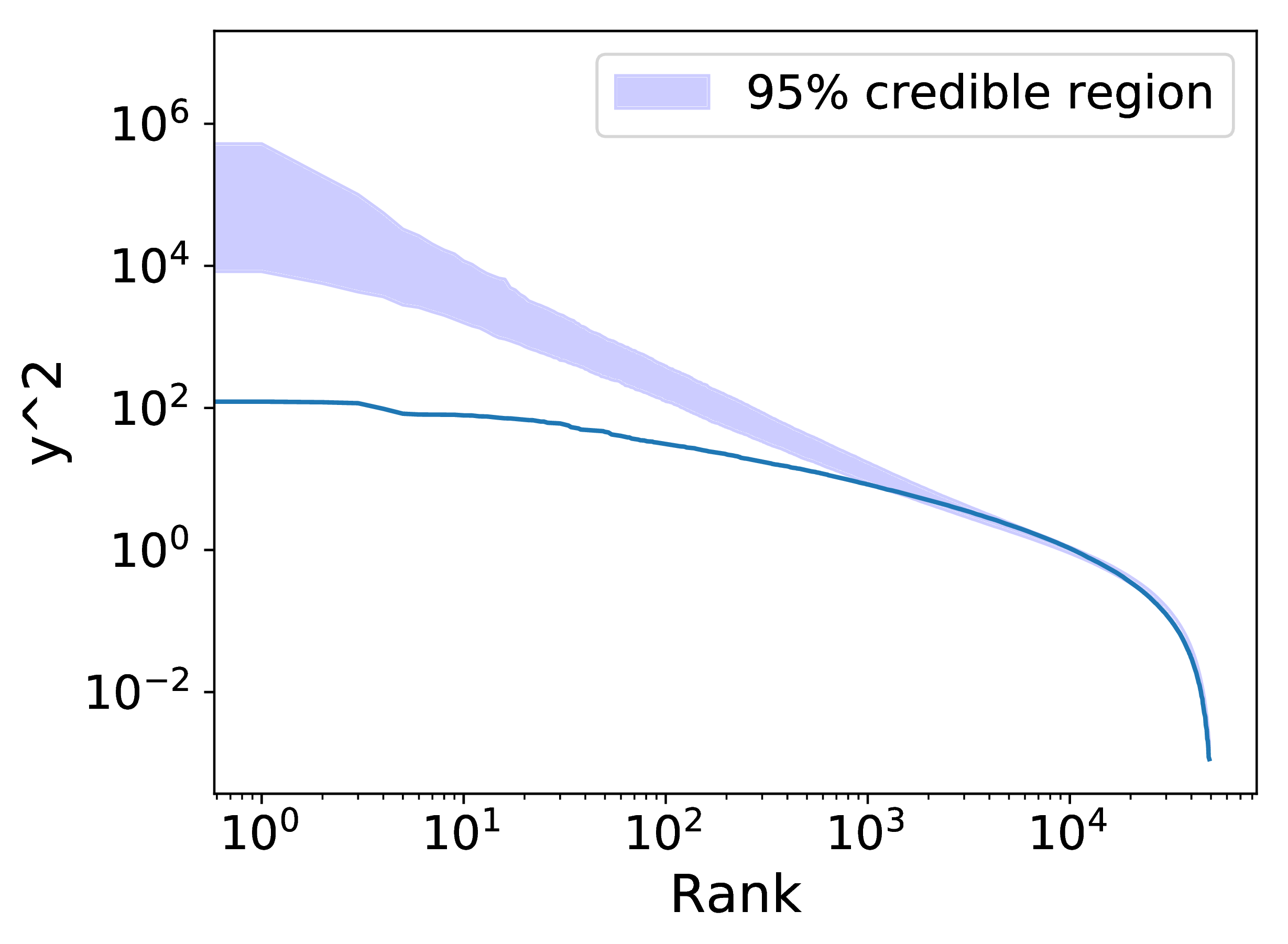}
%\caption{NS}
\end{subfigure}
\begin{subfigure}[t]{0.24\textwidth}
\centering
\includegraphics[width=\linewidth]{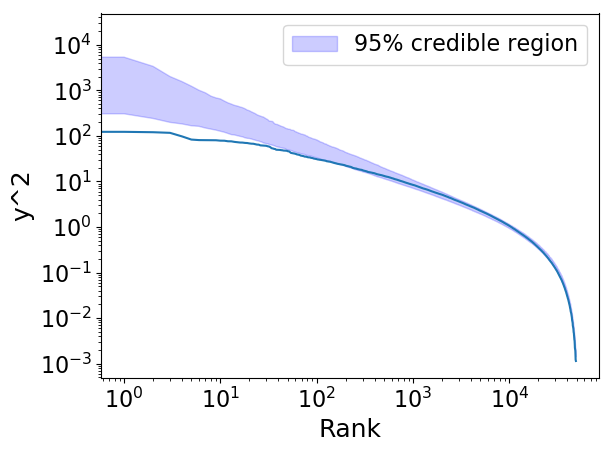}
%\caption{NS}
\end{subfigure}

\begin{subfigure}[t]{0.24\textwidth}
\centering
\includegraphics[width=\linewidth]{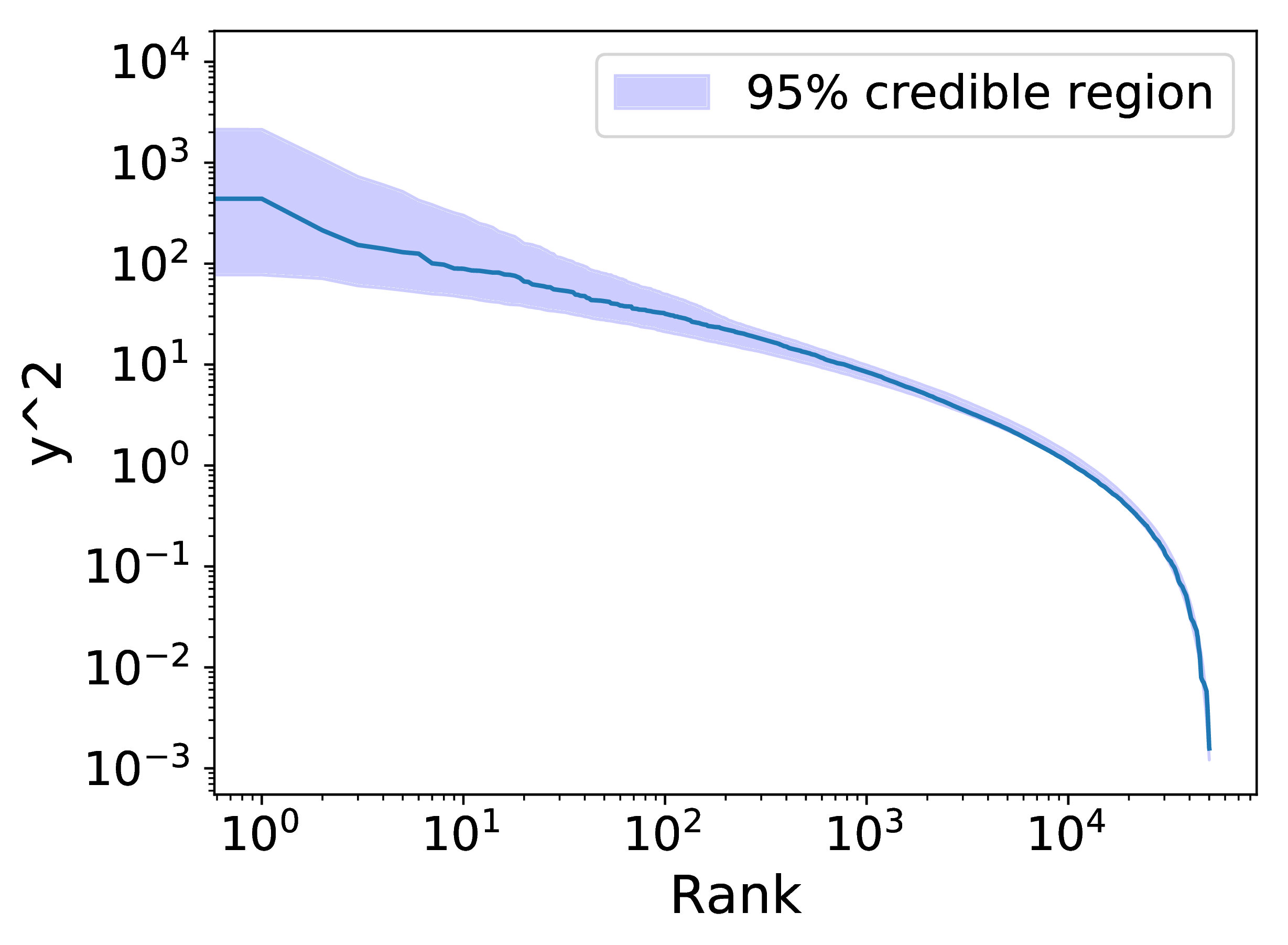}
\caption{NNGP}
\end{subfigure}
\begin{subfigure}[t]{0.24\textwidth}
\centering
\includegraphics[width=\linewidth]{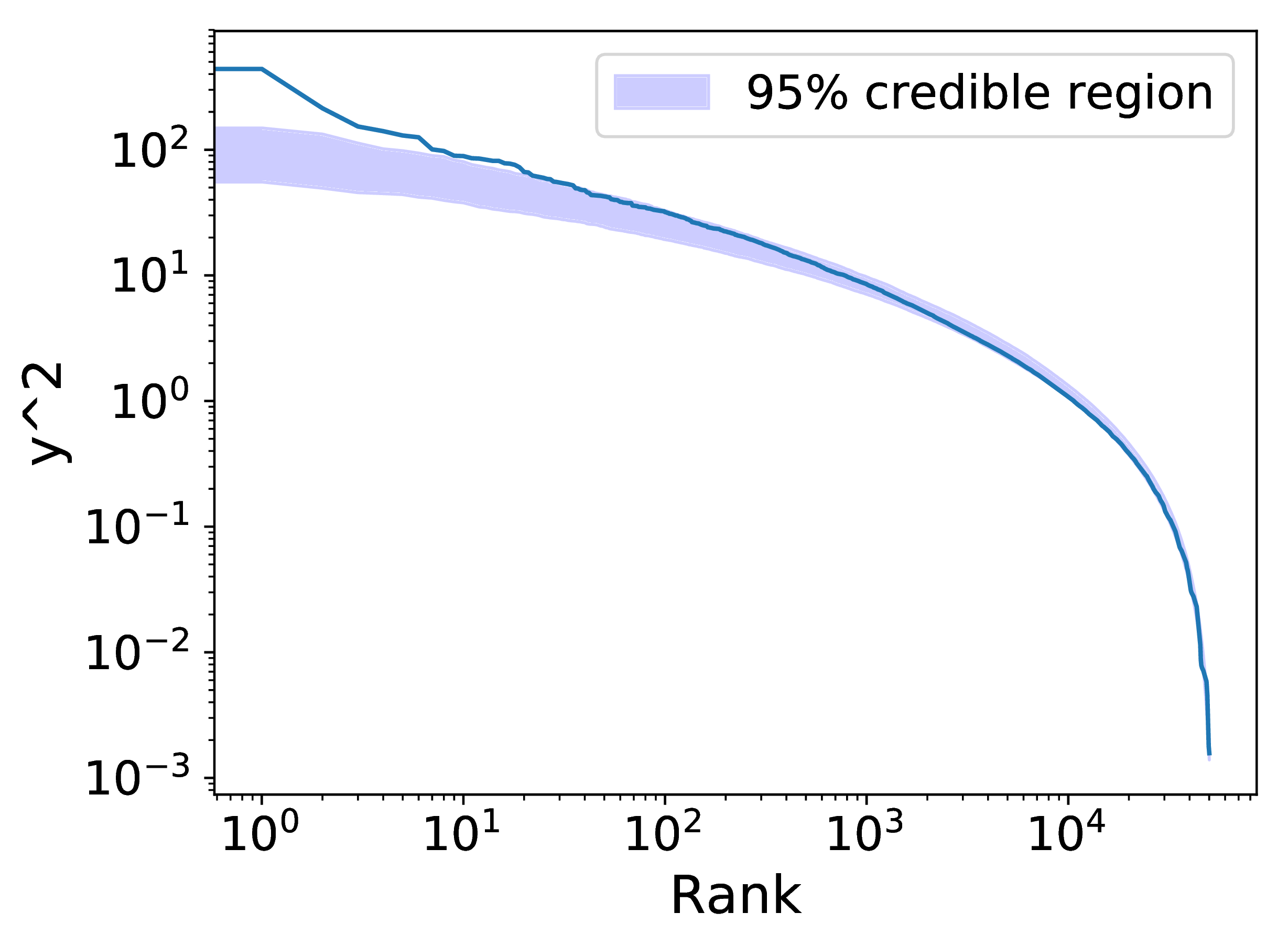}
\caption{GH}
\end{subfigure}
\begin{subfigure}[t]{0.24\textwidth}
\centering
\includegraphics[width=\linewidth]{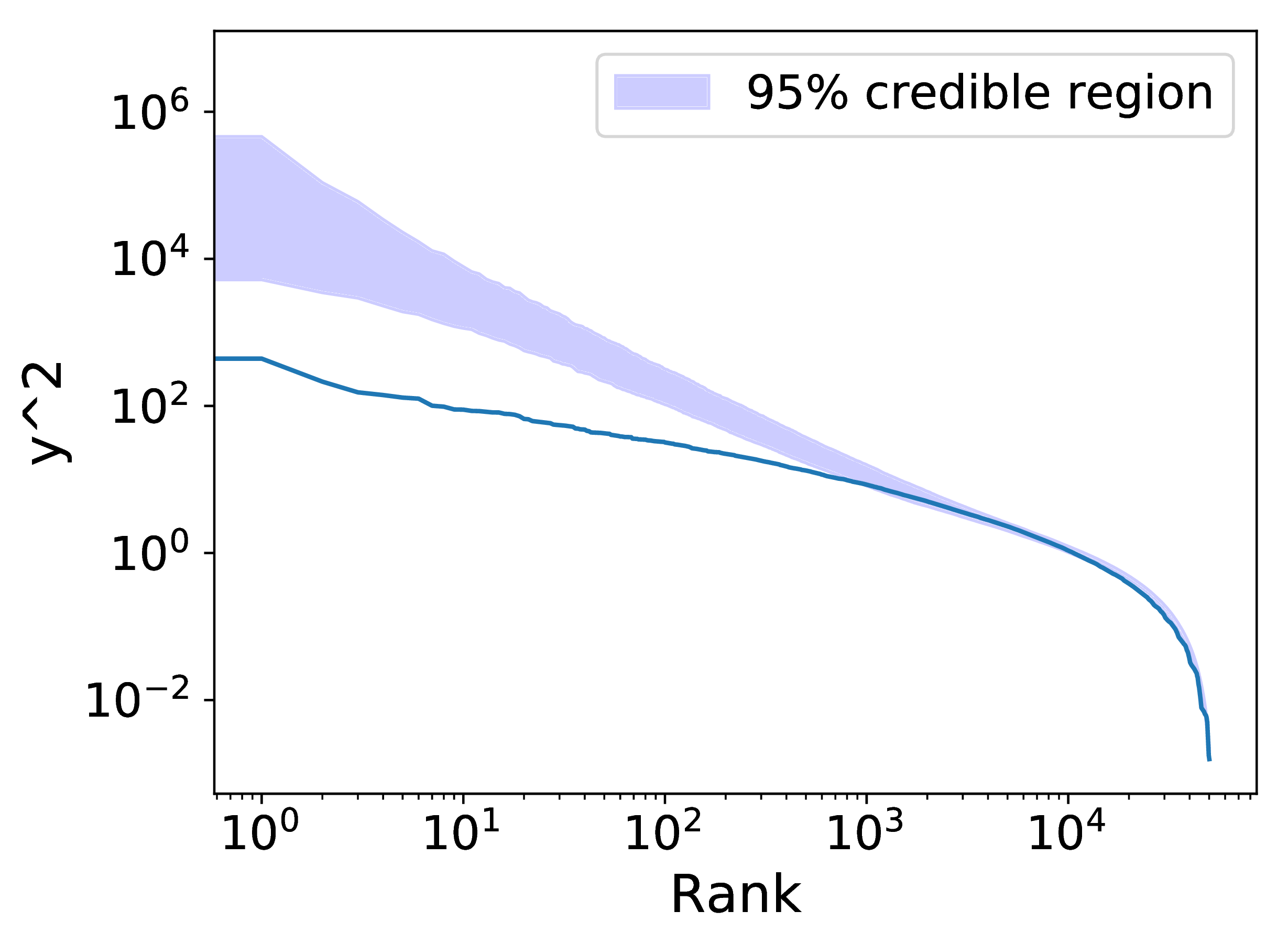}
\caption{NS}
\end{subfigure}
\begin{subfigure}[t]{0.24\textwidth}
\centering
\includegraphics[width=\linewidth]{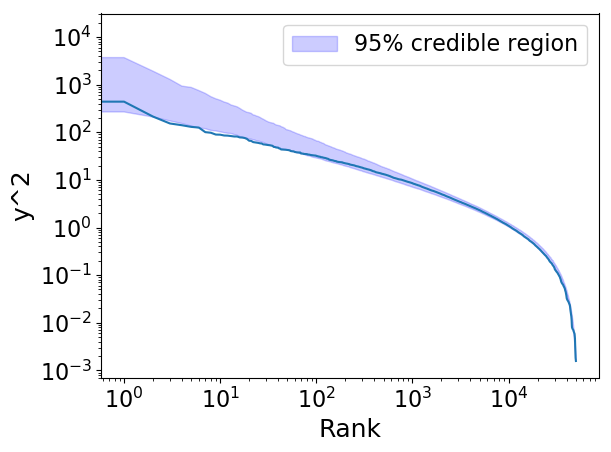}
\caption{Student}
\end{subfigure}
\caption{Ranked squared increments on the tech companies dataset. From top to bottom row: Apple, Amazon, Facebook. The line represents the ranked $y^2$ in the test dataset; the shaded area represent the 95\% credible region. Results are given for the  NGGP, GH,  NS and Student models in this order.}
\label{fig:simple_rank}
\end{figure}

\subsection{Ornstein-Uhlenbeck based model}

We now consider the Ornstein-Uhlenbeck based stochastic volatility model with NGGP marginal with $\sigma=0$.
As discussed in section~\ref{subsubsec:complex_nggp}, in this case, the simulation of the state noise $\varepsilon$ can be done exactly and exact posterior inference is possible.
We compare the model with NGGP to the model with normal-gamma marginal (NG), described in Section \ref{sec:modelfinance}, and demonstrate
that NGGP better captures the heavy-tails with minimal computational overhead compared to NG.

\begin{figure}
\centering
\includegraphics[width=0.24\linewidth]{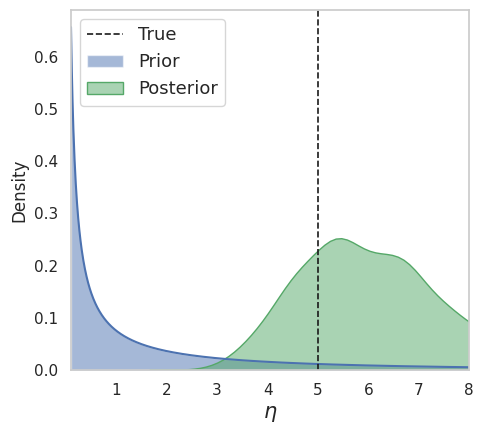}
\includegraphics[width=0.24\linewidth]{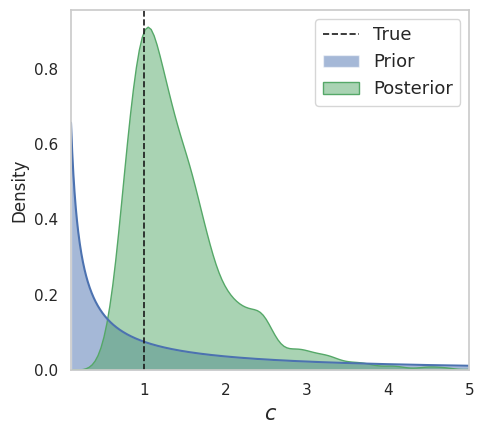}
\includegraphics[width=0.24\linewidth]{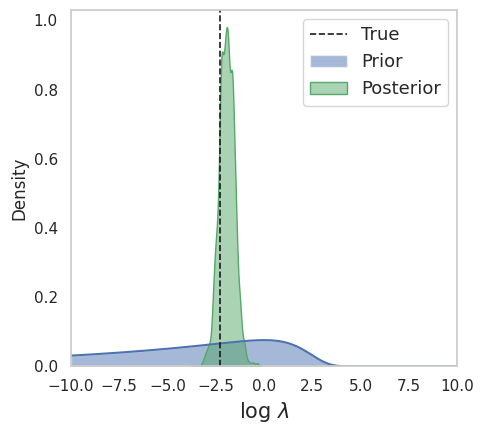}
\includegraphics[width=0.24\linewidth]{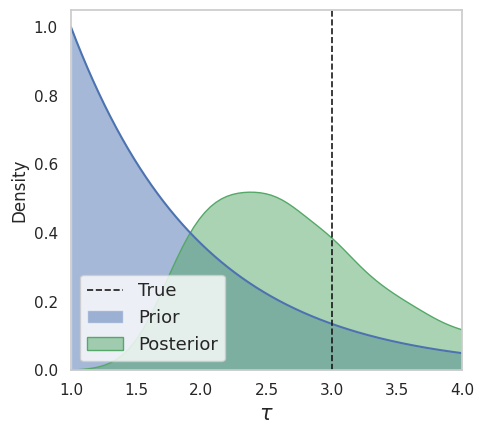}\\\includegraphics[width=0.24\linewidth]{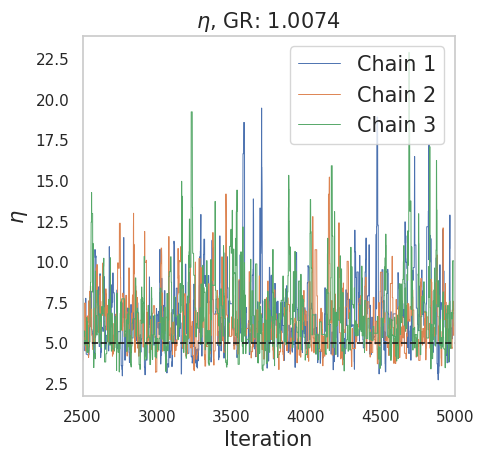}
\includegraphics[width=0.24\linewidth]{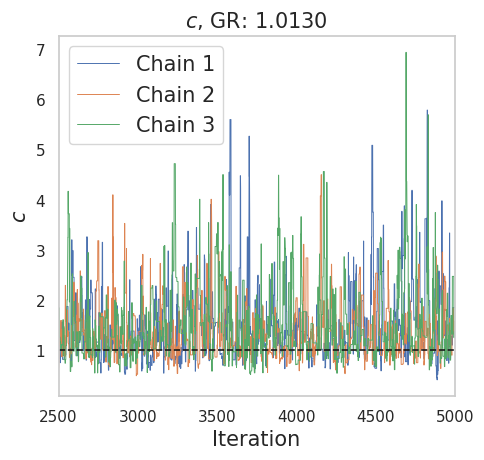}
\includegraphics[width=0.24\linewidth]{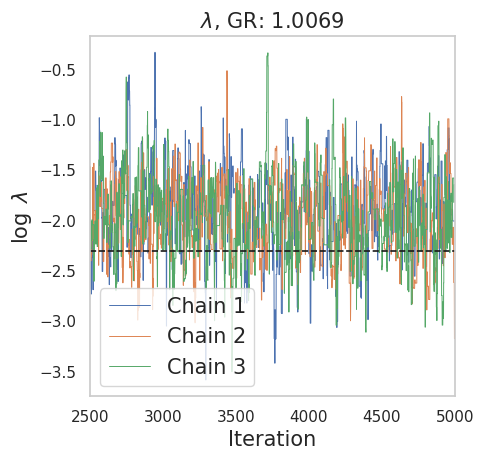}
\includegraphics[width=0.24\linewidth]{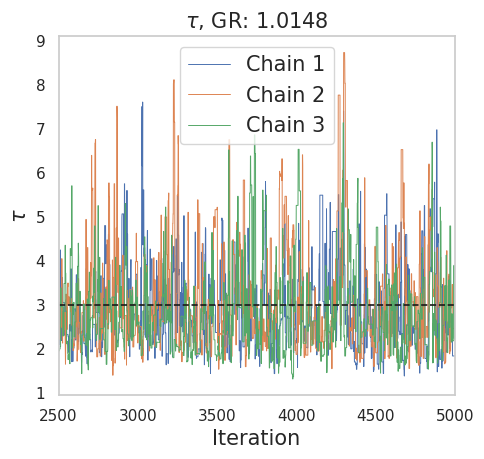}
\caption{Posterior samples of the parameters on simulated data from OU-basd stochastic volatility model with NGGP marginal.}
\label{fig:complex_simulated_3.0}
\vspace{-1em}
\end{figure}

\subsubsection{Simulated dataset}
\label{sec:OUsimulated}

We first demonstrate that our posterior sampler based on particle MCMC could successfully recover the true parameters on a simulated dataset.
We simulate data from the L\'evy-driven stochastic volatility model with NGGP marginal for $n=2\,000$ time-steps and parameters $\eta=5, c=1,\tau=3.0$ and $\lambda=0.1$.
We run three independent particle MCMC chains with $5\,000$ iterations ($2\,500$ burn-in) and $3\,000$ particles.
Figure~\ref{fig:complex_simulated_3.0} shows that our sampler successfully recovers the parameters. Trace plots suggest the convergence of the sampler. Figure~\ref{fig:complex_simulated_states} shows the posterior estimate and credible interval for the integrated volatility, together with the true value. Additional simulation results are reported in Appendix E, for data generated with a smaller value $\tau=1.5$. We also assess the sensitivity to the choice of prior, by reporting the posterior distributions under a different prior distribution for $\tau$.

\begin{figure}
\centering
\includegraphics[width=0.4\linewidth]{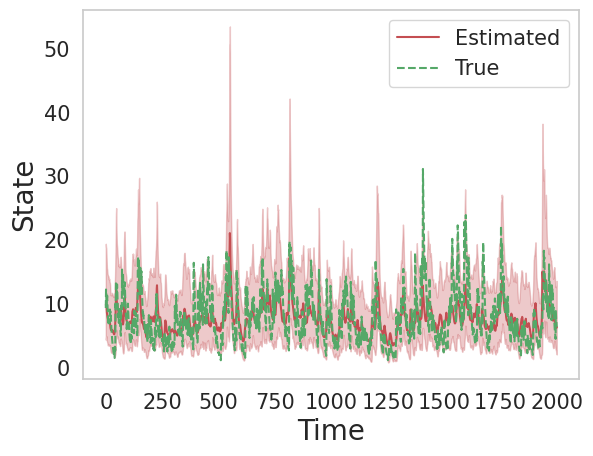}\\
\caption{Posterior mean (solid red line) and 95\% credible intervals (shaded area) of the integrated volatility. True volatility is in dashed green line.}
\label{fig:complex_simulated_states}
\vspace{-1em}
\end{figure}

\subsubsection{Real-world datasets}
\label{sec:ExperimentsOUreal}

\paragraph{Dataset.} The dataset is obtained from the Realized library\footnote{\url{https://realized.oxford-man.ox.ac.uk}}. We collected 14 daily stock data from 05-11-2007 to 07-10-2011 (around the time of subprime mortgage crisis), and fitted the L\'evy driven stochastic
volatility models on daily log return values. The data is accompanied with the estimates of the integrated variances $\overline v_k$ for each day, obtained with an estimator based on higher-frequency data; we use these values as ground-truth of the integrated variance, and note it $\overline v^{\text{true}}_k$.% to be able to compare the two different methods.

\paragraph{Evaluation metrics.}  For $k=1,\ldots,n$, let $\overline V_{k}^{(1)},\ldots,\overline V_{k}^{(n_s)}$ denote the posterior samples of the integrated variance over the $k$th interval, where $n_s$ is the number of MCMC iterations after burn-in. For $k=1,\ldots,n$ and any $\overline v_k\geq 0$, let $\widehat G_k(\overline v_k)=\frac{1}{n_s}\sum_{i=1}^{n_s} \1{\overline V_{k}^{(i)}\leq \overline v_k}$ be the Monte Carlo approximation of the posterior cumulative distribution function of the integrated variance $G_k(\overline v_k)=\Pr(\overline V_k\leq \overline v_k\mid y_1,\ldots,y_n)$. For $k=1,\ldots,n$, denote
$$
\zeta_k =1-\widehat G_k(\overline v^{\text{true}}_k)=\frac{1}{n_s}\sum_{i=1}^{n_s} \1{\overline V_{k}^{(i)}\geq \overline v_k^{\text{true}}}.
$$
In order to assess the goodness-of-fit of the model, we calculate the KS statistics between the empirical distribution of $(\zeta_1,\ldots,\zeta_n)$ and the distribution of a uniform random variable on $[0,1]$.
We also compare the fit of the model for different loss functions. Let $L$ be a loss function. If $L(x, y) = (x-y)^2$ is the $\ell_2$ loss, the Bayes estimator is the posterior mean. In case of $\ell_{1,\alpha}$ loss defined as
\[
L(x, y) =
\left\{ \begin{array}{ll}
x - y & \text{ if } x \geq y \\
\frac{1-\alpha}{\alpha}|x-y| & \text{ if } x < y
\end{array}
\right.,
\]
the Bayes estimator is the $\alpha$-quantile $G^{-1}_k(\alpha)$ of $G_k$. We assess the fit of each model by computing the average loss
$$
\frac{1}{n}\sum_{k=1}^n L(\overline v_k^{\text{true}}, \widehat {\overline v}_k^{L})
$$
where $\widehat {\overline v}_k^{L}$ is the Bayes estimator under the loss $L$.

\begin{figure}
\centering
%\begin{subfigure}{\linewidth}
%\centering
%\includegraphics[width=0.49\linewidth]{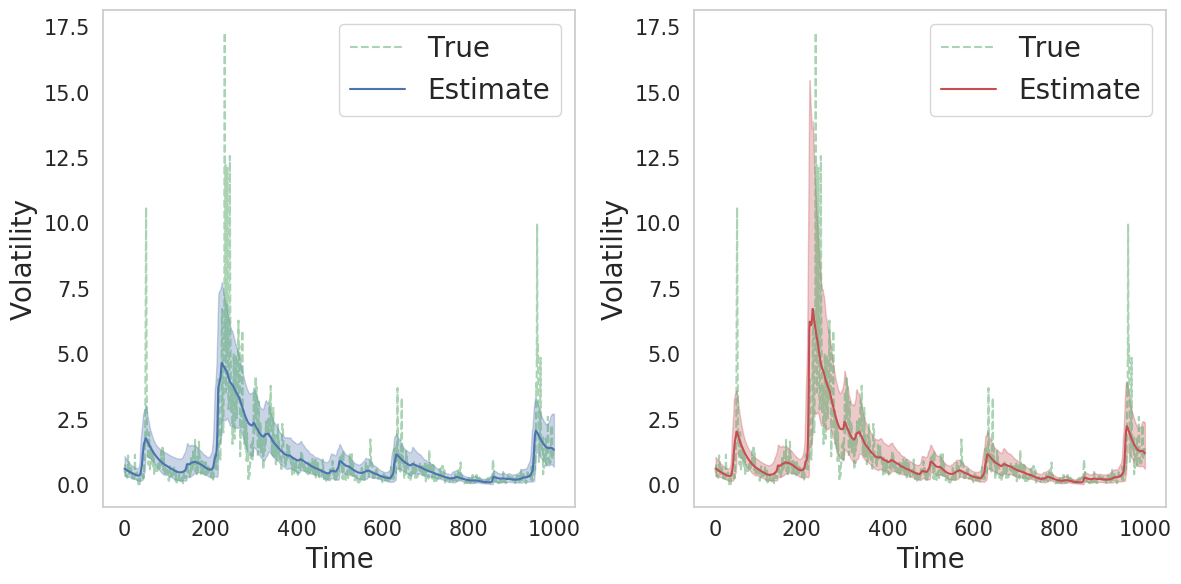}
%\vspace*{-2mm}
%\caption{AEX}
%\end{subfigure}
\begin{subfigure}{\linewidth}
\centering
\includegraphics[width=0.49\linewidth]{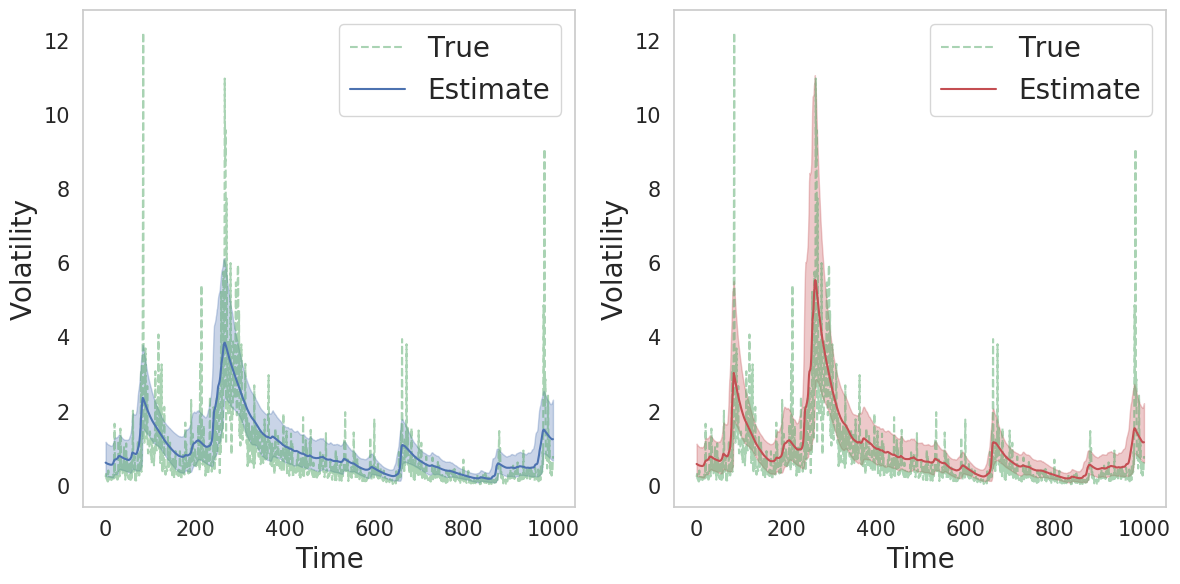}
\vspace*{-2mm}
%\caption{AORD}
\end{subfigure}
%\vspace*{3mm}
\caption{Posterior estimates (solid line) and 95\% credible intervals of the integrated volatility under the NG (left) and NGGP (right) models for the AORD index. The true integrated volatility is represented by a green dashed line.}
\label{fig:oxford_states}
\vspace{-2em}
\end{figure}

\paragraph{Results. } For every stock data, we run three independent chains of particle MCMC with $5\,000$ iterations ($2\,500$ burn-in) and $1\,000$ particles. The estimated parameters and credible intervals are given in Table~\ref{tab:oxford_parameters} (more datasets in Appendix \ref{sec:OUadditional}).  The comparisons between the fits of the two models in terms of KS statistics and empirical loss, for different loss functions, are reported in Table~\ref{tab:complex_oxford}. The model with NGGP marginal outperformed the one with NG marginal for all used metrics. Especially, since NGGP better captures the heavy-tails, the performance gap becomes more significant for the metrics emphasizing the heavy-tail regime ($\ell_{1,0.95}, \ell_{1,0.99}$).
This is well highlighted in Figure~\ref{fig:oxford_states}, which gives the estimated volatility and credible intervals under both models for the AORD stock indices. NGGP in general better captures ``spikes'' in the log-return values while NG often fails to get credible interval with good coverage.

\begin{table}[h]
\scriptsize
\centering
\setlength{\tabcolsep}{0.5pt}
\caption{Posterior mean and 95\% credible intervals of the parameters of the NG and NGGP marginals for the different indices.}
\begin{tabular}{@{}c ccc cccc}
\toprule
& \multicolumn{3}{c}{NG} & \multicolumn{4}{c}{NGGP} \\
\cmidrule(lr){2-4}\cmidrule(lr){5-8}
& $\eta$ & $\lambda$ & $c$ & $\eta$ & $\lambda$ & $c$ & $\tau$ \\
\midrule
AEX & \qnts{1.43}{0.71}{2.50} & \qnts{0.02}{0.01}{0.04} & \qnts{1.41}{0.66}{2.56} & \qnts{2.58}{1.17}{4.47} & \qnts{0.03}{0.01}{0.05} & \qnts{9.34}{3.00}{23.04} & \qnts{1.49}{1.05}{2.37} \\
\addlinespace[0.5em]
AORD & \qnts{1.97}{0.92}{3.42} & \qnts{0.02}{0.01}{0.03} & \qnts{2.06}{0.88}{3.60} &  \qnts{3.62}{1.52}{6.87} & \qnts{0.03}{0.01}{0.04} &  \qnts{11.26}{2.89}{29.19} & \qnts{1.65}{1.09}{3.05} \\
\addlinespace[0.5em]
DJI & \qnts{1.28}{0.68}{2.05} & \qnts{0.02}{0.01}{0.03} & \qnts{1.26}{0.65}{2.10} & \qnts{2.19}{1.10}{3.69} & \qnts{0.03}{0.01}{0.04} & \qnts{8.53}{2.30}{19.95} & \qnts{1.38}{1.02}{2.25} \\
\addlinespace[0.5em]
FTSE  & \qnts{1.34}{0.68}{2.22} & \qnts{0.02}{0.01}{0.02} & \qnts{1.46}{0.73}{2.37} & \qnts{3.19}{1.35}{6.67} & \qnts{0.02}{0.01}{0.04} & \qnts{14.36}{3.55}{38.67} & \qnts{1.36}{1.04}{2.08} \\
\addlinespace[0.5em]
GSPTSE & \qnts{1.23}{0.59}{2.13} & \qnts{0.01}{0.01}{0.02} & \qnts{1.28}{0.57}{2.31} & \qnts{2.31}{0.92}{4.26} & \qnts{0.02}{0.01}{0.03} & \qnts{10.33}{2.42}{25.48} & \qnts{1.42}{1.02}{2.29} \\
\bottomrule
\end{tabular}
\label{tab:oxford_parameters}
\end{table}

\begin{table}[h]
\centering
\scriptsize
\setlength{\tabcolsep}{3pt}
\caption{Comparison of the fit of the NG and NGGP models under different metrics. }
\label{tab:complex_oxford}
\begin{tabular}{@{} c cc cc cc cc cc}
\toprule
& \multicolumn{2}{c}{$\mathrm{KS}((\zeta_k), U(0,1))$}
& \multicolumn{2}{c}{$\ell_2$}
& \multicolumn{2}{c}{$\ell_{1,0.5}$}
& \multicolumn{2}{c}{$\ell_{1,0.95}$}
& \multicolumn{2}{c}{$\ell_{1,0.99}$}\\
Data & {\scriptsize NG} & {\scriptsize NGGP}
& {\scriptsize NG} & {\scriptsize NGGP}
& {\scriptsize NG} & {\scriptsize NGGP}
& {\scriptsize NG} & {\scriptsize NGGP}
& {\scriptsize NG} & {\scriptsize NGGP}\\
\midrule
AEX & 0.237 & 0.200 & 0.920 & 0.950 & 0.398 & 0.396 & 0.127 & 0.113 & 0.074 & 0.053 \\
AORD & 0.531 & 0.511 & 0.688 & 0.680 & 0.465 & 0.453 & 0.117 & 0.101 & 0.065 & 0.047 \\
DJI & 0.371 & 0.341 & 1.859 & 1.677 & 0.476 & 0.456 & 0.162 & 0.138 & 0.107 & 0.077\\
FTSE & 0.269 & 0.241 & 2.590 & 2.510 & 0.479 & 0.480 & 0.186 & 0.159 & 0.134 & 0.092\\
GSPTSE & 0.450 & 0.432 & 13.993 & 13.656 & 0.615 & 0.612 & 0.277 & 0.259 & 0.226 & 0.198 \\
HSI & 0.351 & 0.335 & 1.081 & 1.056 & 0.426 & 0.411 & 0.133 & 0.126 & 0.086 & 0.072 \\
IBEX & 0.264 & 0.245 & 0.824 & 0.788 & 0.422 & 0.413 & 0.122 & 0.106 & 0.067 & 0.045 \\
IXIC & 0.433 & 0.421 & 0.849 & 0.898 & 0.412 & 0.418 & 0.102 & 0.092 & 0.055 & 0.042\\
KS11 & 0.237 & 0.177 & 1.740 & 1.207 & 0.405 & 0.358 & 0.151 & 0.095 & 0.096 & 0.039 \\
MXX & 0.580 & 0.553 & 1.030 & 1.142 & 0.523 & 0.518 & 0.095 & 0.087 & 0.047 & 0.036\\
N225 & 0.283 & 0.230 & 0.674 & 0.807 & 0.360 & 0.362 & 0.087 & 0.073 & 0.045 & 0.030 \\
RUT & 0.570 & 0.392 & 1.217 & 1.192 & 0.449 & 0.454 & 0.079 & 0.072 & 0.031 & 0.025 \\
SPX & 0.388 & 0.337 & 1.318 & 1.317 & 0.440 & 0.435 & 0.131 & 0.115 & 0.082 & 0.058 \\
SSMI & 0.276 & 0.259 & 1.420 & 1.292 & 0.438 & 0.437 & 0.160 & 0.135 & 0.098 & 0.062 \\
\midrule
Mean & 0.374 & \bf 0.344 & 2.157 & \bf 2.084 & 0.451 & \bf 0.443 & 0.138 & \bf 0.119 & 0.087 & \bf 0.063 \\
\bottomrule
\end{tabular}
\vspace{-1em}
\end{table}

\subsubsection{Comparison to ARMA-GARCH}
\label{sec:armagarch}

We compare the OU-based stochastic volatility model to ARMA$(1,1)$-GARCH$(1,1)$ on data from the Oxford Realized library. For this, we used 2,000 log-return values from the Oxford dataset, and split them into 1,100 training time steps and 900 test time steps. We fit the OU-based model with NG and NGGP marginals with our sampler. We report in Table~\ref{tab:ll} the marginal log-likelihood on the training data for the three models (more datasets are left to Appendix \ref{sec:OUadditionalsimus}). NGGP outperforms NG but performs slightly worse than ARMA-GARCH. We further compare the models in terms of prediction. We generate one-step predictions for the 900 test time-steps, compute Value at Risk (VaR) values for each time step prediction, and counted the fraction of actual test data less than or equal to the negative of VaR values. This is to see whether estimated VaR values fit the test data well by checking
\[
\Pr(-Y \leq \mathrm{VaR}_\alpha) = 1-\alpha \iff \Pr(Y > - \mathrm{VaR}_\alpha) = 1-\alpha \iff \Pr(Y \leq -\mathrm{VaR}_\alpha) = \alpha,
\]
so that the fraction of test data less than or equal to $-\mathrm{VaR}_\alpha$ being closer to $\alpha$ means better prediction. For all models, we collected posterior samples, conducted prediction for each posterior samples using corresponding model parameters and state estimates, computed empirical CDFs using those samples, and computed VaR values. As summarised in Table~\ref{tab:var} (more datasets are left to Appendix \ref{sec:OUadditionalsimus}), the results of the different methods are comparable.

\begin{table}
\centering
\caption{Comparison of the marginal log-likelihood values of OU-NG, OU-NGGP and ARMA-GARCH models on data from the Realized-library.}
\scriptsize
\begin{tabular}{@{}cccc@{}}
\toprule
Data & NG & NGGP & ARMA-GARCH \\
\midrule
  AEX   & -1466.021 & -1465.811  & -1459.768 \\
  AORD  & -1495.441  & -1494.830 & -1492.148 \\
  DJI   & -1442.606 & -1438.013 & -1424.026 \\
  FTSE  & -1448.833 & -1445.721 & -1437.881 \\
 GSPTSE & -1455.563 & -1454.125 & -1445.307 \\
\bottomrule
\end{tabular}
\label{tab:ll}
\end{table}

\begin{table}
\centering
\caption{The results of VaR test on predicted sequence. Values closer to $\alpha$ mean better VaR prediction.
}
\scriptsize
\setlength{\tabcolsep}{4pt}
\begin{tabular}{@{}ccccccc@{}}
\toprule
 & NG & NGGP & \pbox{4cm}{ARMA\\-GARCH} & NG & NGGP & \pbox{3cm}{ARMA\\-GARCH} \\
\cmidrule(lr){2-4}\cmidrule(lr){5-7}
Data & \multicolumn{3}{c}{$\alpha=0.95$} & \multicolumn{3}{c}{$\alpha=0.99$} \\
\midrule
  AEX   & 0.962 & 0.958 &   0.969   & 0.993 & 0.993 & 0.992 \\
  AORD  & 0.962 & 0.960  &    0.970  & 0.996 & 0.993 & 0.991 \\
  DJI   & 0.956 & 0.959 &   0.968  & 0.996 & 0.994 & 0.989 \\
  FTSE  & 0.956 & 0.953 &    0.950  & 0.994 & 0.993 & 0.983 \\
 GSPTSE & 0.967 & 0.968 &   0.978  & 0.997 & 1.000 & 0.996\\
\bottomrule
\end{tabular}
\label{tab:var}
\end{table}

 \begin{acks}[Acknowledgments]
 The authors thank Cian Naik, Lancelot James and Matthias Winkel for useful feedback on an earlier version of this article.
 \end{acks}

%%%%%%%%%%%%%%%%%%%%%%%%%%%%%%%%%%%%%%%%%%%%%%
%% Supplementary Material, if any, should   %%
%% be provided in {supplement} environment  %%
%% with title and short description.        %%
%%%%%%%%%%%%%%%%%%%%%%%%%%%%%%%%%%%%%%%%%%%%%%
%\begin{supplement}

%\stitle{???}
%\sdescription{???.}
%\end{supplement}

%% if your bibliography is in bibtex format, uncomment commands:
\bibliographystyle{imsart-nameyear} % Style BST file (imsart-number.bst or imsart-nameyear.bst)
\bibliography{levy}       % Bibliography file (usually '*.bib')

%% or include bibliography directly:
%\begin{thebibliography}{}
% \bibitem{b1}
% \end{thebibliography}
\newpage

\appendix

\section{Useful identities}

\revision{We state here some useful identities on the incomplete gamma function $\gamma(s,x)$, which appears in the definition of the L\'evy intensity of the GGP process in Equation \eqref{eq:GBFRYintensity}.} We have, for $s,x>0$
\begin{align}
%\gamma(1,x)&=1-e^{-x}\\
\gamma(s,x)&=x^s\int_0^1 v^{s-1}e^{-vx}dv\label{eq:identityincgamma1}\\
s\gamma(s,x)&=\gamma(s+1,x)+x^s e^{-x}\label{eq:identityincgamma3}\\
\gamma(s,x)&\obsim{x\to 0} \frac{x^s}{s}\label{eq:incgammaasymp0}
\end{align}
For any $\tau,\kappa,c>0$,
$
\int_0^\infty x^{-1-\tau}\gamma(\kappa+\tau,cx)dx=\frac{c^{\tau}\Gamma(\kappa)}{\tau}
$.

\section{Generalised BFRY distribution}
\label{sec:GBFRY}

\revision{In the finite-activity case, the GGP process is a compound Poisson process whose compound distribution is the generalised BFRY distribution.} The generalised BFRY was introduced by \cite{Ayed2019} as a generalisation to the BFRY distribution~\citep{Pitman1997,Winkel2005,Bertoin2006}. The term BFRY was coined after Bertoin, Fujita, Roynette and Yor by~\cite{Devroye2014}. \revision{We describe in this section some properties of this distribution.}

A positive random variable $X$ with generalised BFRY distribution has probability density function
\begin{align}
\GBFRY(x\ ;\kappa, \tau, c)=\frac{\tau}{c^\tau\Gamma(\kappa)}x^{-1-\tau}\gamma(\kappa+\tau,cx)\label{eq:GBFRYpdf}
\end{align}
for some parameters  $\kappa,\tau,c>0$. $c$ is an inverse scale parameter; $\kappa$ controls the behavior at 0 as, using Karamata's theorem~\cite[Proposition 1.5.10]{Bingham1989} and Equation~\eqref{eq:incgammaasymp0}
$$
\Pr(X<x)\obsim{x\to 0}\frac{\tau c^\kappa}{\kappa(\kappa+\tau)\Gamma(\kappa)}x^{\kappa}
$$
 and $\tau$ is a power-law exponent controlling the tails of the distribution as, using Karamata's theorem
$$
\Pr(X>x)\obsim{x\to\infty} \frac{\Gamma(\kappa+\tau)}{c^\tau\Gamma(\tau)}x^{-\tau}.
$$
The moments are given by
\begin{align}
\mathbb E[X^m]=\frac{\tau\Gamma(m+\kappa)}{c^m(\tau-m)\Gamma(\kappa)}
\end{align}
for $m<\tau$, and $\mathbb E[X^m]=\infty$ otherwise. \revision{Using the identity \eqref{eq:identityincgamma1}, the pdf \eqref{eq:GBFRYpdf} admits the alternative representation
\begin{align*}
\GBFRY(x\ ;\kappa, \tau, c)=\int_0^1 v\times \frac{(vx)^{\kappa-1}e^{-cvx}c^\kappa}{\Gamma(\kappa)}\times\tau v^{\tau-1}dv.
\end{align*}}
 A GBFRY random variable therefore admits the following representation
$$
X\overset{d}{=}Y/Z
$$
where $Y\sim \Gammadist(\kappa, c)$ random variable and $Z\sim \Betadist(\tau, 1)$ where $\Gammadist(\kappa, c)$ denotes the gamma distribution with shape parameter $\kappa>0$ and inverse scale parameter $c>0$, and $\Betadist(a, b)$ denotes the beta distribution with parameters $a,b>0$. Note that $Z^{-1}\sim \Pareto(\tau,1)$.

\section{Tauberian-Abelian theorem for subordinated Brownian}
\label{sec:app:taubersubbrownian}
\revision{The following proposition states that the regular variation properties of the tail L\'evy intensity of a subordination Brownian process can be deduced from those of the tail L\'evy intensity of the corresponding subordinator. This proposition is used to derive the properties of the normal GGP process in Section~\ref{sec:NGGP} from those of the GGP subordinator. The proof relies on standard properties of regularly varying functions.}

\begin{proposition}\label{prop:tauberian}

Let $\varrho$ be a L\'evy measure on $(0,\infty)$ and let $\overline\rho(x)=\int_x^\infty \varrho(dw)$ be its tail L\'evy intensity. For $x>0$, let
\begin{align}
\overline\nu(x)=\int_{|s|>x}\nu(s)ds
\end{align}
where, for $s\in\mathbb R$,
$$
\nu(s)=\int_0^\infty (2\pi w)^{-1/2}e^{-\frac{s^2}{2w}}\varrho(dw).
$$
If $\overline\rho(x)$ is a regularly-varying function at 0 with
\begin{align}
\overline\rho(x)&\obsim{x\to 0} x^{-\alpha}\ell_1(1/x)\label{eq:RV0rho}
\end{align}
where $\alpha\in[0,1]$ and $\ell_1$ is a slowly varying function, then
$\overline\nu(x)$ is also regularly varying at 0, with
\begin{align}
\overline\nu(x)&\obsim{x\to 0} \frac{2^{\alpha+1}\Gamma(\alpha+1/2)}{\sqrt{\pi}} x^{-2\alpha} \ell_1(1/x^2).\label{eq:RV0nu}
\end{align}
If $\overline\rho(x)$ is a regularly-varying function at infinity with
\begin{align}
\overline\rho(x)&\obsim{x\to \infty} x^{-\tau}\ell_2(x)\label{eq:RVinfrho}
\end{align}
where $\tau\geq 0$ and $\ell_2$ is a slowly varying function, then
$\overline\nu(x)$ is also regularly varying at infinity, with
\begin{align}
\overline\nu(x) &\obsim{x\to \infty}  \frac{2^{\tau+1}\Gamma(\tau+1/2)}{\sqrt{\pi}} x^{-2\tau} \ell_2(x^2)\label{eq:RVinfnu}
\end{align}

If additionally $x\to x^{1/2}\overline\rho(2x)$ is ultimately monotone, then \eqref{eq:RV0nu} also implies \eqref{eq:RV0rho}; if $x\to x^{-1/2}\overline\rho(1/(2x))$ is ultimately monotone, then \eqref{eq:RVinfnu} also implies \eqref{eq:RVinfrho}.

\end{proposition}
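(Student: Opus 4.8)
The starting point is an explicit integral representation of $\overline\nu$ in terms of $\overline\rho$. Writing $\overline\Phi(t)=\frac{1}{\sqrt{2\pi}}\int_t^\infty e^{-u^2/2}du$ for the standard Gaussian tail, Fubini's theorem applied to the definition of $\overline\nu$, followed by the substitution $s=\sqrt{w}\,u$ in the inner integral, gives
\[
\overline\nu(x)=2\int_0^\infty \overline\Phi\!\left(\frac{x}{\sqrt w}\right)\varrho(dw).
\]
I would then integrate by parts, using $\varrho(dw)=-d\overline\rho(w)$. The boundary terms vanish: at infinity because $\overline\rho(w)\to 0$, and at the origin because the Gaussian factor forces $\overline\Phi(x/\sqrt w)\to 0$ faster than any polynomial growth of $\overline\rho$. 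Differentiating $\overline\Phi(x/\sqrt w)$ in $w$ yields the kernel representation $\overline\nu(x)=\frac{x}{\sqrt{2\pi}}\int_0^\infty \overline\rho(w)\,w^{-3/2}e^{-x^2/(2w)}\,dw$, and the change of variables $w=x^2/(2r)$ recasts this as $\overline\nu(x)=\frac{1}{\sqrt\pi}\int_0^\infty \overline\rho(x^2/(2r))\,r^{-1/2}e^{-r}\,dr$. This single identity drives every part of the proof.

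For the two Abelian (direct) implications I would treat both limits in parallel. Dividing by $\overline\rho(x^2)$ and using the assumed regular variation of $\overline\rho$ (at infinity in \eqref{eq:RVinfrho}, at $0$ in \eqref{eq:RV0rho}), the ratio $\overline\rho(x^2/(2r))/\overline\rho(x^2)$ converges pointwise in $r$ to $(2r)^{\tau}$ (resp. $(2r)^{\alpha}$). Passing the limit under the integral sign by dominated convergence — with a dominating function supplied by Potter's bounds, splitting the range of $r$ near $0$ and $\infty$ as needed — reduces everything to the Gamma integral $\int_0^\infty (2r)^{\tau}r^{-1/2}e^{-r}\,dr$, proportional to $\Gamma(\tau+1/2)$; collecting the powers of two and of $\pi$ from the substitution produces the stated constant, and substituting back $\overline\rho(x^2)\obsim{x\to\infty} x^{-2\tau}\ell_2(x^2)$ (resp. the analogue at $0$) gives \eqref{eq:RVinfnu} and \eqref{eq:RV0nu}.

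For the two converse (Tauberian) implications the representation must be inverted, and this is where the monotonicity hypotheses enter. Setting $s=x^2/2$ and $u=1/w$, the kernel representation becomes a genuine Laplace transform,
\[
\sqrt{\pi/s}\;\overline\nu(\sqrt{2s})=\int_0^\infty u^{-1/2}\overline\rho(1/u)\,e^{-su}\,du,
\]
i.e. the Laplace--Stieltjes transform of $h(u)=u^{-1/2}\overline\rho(1/u)$. Regular variation of $\overline\nu$ at infinity (resp. at $0$) is equivalent to regular variation of the left-hand side as $s\to\infty$ (resp. $s\to 0$), and Karamata's Tauberian theorem transfers this to regular variation of the primitive $\int_0^{\cdot}h$. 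To descend from the primitive back to $h$ itself — equivalently, back to $\overline\rho$ — I would invoke the monotone density theorem, which requires $h$ to be ultimately monotone; the hypotheses that $x\mapsto x^{-1/2}\overline\rho(1/(2x))$ (resp. $x\mapsto x^{1/2}\overline\rho(2x)$) be ultimately monotone are precisely the statement that $h$, in the appropriate variable, is ultimately monotone, so the theorem applies and returns \eqref{eq:RVinfrho} (resp. \eqref{eq:RV0rho}).

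The main obstacle is this converse direction. The Abelian half is a transparent dominated-convergence computation once the kernel representation is in hand, whereas the Tauberian half genuinely needs the Laplace-transform reformulation together with both Karamata's Tauberian theorem and the monotone density theorem; the somewhat opaque monotonicity conditions in the statement are exactly the side conditions that make the monotone density theorem usable. A secondary technical point is justifying dominated convergence in the Abelian part uniformly near $r=0$ and $r=\infty$, which calls for a careful split of the integration range rather than a single global bound.
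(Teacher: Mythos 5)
Your proof is correct in strategy and is essentially the paper's own argument: both reduce to the same kernel identity (your $\overline\nu(x)=\frac{1}{\sqrt\pi}\int_0^\infty \overline\rho\bigl(x^2/(2r)\bigr)r^{-1/2}e^{-r}\,dr$ is exactly the paper's $\overline\nu(x)=\frac{x}{\sqrt\pi}\int_0^\infty u^{-1/2}e^{-ux^2}\overline\rho(1/(2u))\,du$ after the substitution $r=ux^2$, the paper obtaining it by Fubini and a change of variable in $s$ rather than by integration by parts), then apply Karamata's Abelian theorem for Laplace transforms (your dominated-convergence-with-Potter-bounds argument is the standard proof of that theorem) and, for the converse, Karamata's Tauberian theorem plus the monotone density theorem, which is precisely what the paper's citation of Bingham--Goldie--Teugels Theorem 1.7.2 packages together.

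One point deserves flagging: the constant. Your kernel representation is correct, but if you actually carry out the Gamma integral you get $\frac{1}{\sqrt\pi}\int_0^\infty (2r)^{\tau}r^{-1/2}e^{-r}\,dr=\frac{2^{\tau}\Gamma(\tau+1/2)}{\sqrt\pi}$, i.e.\ \emph{half} the constant $\frac{2^{\tau+1}\Gamma(\tau+1/2)}{\sqrt\pi}$ appearing in \eqref{eq:RVinfnu}, so your claim that "collecting the powers of two" produces the stated constant does not hold as written. The discrepancy is not yours: the paper's displayed intermediate formula carries a prefactor $\frac{2x}{\sqrt\pi}$ where the Fubini computation gives $\frac{x}{\sqrt\pi}$, and that extra factor of $2$ propagates into the statement. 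A direct check with the stable subordinator $\varrho(dw)=\frac{\alpha}{\Gamma(1-\alpha)}w^{-1-\alpha}dw$, for which $\overline\rho(x)=x^{-\alpha}/\Gamma(1-\alpha)$ and $\overline\nu$ can be computed in closed form, gives $\overline\nu(x)=\frac{2^{\alpha}\Gamma(\alpha+1/2)}{\sqrt\pi\,\Gamma(1-\alpha)}x^{-2\alpha}$, confirming that your constant is the correct one and the proposition's is off by a factor of $2$. You should therefore either state the corrected constant or explicitly note the erratum rather than silently reconciling your computation with the displayed formula.
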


\begin{proof}
We have
\begin{align*}
\overline\nu(x)&=\sqrt{\frac{2}{\pi}}\int_{\mu_0+x}^\infty\int_0^\infty w^{-1/2}e^{-\frac{(s-\mu_0)^2}{2w}}\varrho(dw)ds.
\end{align*}
With the change of variable $u=\frac{(s-\mu_0)^2}{2wx^2}$, we obtain
\begin{align*}
\overline\nu(x)&=\frac{2x}{\sqrt{\pi}}\int_0^\infty\int_{1/(2u)}^\infty u^{-1/2}e^{-ux^2 }\varrho(dw)du=\frac{2x}{\sqrt{\pi}}\int_0^\infty u^{-1/2}e^{-ux^2 }\overline\rho(1/(2u))du
\end{align*}
We have
\begin{align*}
u^{-1/2}\overline\rho(1/(2u))&\obsim{u\to 0} 2^\tau u^{\tau-1/2} \ell_2(1/u)\\
u^{-1/2}\overline\rho(1/(2u))&\obsim{u\to \infty} 2^\alpha u^{\alpha-1/2} \ell_1(u).
\end{align*}
It follows from Karamata Abelian theorem~\cite[Chapter XIII, Section 5]{Feller1971} that
\begin{align*}
\overline\nu(x) &\obsim{x\to \infty}  \frac{2^{\tau+1}\Gamma(\tau+1/2)}{\sqrt{\pi}} x^{-2\tau} \ell_2(x^2)\\
\overline\nu(x)&\obsim{x\to 0} \frac{2^{\alpha+1}\Gamma(\alpha+1/2)}{\sqrt{\pi}} x^{-2\alpha} \ell_1(1/x^2).
\end{align*}
The reverse Tauberian result holds under monotonicity conditions near 0 or infinity of the function $u\to u^{-1/2}\overline\rho(1/(2u))$, see \cite[Theorem 1.7.2 page 39]{Bingham1989}.
\end{proof}

\section{Additional Details and Simulation results for the Exponentiated L\'evy model}
\label{sec:sensitivity}

\par \revision{\textbf{Computational time:} For $n=1500$ observations, using $1500$ particles and running the chain for $5000$ iterations requires approximately 8 hours on a single CPU.}

\textbf{Sensitivity to prior:} To assess the sensitivity of the posterior to the choice of prior we conduct experiments with two different choices of prior on $\tau$: $(\tau-1)\sim \Gammadist(1,1)$ and $\tau\sim \Unif(1, 5)$. In Figure  \ref{fig:gbfry_simulated}, we consider a setting with a relatively large number $n=5000$ of samples  generated from a NGGP model. We can see that the choice of the prior has a relatively limited impact on the posterior. In Figure \ref{fig:ghd_simulated}, we consider a more challenging setting with smaller number $n=1500$ of samples generated from a GHD model (misspecified case). Here we can see that the choice of the prior has a non negligible effect, and more specifically the tail of the posterior of $\tau$ under the uniform prior is heavier. However, we can see that the posterior of other parameters remain unchanged, and that the one on $\tau$ still concentrates around the same mode.

\textbf{Goodness of fit to the tails:} In Figure \ref{fig:app_simple_rank} we report the ranked squared increments on the tech companies dataset.

\begin{figure}
     \centering
     \begin{subfigure}[b]{0.49\textwidth}
         \centering
         \includegraphics[width=\textwidth]{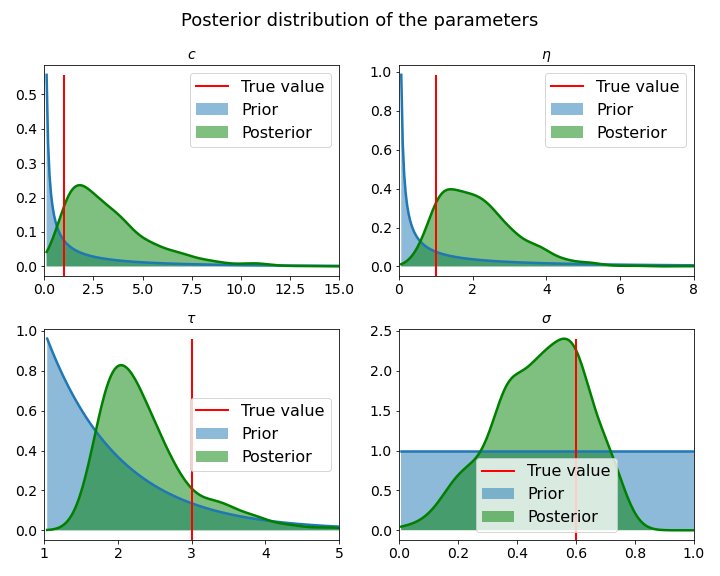}
         \caption{Posterior with Gamma prior for $\tau$.}
         \label{fig:gbfry_gamma}
     \end{subfigure}
     \hfill
     \begin{subfigure}[b]{0.49\textwidth}
         \centering
         \includegraphics[width=\textwidth]{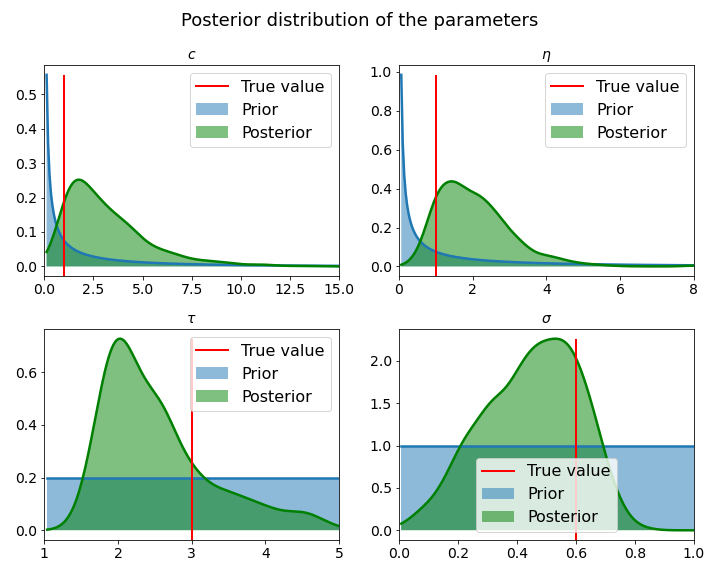}
         \caption{Posterior with Uniform prior for $\tau$}
         \label{fig:gbfry_uniform}
     \end{subfigure}
        \caption{Posterior distribution of the parameters of the NGGP parameters with different priors for the parameter $\tau$. Data simulated from a NGGP model (vertical red lines) with $5000$ training samples.}
        \label{fig:gbfry_simulated}
\end{figure}

\begin{figure}
     \centering
     \begin{subfigure}[b]{0.49\textwidth}
         \centering
         \includegraphics[width=\textwidth]{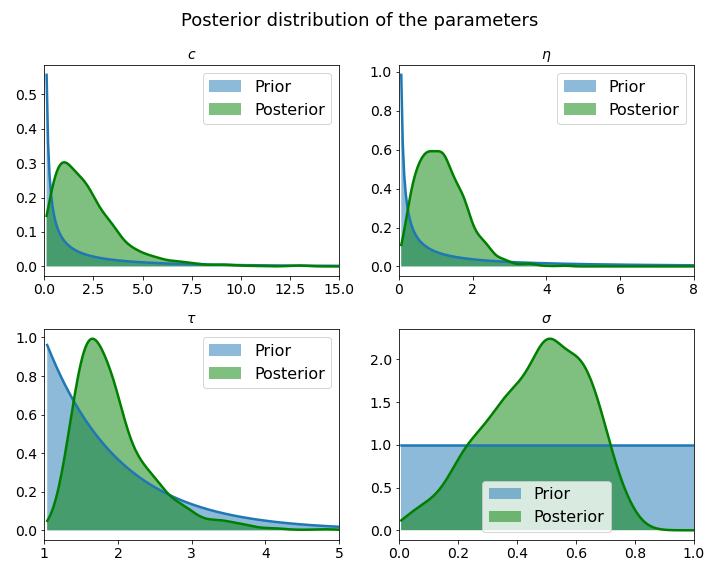}
         \caption{Posterior with Gamma prior for $\tau$.}
         \label{fig:ghd_gamma}
     \end{subfigure}
     \hfill
     \begin{subfigure}[b]{0.49\textwidth}
         \centering
         \includegraphics[width=\textwidth]{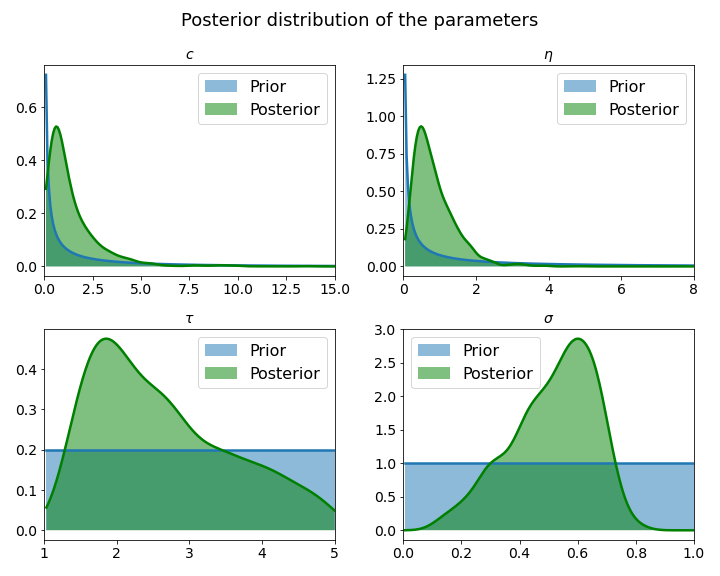}
         \caption{Posterior with Uniform prior for $\tau$.}
         \label{fig:ghd_uniform}
     \end{subfigure}
        \caption{Posterior distribution of the parameters of the NGGP parameters with different priors for the parameter $\tau$. Data simulated from a GHD model with $1 500$ training samples.}
        \label{fig:ghd_simulated}
\end{figure}

\begin{figure}
\centering
\begin{subfigure}[t]{0.24\textwidth}
\centering
\includegraphics[width=\linewidth]{figures/simple/AAPL_min_train/gbfry/Posterior_predictive_ordered_y_square_test.png}
%\caption[]{NGGP}
\end{subfigure}
\begin{subfigure}[t]{0.24\textwidth}
\centering
\includegraphics[width=\linewidth]{figures/simple/AAPL_min_train/ghd/Posterior_predictive_ordered_y_square_test.png}
%\caption{GHD}
\end{subfigure}
\begin{subfigure}[t]{0.24\textwidth}
\centering
\includegraphics[width=\linewidth]{figures/simple/AAPL_min_train/ns/Posterior_predictive_ordered_y_square_test.png}
%\caption{NS}
\end{subfigure}
\begin{subfigure}[t]{0.24\textwidth}
\centering
\includegraphics[width=\linewidth]{figures/simple/AAPL_min_train/student/Posterior_predictive_ordered_y_square_test.png}
%\caption{NS}
\end{subfigure}

\begin{subfigure}[t]{0.24\textwidth}
\centering
\includegraphics[width=\linewidth]{figures/simple/AMZN_min_train/gbfry/Posterior_predictive_ordered_y_square_test.png}
%\caption{GBFRY}
\end{subfigure}
\begin{subfigure}[t]{0.24\textwidth}
\centering
\includegraphics[width=\linewidth]{figures/simple/AMZN_min_train/ghd/Posterior_predictive_ordered_y_square_test.png}
%\caption{GHD}
\end{subfigure}
\begin{subfigure}[t]{0.24\textwidth}
\centering
\includegraphics[width=\linewidth]{figures/simple/AMZN_min_train/ns/Posterior_predictive_ordered_y_square_test.png}
%\caption{NS}
\end{subfigure}
\begin{subfigure}[t]{0.24\textwidth}
\centering
\includegraphics[width=\linewidth]{figures/simple/AMZN_min_train/student/Posterior_predictive_ordered_y_square_test.png}
%\caption{NS}
\end{subfigure}

\begin{subfigure}[t]{0.24\textwidth}
\centering
\includegraphics[width=\linewidth]{figures/simple/FB_min_train/gbfry/Posterior_predictive_ordered_y_square_test.png}
%\caption{GBFRY}
\end{subfigure}
\begin{subfigure}[t]{0.24\textwidth}
\centering
\includegraphics[width=\linewidth]{figures/simple/FB_min_train/ghd/Posterior_predictive_ordered_y_square_test.png}
%\caption{GHD}
\end{subfigure}
\begin{subfigure}[t]{0.24\textwidth}
\centering
\includegraphics[width=\linewidth]{figures/simple/FB_min_train/ns/Posterior_predictive_ordered_y_square_test.png}
%\caption{NS}
\end{subfigure}
\begin{subfigure}[t]{0.24\textwidth}
\centering
\includegraphics[width=\linewidth]{figures/simple/FB_min_train/student/Posterior_predictive_ordered_y_square_test.png}
%\caption{NS}
\end{subfigure}

\begin{subfigure}[t]{0.24\textwidth}
\centering
\includegraphics[width=\linewidth]{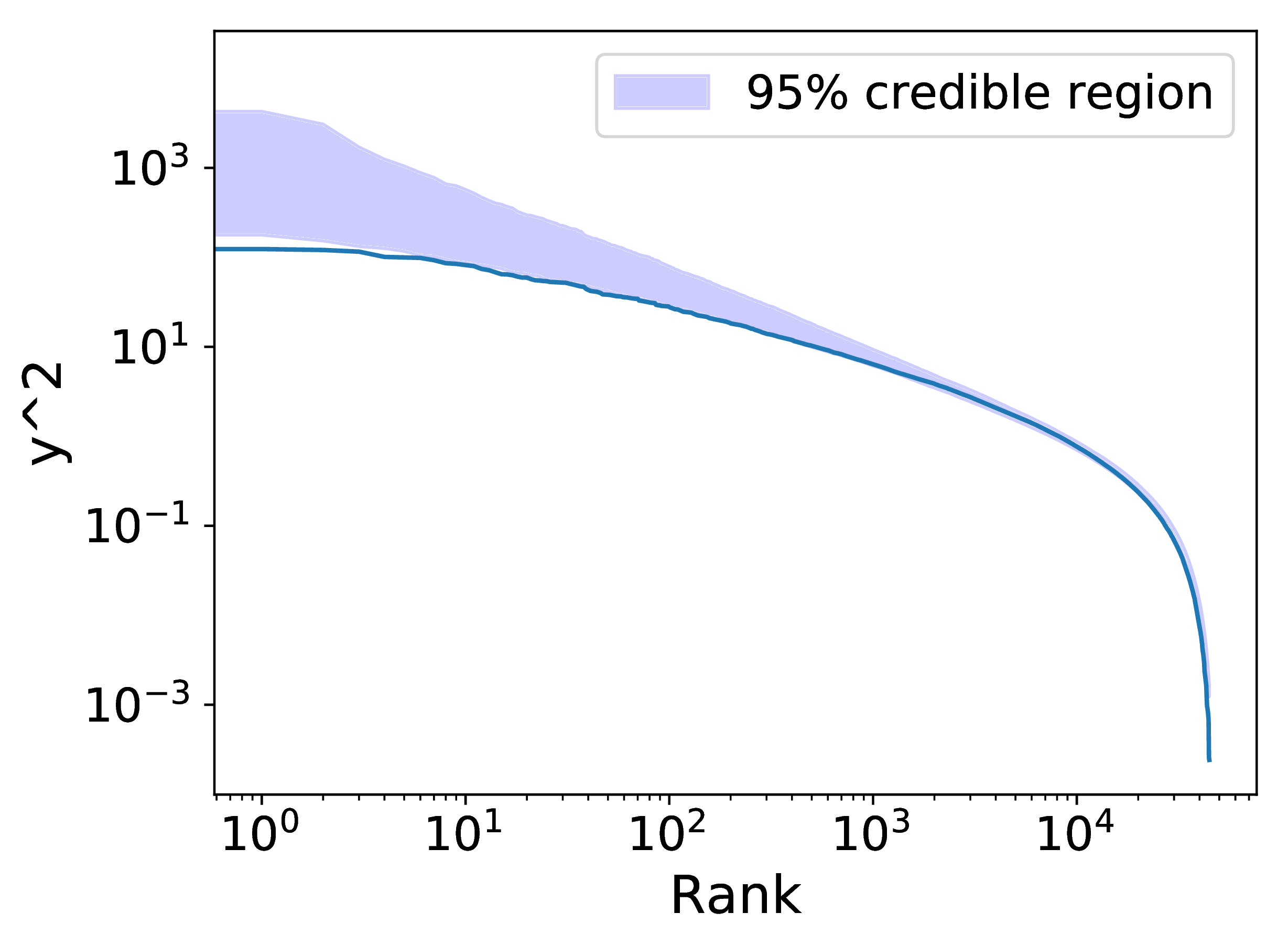}
%\caption{GBFRY}
\end{subfigure}
\begin{subfigure}[t]{0.24\textwidth}
\centering
\includegraphics[width=\linewidth]{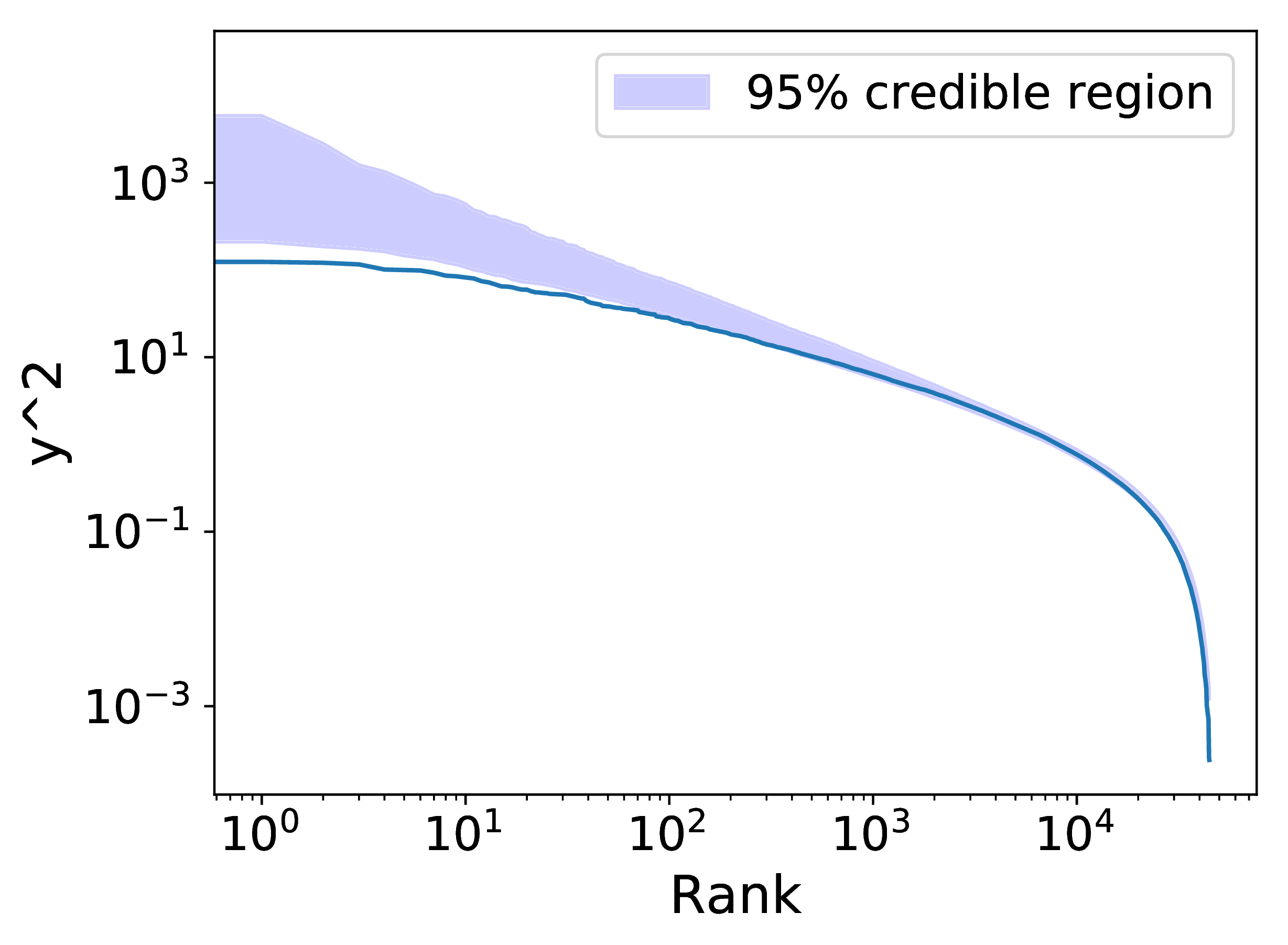}
%\caption{GHD}
\end{subfigure}
\begin{subfigure}[t]{0.24\textwidth}
\centering
\includegraphics[width=\linewidth]{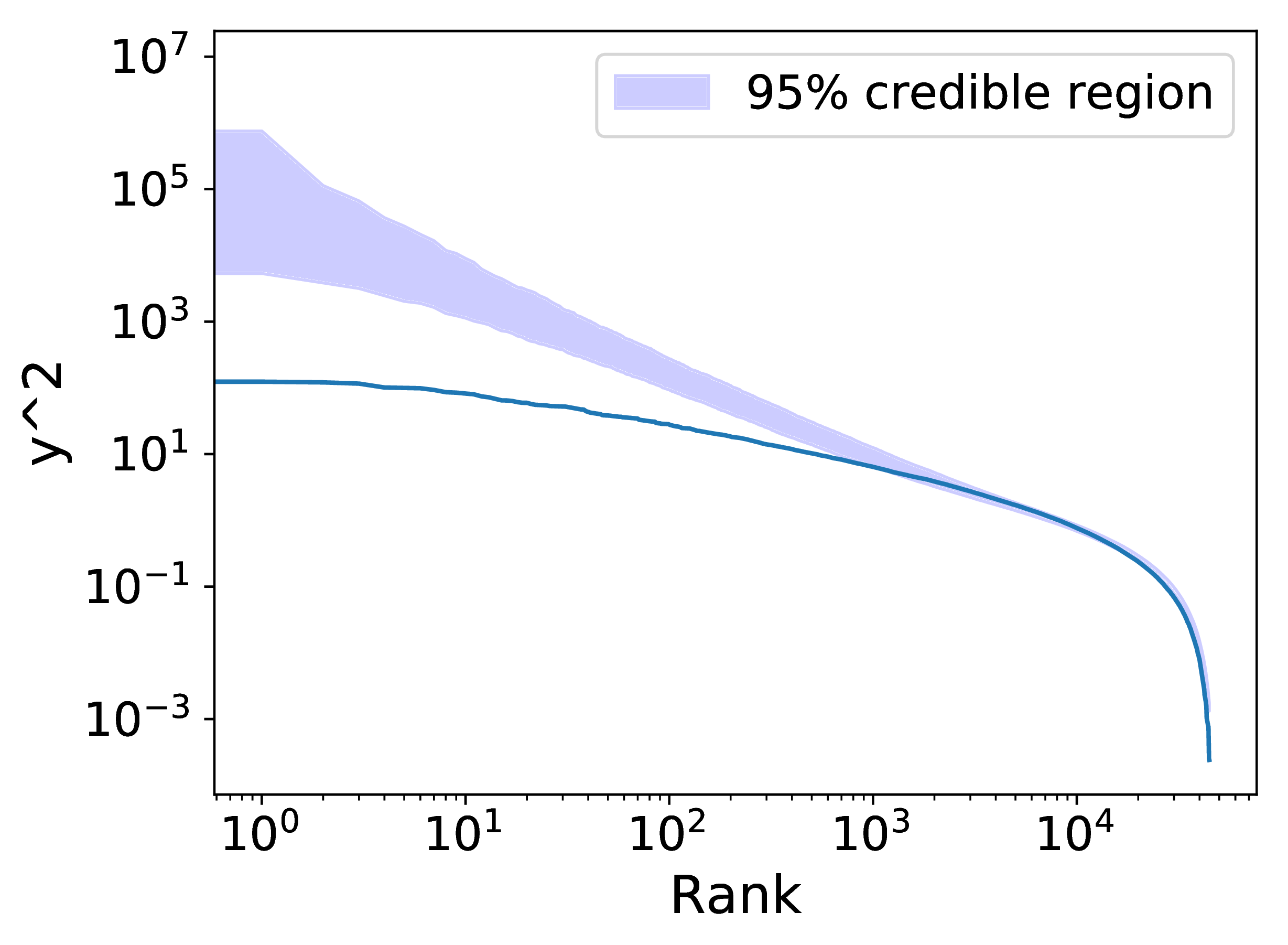}
%\caption{NS}
\end{subfigure}
\begin{subfigure}[t]{0.24\textwidth}
\centering
\includegraphics[width=\linewidth]{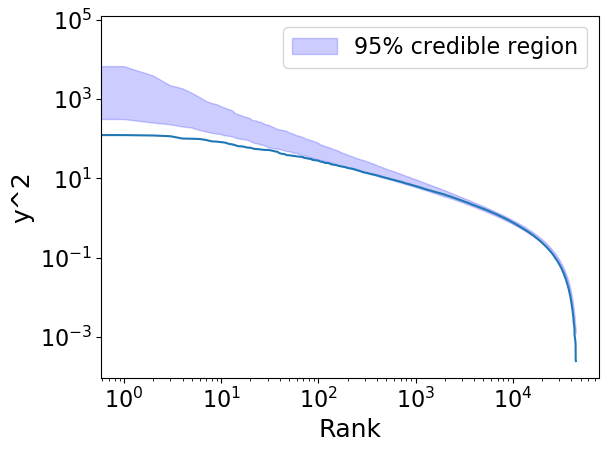}
%\caption{NS}
\end{subfigure}

\begin{subfigure}[t]{0.24\textwidth}
\centering
\includegraphics[width=\linewidth]{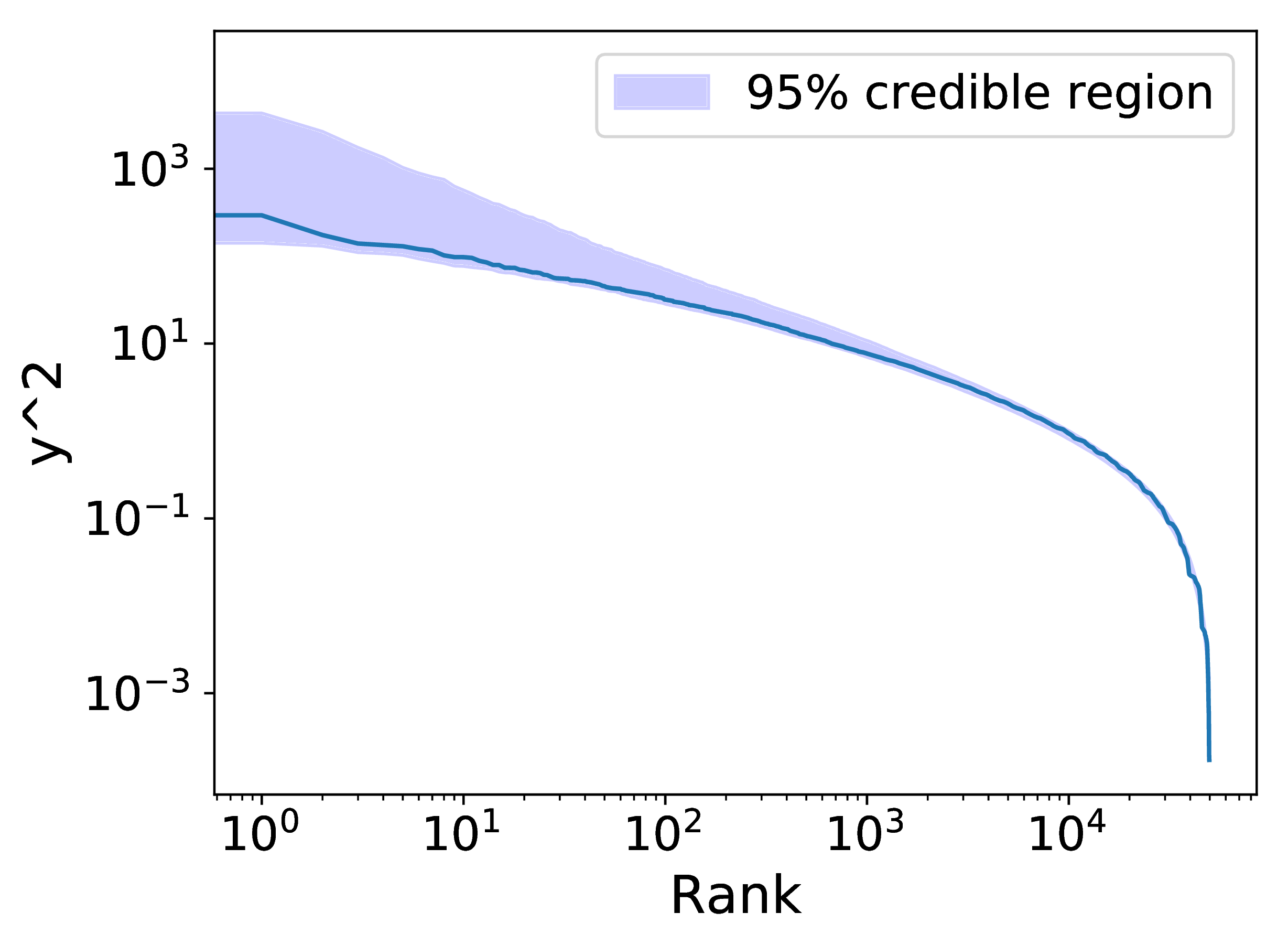}
%\caption{GBFRY}
\end{subfigure}
\begin{subfigure}[t]{0.24\textwidth}
\centering
\includegraphics[width=\linewidth]{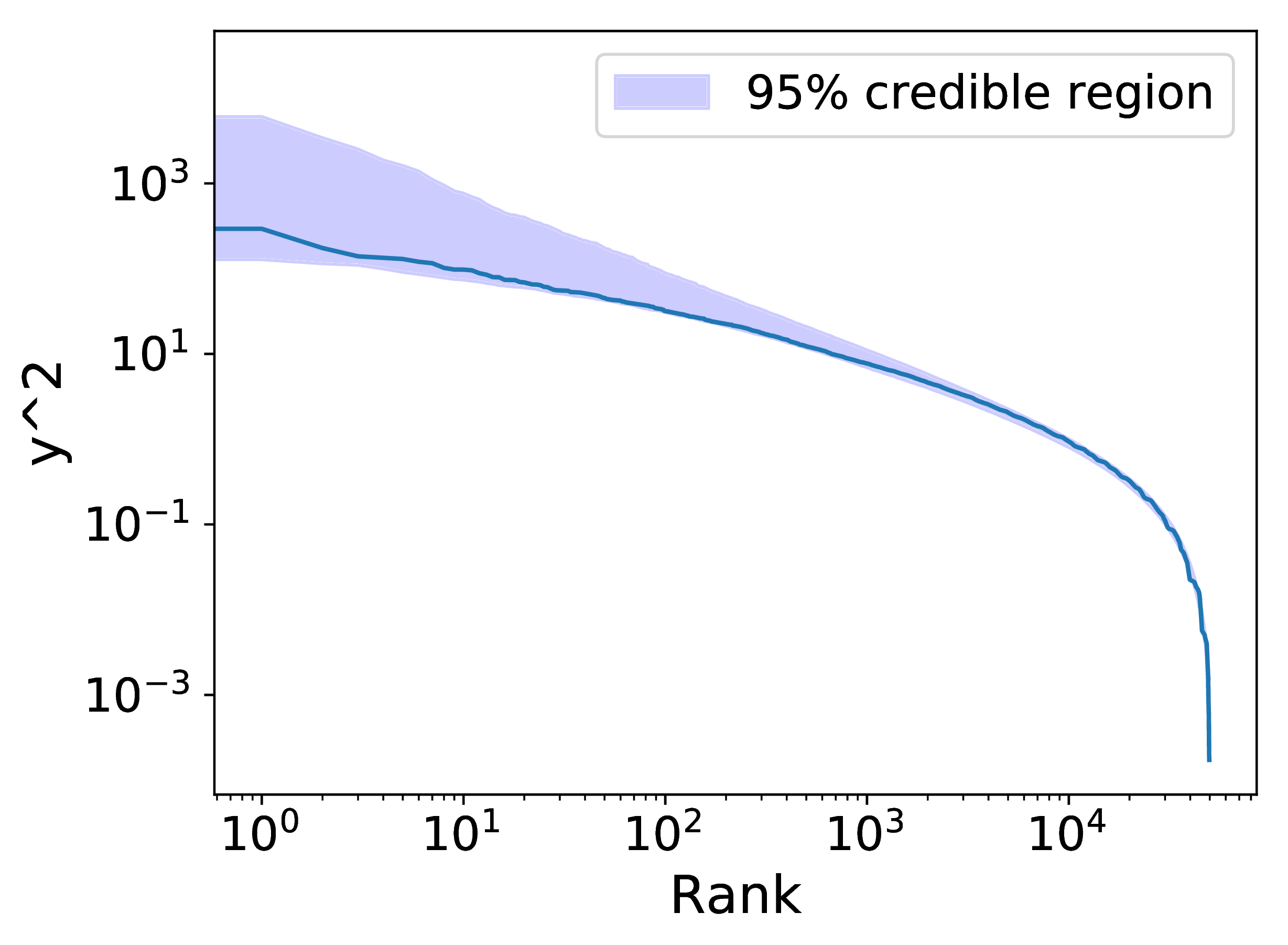}
%\caption{GHD}
\end{subfigure}
\begin{subfigure}[t]{0.24\textwidth}
\centering
\includegraphics[width=\linewidth]{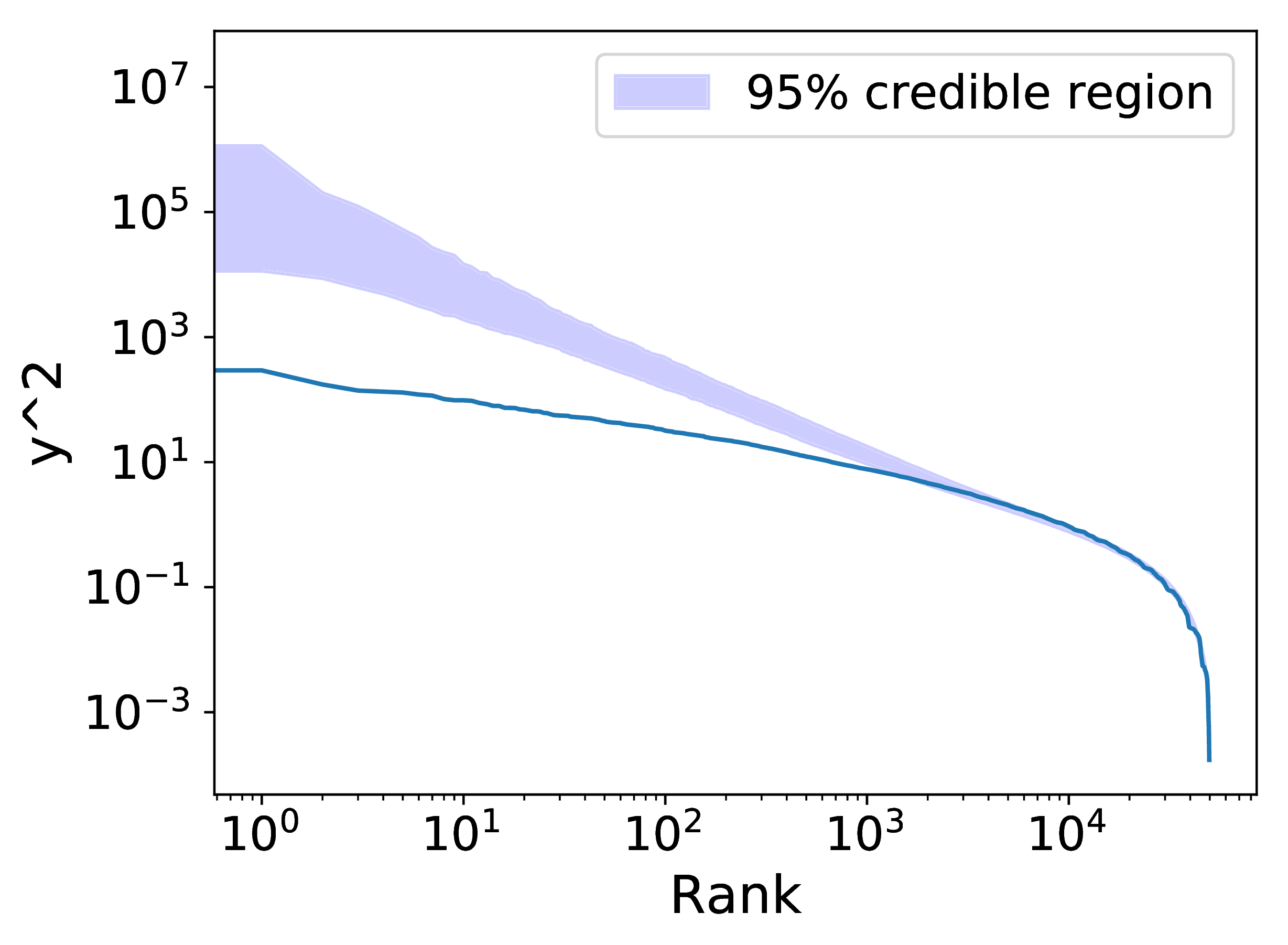}
%\caption{NS}
\end{subfigure}
\begin{subfigure}[t]{0.24\textwidth}
\centering
\includegraphics[width=\linewidth]{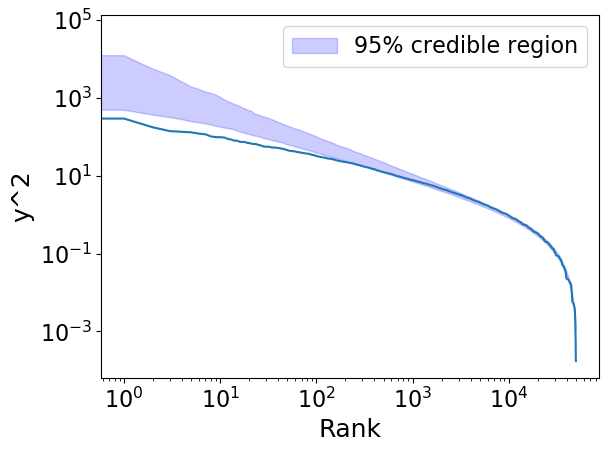}
%\caption{NS}
\end{subfigure}

\begin{subfigure}[t]{0.24\textwidth}
\centering
\includegraphics[width=\linewidth]{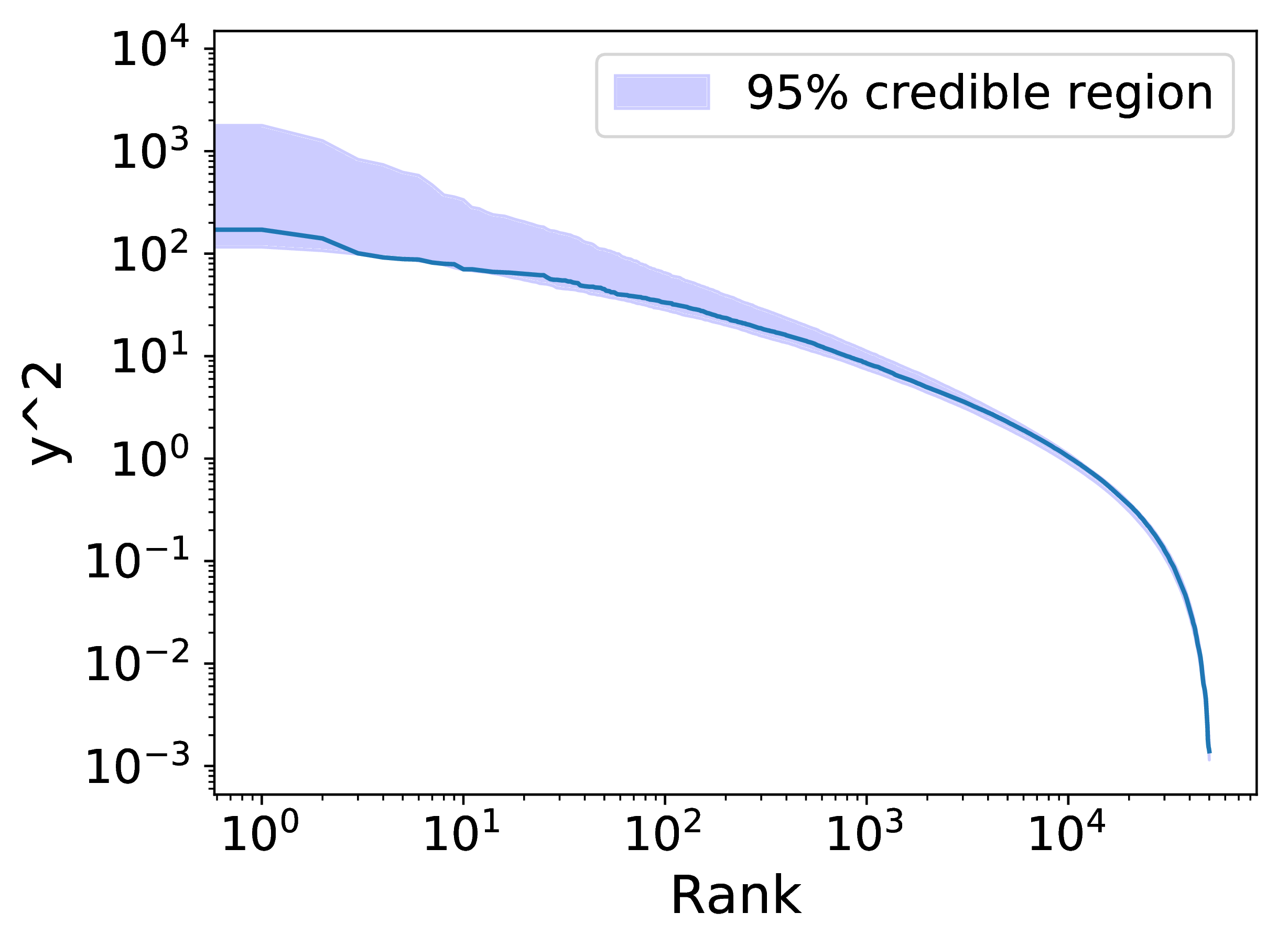}
\caption{NGGP}
\end{subfigure}
\begin{subfigure}[t]{0.24\textwidth}
\centering
\includegraphics[width=\linewidth]{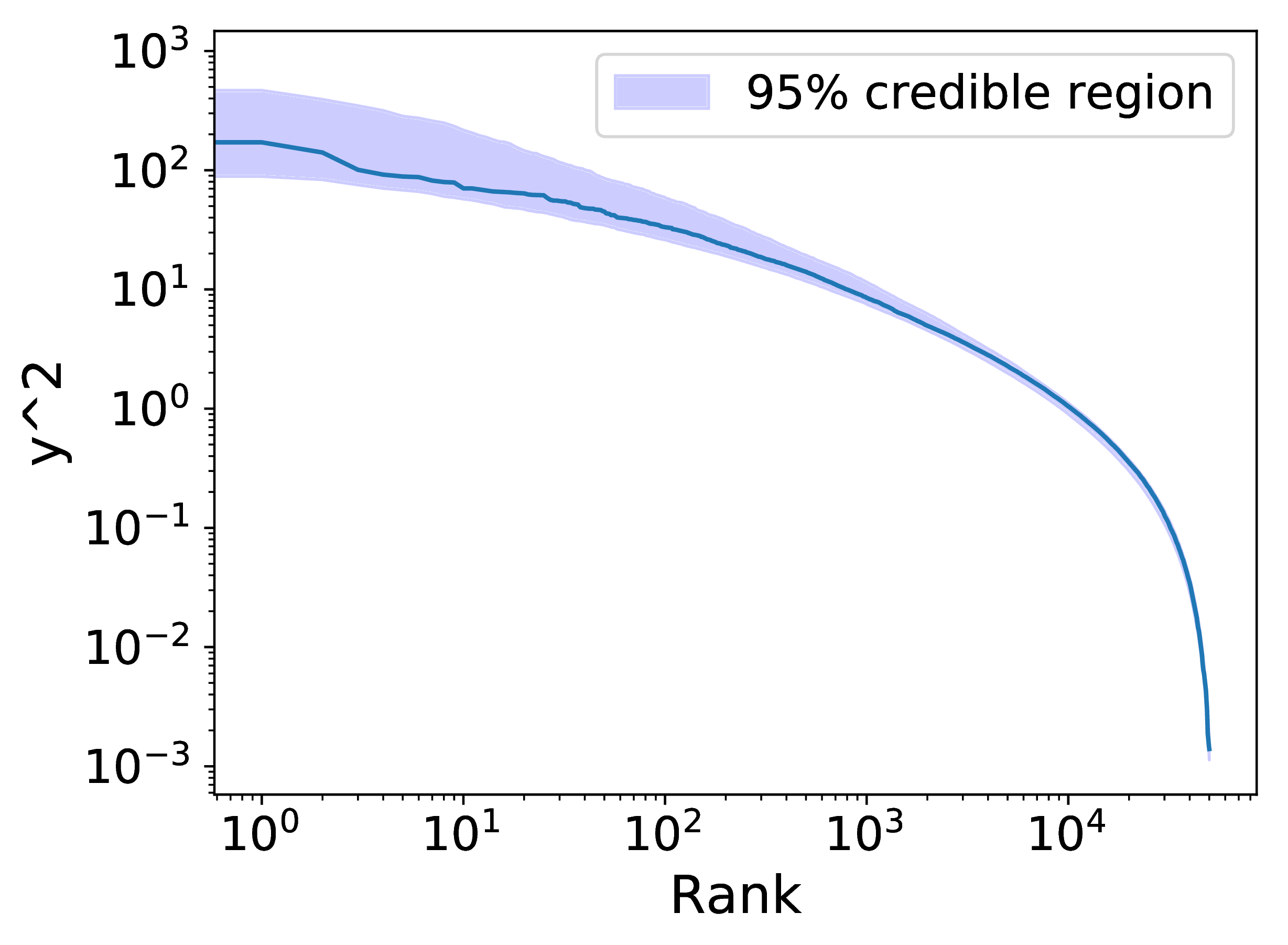}
\caption{GH}
\end{subfigure}
\begin{subfigure}[t]{0.24\textwidth}
\centering
\includegraphics[width=\linewidth]{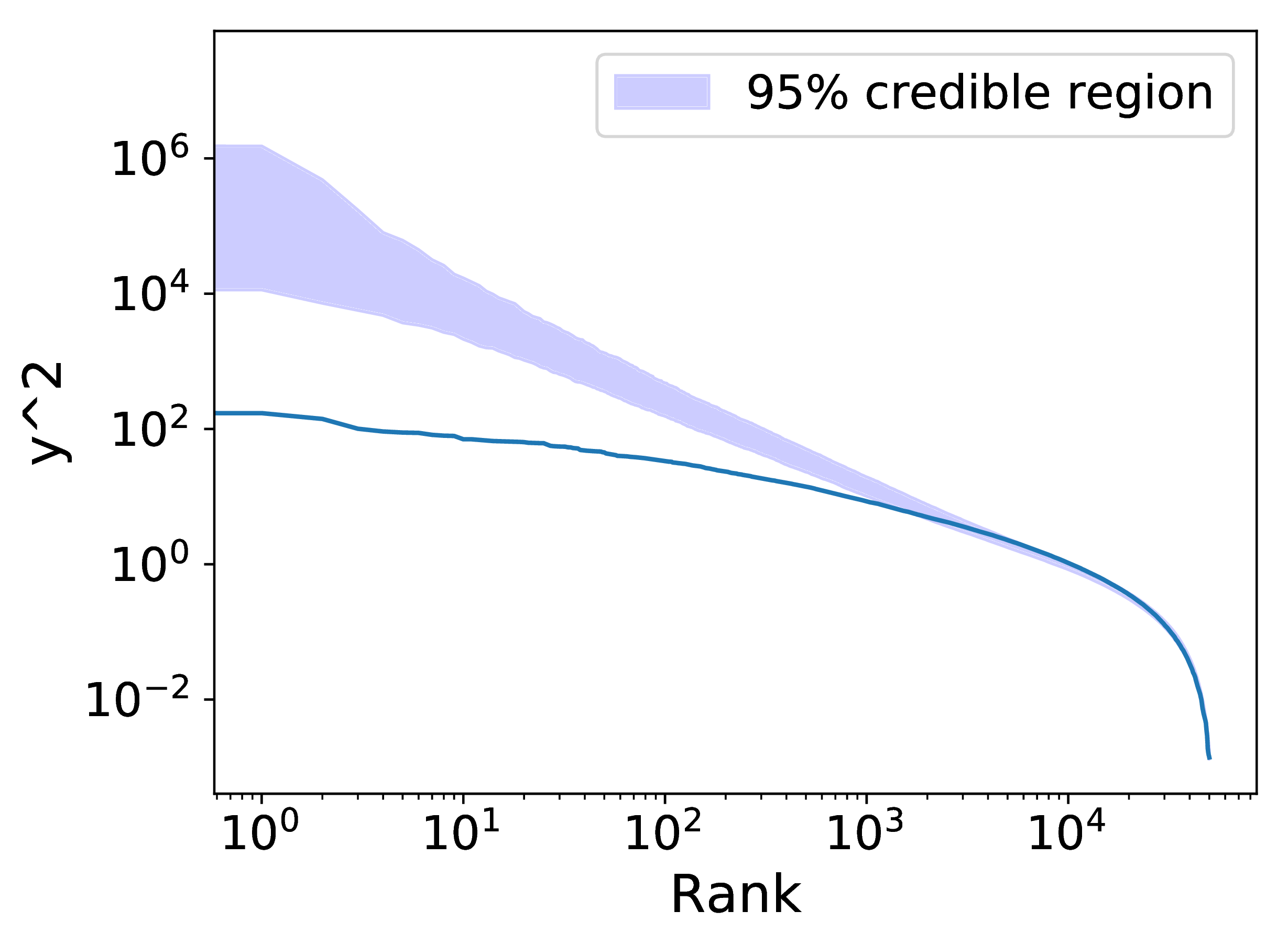}
\caption{NS}
\end{subfigure}
\begin{subfigure}[t]{0.24\textwidth}
\centering
\includegraphics[width=\linewidth]{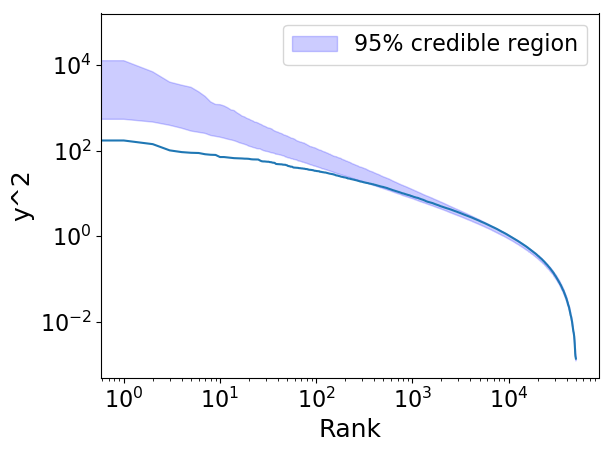}
\caption{Student}
\end{subfigure}
\caption{Ranked squared increments on the tech companies dataset. From top to bottom row: Apple, Amazon, Facebook, Google, Microsoft, Netflix. The line represents the ranked $y^2$ in the test dataset; the shaded area represent the 95\% credible region. Results are given for the  NGGP, GH,  NS and Student models in this order.}
\label{fig:app_simple_rank}
\end{figure}

\section{Additional simulation results for OU-based models}
\label{sec:OUadditionalsimus}

In this section, we provide additional simulation results for OU-based models with NGGP marginal. To assess the sensitivity of the proposed method to the choice of the prior, we consider an alternative prior on $\tau$ by choosing $\tau-1 \sim \mathrm{Unif}(0, 3)$. Figure~\ref{fig:complex_simluated_unif_3.0_app} shows the inference results for the simulated data as described in Section~\ref{sec:OUsimulated}, but with the uniform prior on $\tau$. The marginal posterior distributions for the parameters of interest are similar to those obtained with a gamma prior on $\tau$ (see Figure~\ref{fig:complex_simulated_3.0}).

Figure~\ref{fig:complex_simluated_exp_1.5_app} and \ref{fig:complex_simluated_unif_1.5_app} shows the inference results for data simulated with $\tau=1.5$ (all the other parameters were kept the same as above), both with $\mathrm{Gamma}(1,1)$ prior and $\mathrm{Unif}(0,3)$ prior on $\tau-1$. As for $\tau=3$, the posterior concentrates around the values used for simulation, and the results are rather insensitive to the choice of the prior for $\tau$.

\begin{figure}
\centering
\includegraphics[width=0.24\linewidth]{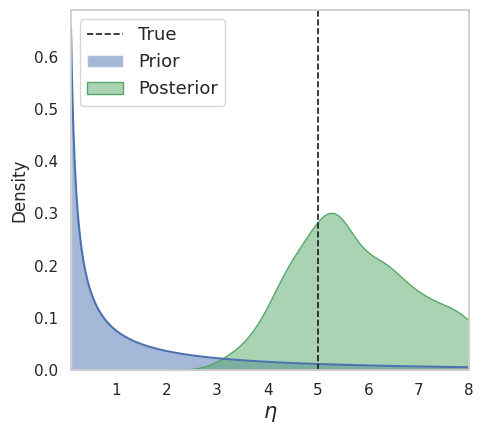}
\includegraphics[width=0.24\linewidth]{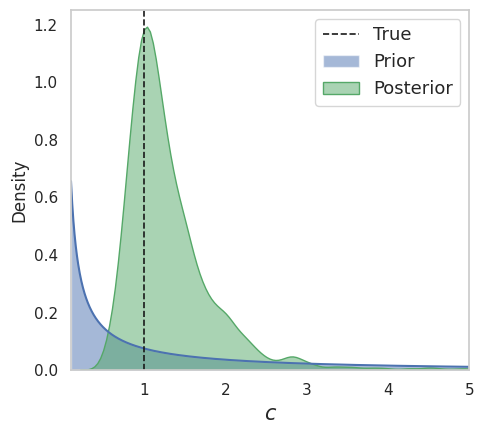}
\includegraphics[width=0.24\linewidth]{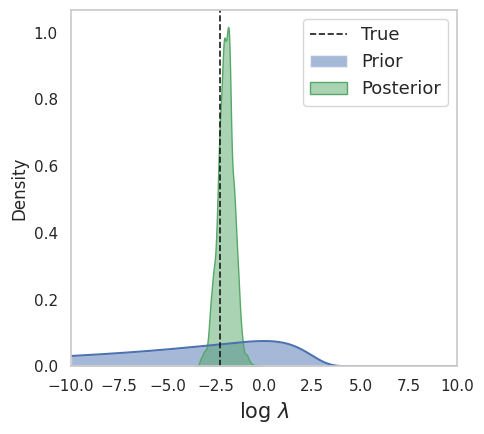}
\includegraphics[width=0.24\linewidth]{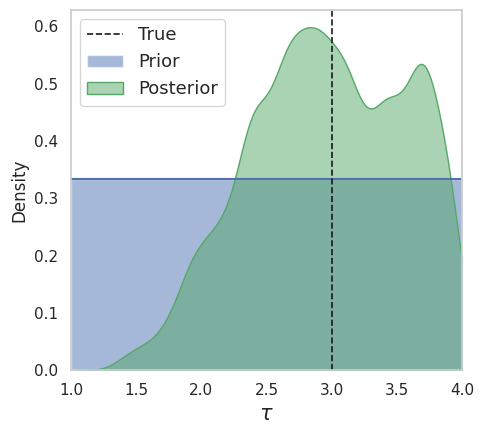}\\\includegraphics[width=0.24\linewidth]{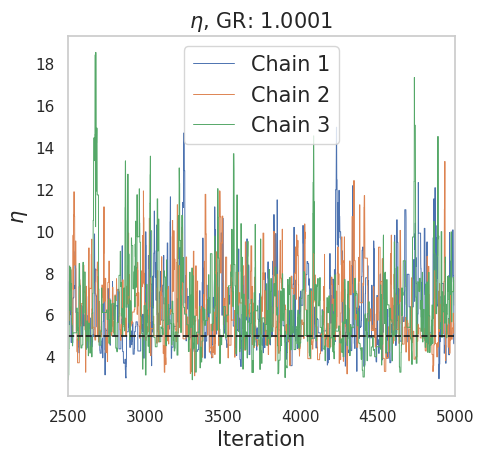}
\includegraphics[width=0.24\linewidth]{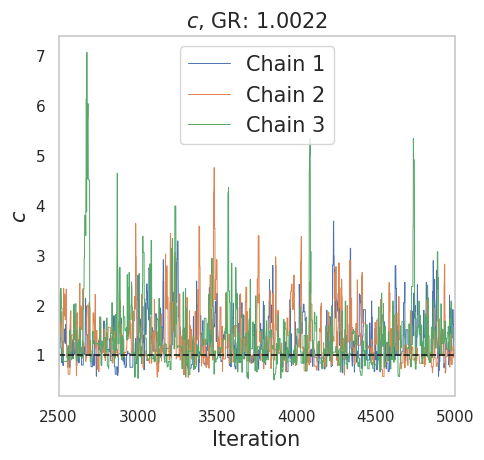}
\includegraphics[width=0.24\linewidth]{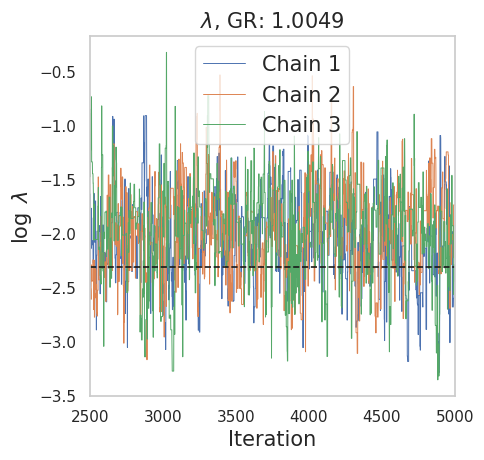}
\includegraphics[width=0.24\linewidth]{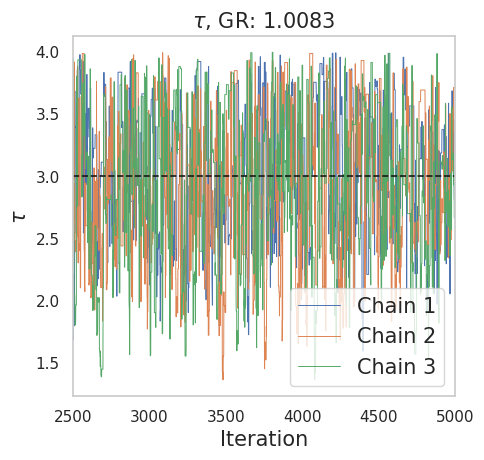}
\caption{Posterior samples of the parameters on simulated data from OU-based stochastic volatility model with NGGP marginal. Data generated with $\tau=3.0$, $\mathrm{Unif}(0, 3)$ prior on $\tau-1$.}
\label{fig:complex_simluated_unif_3.0_app}
\end{figure}

\begin{figure}
\centering
\includegraphics[width=0.24\linewidth]{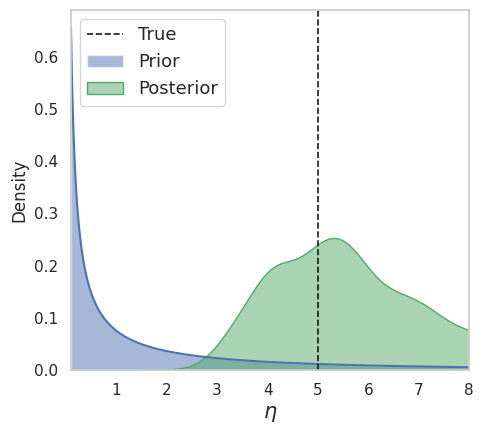}
\includegraphics[width=0.24\linewidth]{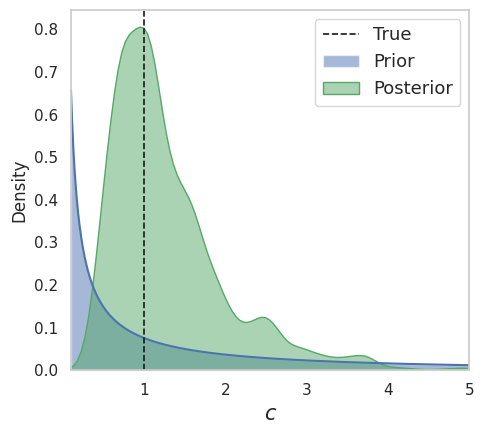}
\includegraphics[width=0.24\linewidth]{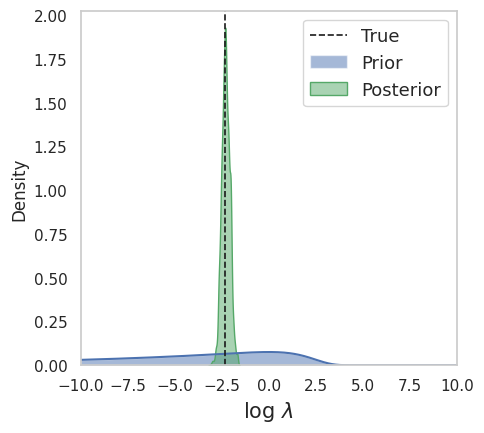}
\includegraphics[width=0.24\linewidth]{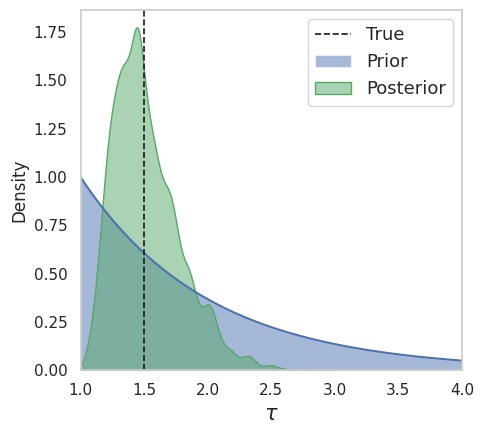}\\\includegraphics[width=0.24\linewidth]{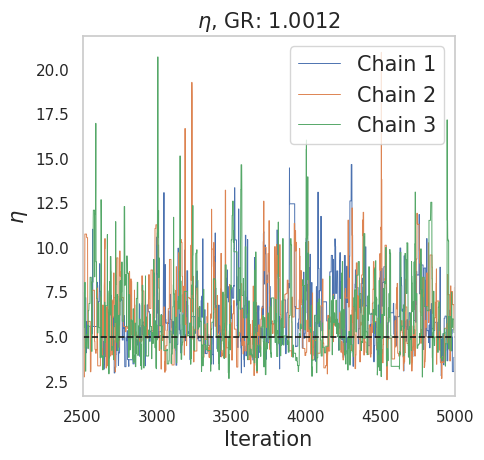}
\includegraphics[width=0.24\linewidth]{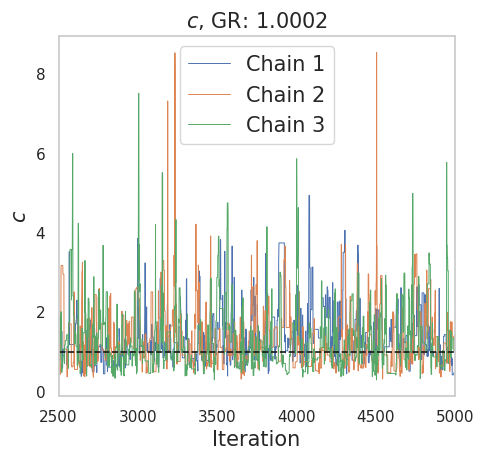}
\includegraphics[width=0.24\linewidth]{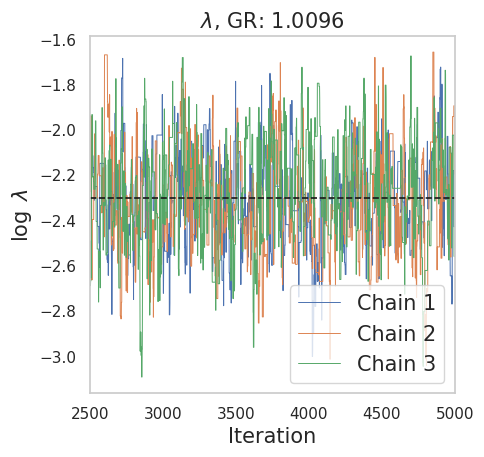}
\includegraphics[width=0.24\linewidth]{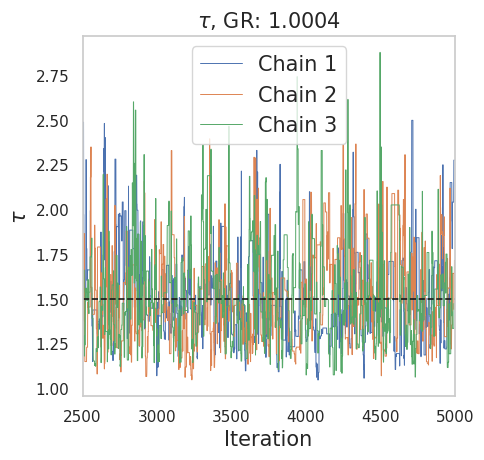}
\caption{Posterior samples of the parameters on simulated data from OU-based stochastic volatility model with NGGP marginal. Data generated with $\tau=1.5$, $\mathrm{Gamma}(1, 1)$ prior on $\tau-1$.}
\label{fig:complex_simluated_exp_1.5_app}
\end{figure}

\begin{figure}
\centering
\includegraphics[width=0.24\linewidth]{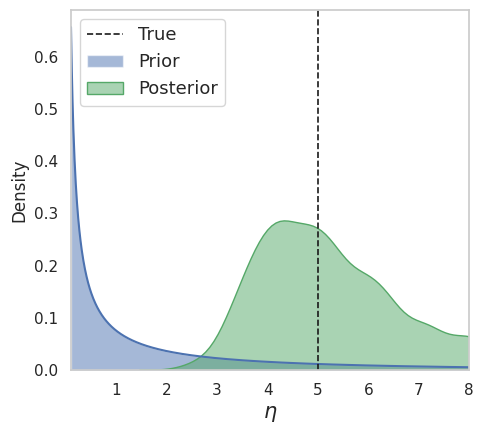}
\includegraphics[width=0.24\linewidth]{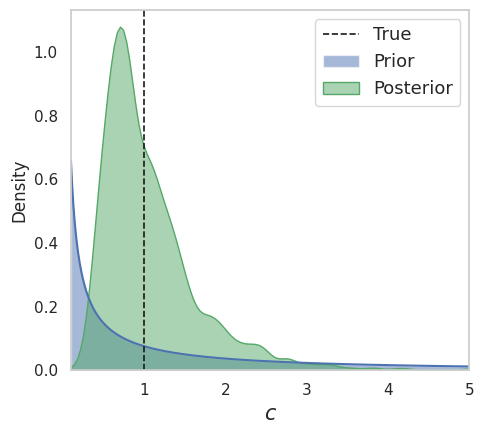}
\includegraphics[width=0.24\linewidth]{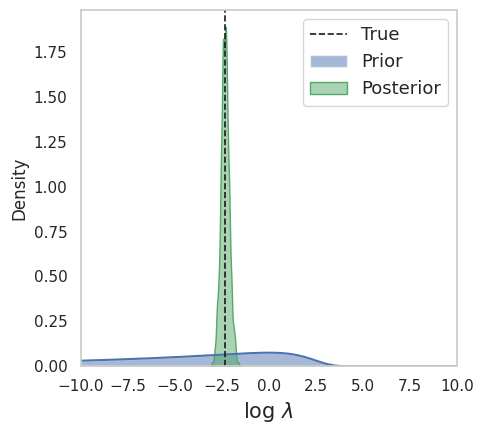}
\includegraphics[width=0.24\linewidth]{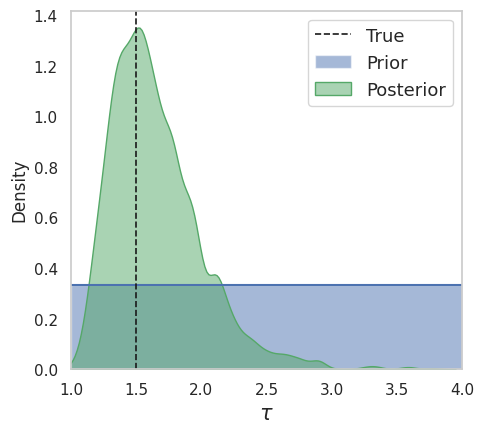}\\\includegraphics[width=0.24\linewidth]{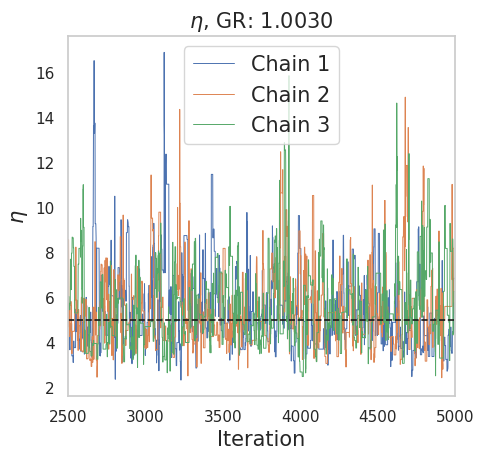}
\includegraphics[width=0.24\linewidth]{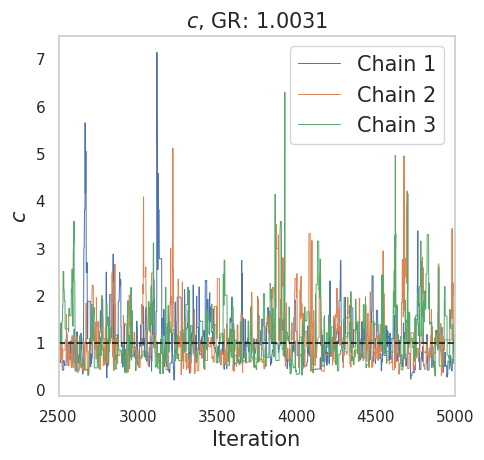}
\includegraphics[width=0.24\linewidth]{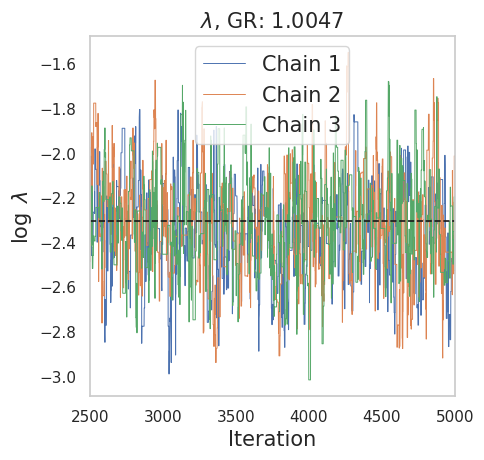}
\includegraphics[width=0.24\linewidth]{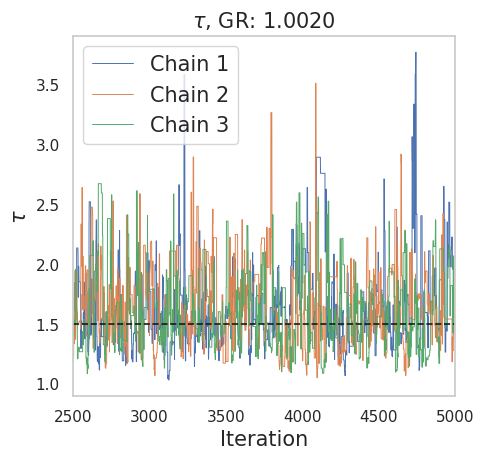}
\caption{Posterior samples of the parameters on simulated data from OU-based stochastic volatility model with NGGP marginal. Data generated with $\tau=1.5$, $\mathrm{Unif}(0, 3)$ prior on $\tau-1$.}
\label{fig:complex_simluated_unif_1.5_app}
\end{figure}

\section{Additional results for OU-based models with real data}
\label{sec:OUadditional}

\textbf{Estimated parameters} In Table  \ref{tab:app_oxford_parameters} we report the estimated parameters and credible intervals for more datasets.

\begin{table}
\scriptsize
\centering
\setlength{\tabcolsep}{0.5pt}
\caption{Posterior mean and 95\% credible intervals of the parameters of the NG and NGGP marginals for the different indices.}
\begin{tabular}{@{}c ccc cccc}
\toprule
& \multicolumn{3}{c}{NG} & \multicolumn{4}{c}{NGGP} \\
\cmidrule(lr){2-4}\cmidrule(lr){5-8}
& $\eta$ & $\lambda$ & $c$ & $\eta$ & $\lambda$ & $c$ & $\tau$ \\
\midrule
AEX & \qnts{1.43}{0.71}{2.50} & \qnts{0.02}{0.01}{0.04} & \qnts{1.41}{0.66}{2.56} & \qnts{2.58}{1.17}{4.47} & \qnts{0.03}{0.01}{0.05} & \qnts{9.34}{3.00}{23.04} & \qnts{1.49}{1.05}{2.37} \\
\addlinespace[0.5em]
AORD & \qnts{1.97}{0.92}{3.42} & \qnts{0.02}{0.01}{0.03} & \qnts{2.06}{0.88}{3.60} &  \qnts{3.62}{1.52}{6.87} & \qnts{0.03}{0.01}{0.04} &  \qnts{11.26}{2.89}{29.19} & \qnts{1.65}{1.09}{3.05} \\
\addlinespace[0.5em]
DJI & \qnts{1.28}{0.68}{2.05} & \qnts{0.02}{0.01}{0.03} & \qnts{1.26}{0.65}{2.10} & \qnts{2.19}{1.10}{3.69} & \qnts{0.03}{0.01}{0.04} & \qnts{8.53}{2.30}{19.95} & \qnts{1.38}{1.02}{2.25} \\
\addlinespace[0.5em]
FTSE  & \qnts{1.34}{0.68}{2.22} & \qnts{0.02}{0.01}{0.02} & \qnts{1.46}{0.73}{2.37} & \qnts{3.19}{1.35}{6.67} & \qnts{0.02}{0.01}{0.04} & \qnts{14.36}{3.55}{38.67} & \qnts{1.36}{1.04}{2.08} \\
\addlinespace[0.5em]
GSPTSE & \qnts{1.23}{0.59}{2.13} & \qnts{0.01}{0.01}{0.02} & \qnts{1.28}{0.57}{2.31} & \qnts{2.31}{0.92}{4.26} & \qnts{0.02}{0.01}{0.03} & \qnts{10.33}{2.42}{25.48} & \qnts{1.42}{1.02}{2.29} \\
\addlinespace[0.5em]
HSI & \qnts{1.34}{0.58}{2.38} & \qnts{0.01}{0.01}{0.02} & \qnts{1.61}{0.73}{2.91} & \qnts{2.51}{0.96}{4.60} & \qnts{0.01}{0.01}{0.03} &  \qnts{10.00}{2.70}{23.93} & \qnts{2.08}{1.07}{3.17} \\
\addlinespace[0.5em]
IBEX & \qnts{1.67}{0.88}{2.68} & \qnts{0.03}{0.02}{0.06} &  \qnts{1.72}{0.86}{2.90} & \qnts{2.92}{1.23}{5.91} & \qnts{0.04}{0.02}{0.08} & \qnts{9.17}{1.87}{26.00} & \qnts{1.68}{1.04}{3.53} \\
\addlinespace[0.5em]
IXIC & \qnts{1.43}{0.77}{2.37} & \qnts{0.02}{0.01}{0.03} & \qnts{1.42}{0.72}{2.43} & \qnts{2.06}{1.02}{3.84} & \qnts{0.02}{0.01}{0.03} &  \qnts{6.15}{1.74}{15.39} & \qnts{1.65}{1.04}{3.07} \\
\addlinespace[0.5em]
KS11 & \qnts{1.55}{0.91}{2.42} & \qnts{0.03}{0.02}{0.06} & \qnts{1.69}{1.00}{2.57} & \qnts{2.78}{1.20}{4.86} & \qnts{0.06}{0.03}{0.09} & \qnts{9.48}{2.12}{22.46} & \qnts{1.69}{1.08}{3.54} \\
\addlinespace[0.5em]
MXX & \qnts{1.09}{0.43}{1.85} & \qnts{0.01}{0.01}{0.02} &  \qnts{0.99}{0.38}{1.86} & \qnts{2.04}{0.84}{3.99} & \qnts{0.02}{0.01}{0.04} & \qnts{7.31}{1.40}{19.96} & \qnts{1.48}{1.04}{2.81} \\
\addlinespace[0.5em]
N225 & \qnts{1.05}{0.86}{1.53} &  \qnts{0.03}{0.02}{0.04} & \qnts{1.23}{0.78}{1.53} &  \qnts{1.47}{0.71}{2.37} & \qnts{0.05}{0.03}{0.07} &  \qnts{3.67}{0.98}{8.77} & \qnts{1.95}{1.27}{3.30} \\
\addlinespace[0.5em]
RUT & \qnts{1.41}{0.70}{2.35} & \qnts{0.02}{0.01}{0.03} &  \qnts{1.33}{0.56}{2.31} &  \qnts{2.70}{1.33}{4.99} & \qnts{0.03}{0.02}{0.05} & \qnts{10.43}{3.19}{25.73} & \qnts{1.39}{1.03}{2.31} \\
\addlinespace[0.5em]
SPX &  \qnts{1.22}{0.67}{2.00} & \qnts{0.02}{0.01}{0.03} & \qnts{1.19}{0.59}{2.03} & \qnts{2.07}{0.93}{3.52} & \qnts{0.02}{0.01}{0.04} & \qnts{8.91}{2.84}{19.89} & \qnts{1.37}{1.02}{2.09}\\
\addlinespace[0.5em]
SSMI & \qnts{1.75}{0.95}{2.79} & \qnts{0.03}{0.01}{0.05} & \qnts{1.86}{0.96}{3.06} & \qnts{2.86}{1.32}{5.25} & \qnts{0.04}{0.02}{0.07} & \qnts{9.18}{2.60}{24.71} & \qnts{1.60}{1.06}{2.89}\\
\bottomrule
\end{tabular}
\label{tab:app_oxford_parameters}
\end{table}

\textbf{Comparison to the ARMA-GARCH model:} In Tables \ref{app_tab:ll} and \ref{app_tab:var}, we report further results with more datasets.

\begin{table}
\centering
\caption{Comparison of the marginal log-likelihood values of OU-NG, OU-NGGP and ARMA-GARCH models on data from the Realized-library.}
\scriptsize
\begin{tabular}{@{}cccc@{}}
\toprule
Data & NG & NGGP & ARMA-GARCH \\
\midrule
  AEX   & -1466.021 & -1465.811  & -1459.768 \\
  AORD  & -1495.441  & -1494.830 & -1492.148 \\
  DJI   & -1442.606 & -1438.013 & -1424.026 \\
  FTSE  & -1448.833 & -1445.721 & -1437.881 \\
 GSPTSE & -1455.563 & -1454.125 & -1445.307 \\
  HSI   & -1509.007  & -1506.942 & -1495.847 \\
  IBEX  & -1464.315  & -1462.805 & -1455.762 \\
  IXIC  & -1470.094  & -1466.618 & -1454.834 \\
  KS11  & -1426.109 & -1419.883  & -1407.862 \\
  KSE   & -1486.406 & -1485.798 & -1478.232 \\
  MXX   & -1475.771 & -1474.586  & -1469.124 \\
  N225  & -1423.252 & -1416.556 & -1404.764 \\
  RUT   & -1453.471 & -1449.010 & -1438.184 \\
  SPX   & -1431.223 & -1424.749 & -1407.002 \\
  SSMI  & -1362.787 & -1354.640 & -1342.874 \\
\bottomrule
\end{tabular}
\label{app_tab:ll}
\end{table}

\begin{table}
\centering
\caption{The results of VaR test on predicted sequence. We trained the models using 1100 time-steps of the data and predicted remaining 900 time-steps to compute VaR with $\alpha=0.95$ and $\alpha=0.99$ for the predictions. Then we computed the fraction of actual test data less than negative of the computed VaR values. Values closer to $\alpha$ mean better VaR prediction.
}
\scriptsize
\setlength{\tabcolsep}{4pt}
\begin{tabular}{@{}ccccccc@{}}
\toprule
 & NG & NGGP & \pbox{4cm}{ARMA\\-GARCH} & NG & NGGP & \pbox{3cm}{ARMA\\-GARCH} \\
\cmidrule(lr){2-4}\cmidrule(lr){5-7}
Data & \multicolumn{3}{c}{$\alpha=0.95$} & \multicolumn{3}{c}{$\alpha=0.99$} \\
\midrule
  AEX   & 0.962 & 0.958 &   0.969   & 0.993 & 0.993 & 0.992 \\
  AORD  & 0.962 & 0.960  &    0.970  & 0.996 & 0.993 & 0.991 \\
  DJI   & 0.956 & 0.959 &   0.968  & 0.996 & 0.994 & 0.989 \\
  FTSE  & 0.956 & 0.953 &    0.950  & 0.994 & 0.993 & 0.983 \\
 GSPTSE & 0.967 & 0.968 &   0.978  & 0.997 & 1.000 & 0.996\\
  HSI   & 0.978 & 0.972 &   0.952  & 0.996 & 0.996 & 0.988  \\
  IBEX  &  0.960  & 0.959 &   0.981 & 0.996 & 0.996 & 0.997  \\
  IXIC  & 0.964 & 0.961 &   0.953  & 0.994 & 0.993 & 0.986 \\
  KS11  &  0.970  & 0.969 &   0.962  & 0.996 & 0.997 & 0.994 \\
  KSE   &  0.960  & 0.966 &   0.932  & 0.996 & 0.996 & 0.980 \\
  MXX   & 0.962 & 0.964 &   0.966  & 0.993 & 0.990 & 0.987 \\
  N225  & 0.963 & 0.963 &   0.954  & 0.994 & 0.991 & 0.990 \\
  RUT   &  0.960  & 0.959 &   0.952  & 0.996 & 0.993 & 0.994 \\
  SPX   &  0.970  & 0.969 &   0.962  & 0.994 & 0.997 & 0.993 \\
  SSMI  & 0.963 & 0.967 &    0.960  & 0.994 & 0.992 & 0.993  \\
\bottomrule
\end{tabular}
\label{app_tab:var}
\end{table}

\end{document}